\newcommand{\Apar}{{\cal S}}
\newcommand{\regFlow}[1]{\ensuremath{\mathcal{P}^{\star}(#1)}}
\newcommand{\eqpred}[2]{\ensuremath{ f(#1) = #2}}
\newcommand{\neqpred}[2]{\ensuremath{ f(#1) \not = #2}}
\newcommand{\leqpred}[2]{\ensuremath{ f(#1) < #2}}
\newcommand{\nleqpred}[2]{\ensuremath{ f(#1) \not < #2}}
\newcommand{\orra}[1]{\overrightarrow{#1}}
\newcommand{\Nat}{\mathbb{N}}
\begin{document}
\title{Schematic Refutations of Formula Schemata}
\author{David Cerna\inst{1} \and
Alexander Leitsch\inst{2} \and
Anela Lolic\inst{3}}
\authorrunning{D. Cerna et al.}
% First names are abbreviated in the running head.
% If there are more than two authors, 'et al.' is used.
%
\institute{Institute for Formal Methods and Verification, JKU, Linz, Austria\\
Research Institute for Symbolic Computation, JKU, Hagenberg, Austria\\
\email{david.cerna@jku.at,david.cerna@risc.jku.at}
\and
Institute of Logic and Computation, TU Wien, Vienna, Austria\\
\email{leitsch@logic.at} \and
Institute of Logic and Computation, TU Wien, Vienna, Austria\\
\email{anela@logic.at}}
\maketitle              % typeset the header of the contribution
\begin{abstract}
% The abstract should briefly summarize the contents of the paper in
% 150--250 words.
Proof schemata are infinite sequences of proofs which are defined inductively. In this paper we present a general framework for schemata of terms, formulas and unifiers and define a resolution calculus for schemata of quantifier-free formulas. The new calculus generalizes and improves former approaches to schematic deduction. As an application of the method we present a schematic refutation formalizing a proof of a weak form of the pigeon hole principle.

\keywords{schema  \and resolution \and induction.}
\end{abstract}
\section{Introduction}

Recursive definitions of functions play a central role in computer science, particularly in functional programming. While recursive definitions of proofs are less common they are of increasing importance in automated proof analysis. 
Proof schemata, i.e. recursively defined infinite sequences of proofs, serve as an alternative formulation of induction. Prior to the formalization of the concept, an analysis of F\"{u}rstenberg's proof of the infinitude of primes~\cite{MBaaz2008a} suggested the need for a formalism quite close to the type of proof schemata we will discuss in this paper. The underlying method for this analysis was CERES~\cite{MBaaz2000} (cut-elimination by resolution) which, unlike reductive cut-elimination, can be applied to recursively defined proofs by extracting a schematic unsatisfiable formula and constructing a recursively defined refutation. Moreover, Herbrand's theorem can be extended to an expressive fragment of proof schemata, that is those formalizing $k$-induction~\cite{CDunchev2014,ALeitsch2017}. Unfortunately, the construction of recursively defined refutations is a highly complex task. In previous work~\cite{ALeitsch2017} a superposition calculus for certain types of formulas was used for the construction of refutation schemata, but only works for a weak fragment of arithmetic and is hard to use interactively. 

The key to proof analysis using CERES in a first-order setting is not the particularities of the method itself, but the fact that it provides a bridge between automated deduction and proof theory. In the schematic setting, where the proofs are recursively defined, a bridge over the chasm has been provided ~\cite{CDunchev2014,ALeitsch2017}, but there has not been much development on the other side to reap the benefits of. The few existing results about automated deduction for recursively defined formulas barely provide the necessary expressive power to analyse significant mathematical argumentation. Applying the earlier constructions to a weak mathematical statement such as the {\em eventually constant schema} required much more work than the value of the provided insights~\cite{DCerna2016}. The resolution calculus we introduce in this work generalizes resolution and the first-order language in such a way that it provides an excellent environment for carrying out investigations into decidable fragments of schematic propositional formulas beyond those that are known. Furthermore, concerning the general unsatisfiability problem for schematic formulas, our formalism provides a perfect setting for interactive proof construction. 

Proof schema is not the first alternative formalization of induction with respect to Peano arithmetic~\cite{GTakeuti1975}. However, all other existing examples~\cite{JBrotherston2005,JBrotherston2010,RMcdowell1997} that provide calculi for induction together with a cut-elimination procedure do not allow the extraction of Herbrand sequents\footnote{Herbrand sequents allow the representation of the propositional content of first-order proofs.}~\cite{SHetzl2008,GTakeuti1975} and thus Herbrand's theorem cannot be realized. In contrast, in \cite{ALeitsch2017} finite representations of infinite sequences of Herbrand sequents are constructed, so-called {\em Herbrand systems}. Of course, such objects do not describe finite sets of ground instances, though instantiating the free parameters of Herbrand systems does result in sequents derivable from a finite set of ground instances. 

The formalism developed in this paper extends and improves the formal framework for refuting formula schemata in~\cite{CDunchev2014,ALeitsch2017} in several ways: 1. The new calculus can deal with arbitrary quantifier-free formula schemata (not only with clause schemata), 
2. we extend the schematic formalism to  multiple parameters (in~\cite{CDunchev2014} and in~\cite{ALeitsch2017} only schemata defined via one parameter were admitted); 3. we strongly extend the recursive proof specifications by allowing mutual recursion (formalizable by so-called call graphs). Note that  in~\cite{CDunchev2014} a complicated schematic clause definition was used, while the schematic refutations in~\cite{ALeitsch2017} were based on negation normal forms and on a complicated translation to the $n$-clause calculus. Moreover, the new method presented in this paper provides a simple, powerful and elegant formalism for interactive use. The expressivity of the method is illustrated by an application to a (weak) version of the pigeon hole principle. 

\section{A Motivational Example}
\label{motivation}
In~\cite{DCerna2016}, proof analysis of a mathematically simple statement, the {\em Eventually Constant Schema}, was performed using an early formalism developed for schematic proof analysis~\cite{CDunchev2014}. The Eventually Constant Schema states that any monotonically decreasing function with a finite range is eventually constant. The property of being eventually constant may be formally written as follows: 
\begin{equation}
\label{endconstate}
\exists x\forall y ( x\leq y \rightarrow f(x) = f(y)),
\end{equation}
where $f$ is an uninterpreted function symbol with the following property $$ \forall x\left( \bigvee_{i=0}^{n} f(x)= i\right) $$
for some $n\in\mathbb{N}$. The method defined in~\cite{CDunchev2014} requires a strong quantifier-free end sequent, thus implying the proof must be skolemized. The skolemized formulation of the eventually constant property is $\exists x( x\leq g(x) \rightarrow f(x) = f(g(x)))$ where $g$ is the introduced Skolem function. The proof presented in~\cite{DCerna2016} used a sequence of $\Sigma_2$-cuts $$\exists x \forall y (((x \leq y) \Rightarrow n + 1 = f(y)) \vee f(y) < n + 1).$$ Also, the Skolem function was left uninterpreted for the proof analysis. The resulting cut-structure, when extracted as an unsatisfiable clause set, has a fairly simple refutation. Thus, with the aid of automated theorem provers, a schema of refutations was constructed. 

The use of an uninterpreted Skolem function greatly simplified the construction presented in~\cite{DCerna2016}. In this paper we will interpret the function $g$ as the successor function. Note that using the axioms presented in~\cite{DCerna2016} the following statement 
$$ \forall x\left( \bigvee_{i=0}^{n} f(x)= i\right)  \vdash \exists x(  f(x) = f(\mathit{suc}(x)))$$ is not provable. Note that we drop the implication of Equation~\ref{endconstate} and the antecedent of the implication given that $x\leq \mathit{suc}(x)$ is a trivial property of the successor function.  However, using an alternative set of axioms  and a weaker cut we can prove this statement.  The additional axioms are as follows:
 $$f(x)=i \vdash  f(x) < s(k) \ , \ \mbox{for $0\leq i \leq k < n$ }$$
 $$f(\mathit{suc}(x))=i \vdash  f(x) < s(k) \ ,\ \mbox{for $0\leq i \leq k <n$ }$$
 $$ f(x)=k , f(\mathit{suc}(x)) =k \vdash f(x)  = f(\mathit{suc}(x)) \ ,\ \mbox{for $0\leq k \leq n$ } $$
$$f(0) < 0 \vdash $$
$$f(\mathit{suc}(x)) < s(k)  \vdash f(\mathit{suc}(x))=k , f(x) < k \ ,\ \mbox{for $0\leq k \leq n$ } $$
$$f(x) < s(k)  \vdash  f(x)=k , f(x) < k \ ,\ \mbox{for $0\leq k < n$ } $$
For the most part these axioms are harmless, however $f(\mathit{suc}(x)) < s(k)  \vdash f(\mathit{suc}(x))=k , f(x) < k$ implies that $f$ has some  monotonicity properties similar to the eventually constant schema. For all values $y$ in the range larger than $x$, either $f(y) = n$ or $f(y) < n$. Our axioms imply that once $f(y) <n$ all values $z>y$ are mapped to values of the range less than $n$. Being that our proof enforces this property using the following $\Delta_2$-cut formula we are guaranteed to reach a value in the domain above which $f$ is constant. The cut formula is as follows: 
$$ \exists x( f(x) = k\wedge k = f(\mathit{suc}(x))) \vee \forall x (f(x)<k) \ ,\ \mbox{for } 0\leq k \leq n.$$
One additional point which the reader might notice is that we use what seems to be the less than relation and equality relation of the natural numbers, but do not concern ourselves with substitutivity of equality nor transitivity of $<$. While including these properties will change the formal proof we present below, the argument will still require a free numeric parameter denoting the size of the range of $f$ and the number of positions we require to map to the same value.

We will refer to this version of the eventually constant schema as the {\em successor eventually constant schema}. While this results in a new formulation of the eventually constant schema under an interpretation of the Skolem function as the successor function, we have not taken complete advantage of this new interpretation yet in that this re-formulation is actually of lower complexity than the eventually constant schema. For example in Figure~\ref{simpleproof} we provide the output of Peltier's superposition induction prover\cite{VAravantinos2013} when ran on the clausified form of thee cut structure of the successor eventually constant schema. The existence of this derivation implies that the proof analysis method of~\cite{ALeitsch2017} may be applied to the successor eventually constant schema. Unfortunately, the prover does not find the invariant discovered in~\cite{DCerna2016}, but this may have more to do with the choice of axioms rather than the statement being beyond the capabilities of the prover. 

\begin{figure}

\begin{verbatim}
============================== PROOF =================================

% Proof 1 at 0.017 (+ 0.000) seconds.
% Given clauses 73.

% number of calls to fixpoint : 3 
 S_init  : 
 (51:  [ EQ(v0,f(h(v1))) | LE(f(v1),v0) if  n = s(v0) ].
50:  [ EQ(v0,f(v1)) | LE(f(v1),v0) if  n = s(v0) ].
33:  [ PHI(v0,v1) if  n = s(v0) ].
) 
 S_loop  : 
 (82:  [ EQ(v0,f(h(v1))) | LE(f(v1),v0) if  n = s(s(v0)) ].
80:  [ EQ(v0,f(v1)) | LE(f(v1),v0) if  n = s(s(v0)) ].
53:  [ PHI(v0,v1) if  n = s(s(v0)) ].
) 
 The empty clauses  : 
 (45:  [  n = 0 ].
81:  [  n = s(0) ].
112:  [  n = s(s(0)) ].
) max_rank 2 

============================== end of proof ==========================
\end{verbatim}

\caption{output of Peltier \textit{et al.}'s Prover9 extension~\cite{DBLP:phd/hal/Kersani14}.}
\label{simpleproof}
\end{figure}
We can strengthen the successor eventually constant schema beyond the capabilities of~\cite{DBLP:phd/hal/Kersani14} easily by adding a second parameter as follows:
$$ \forall x\left( \bigvee_{i=0}^{n} f(x)= i\right)  \vdash \exists x( \bigwedge_{i=0}^{m} f(x) = f(\mathit{suc}^i(x))).$$ 
We refer to this problem as the $m$-successor eventually constant schema. Applying this transformation to the the eventually constant schema of~\cite{DCerna2016} is not so trivial being that the axioms used to construct the proof do not easy generalize. However, for the successor eventually constant schema the generalization is trivial and is provided below: 
$$f(\mathit{suc}^{r}(x))=i \vdash  f(x) < s(k) \ ,\ \mbox{for $0\leq i \leq k \leq n$ and $0\leq r\leq m$}$$
$$ f(x)=k , f(\mathit{suc}^{r}(x)) =k \vdash f(x)  = f(\mathit{suc}^{r}(x)) \ ,\ \mbox{for $0\leq k \leq n$ and $0<r\leq m$} $$
$$f(0) < 0 \vdash $$
$$f(\mathit{suc}^{r}(x)) < s(k)  \vdash f(\mathit{suc}^{r}(x))=k , f(x) < k \ ,\ \mbox{for $0\leq k \leq n$ and $0\leq r\leq m$},$$
where $\mathit{suc}^{0}(x) = x$. Furthermore, the cut formula can be trivial extended as follows: 
$$ \exists x( \bigwedge_{i=0}^{m} f(\mathit{suc}^{i}(x)) = k) \vee \forall x (f(x)<k) \ ,\ \mbox{for } 0\leq k \leq n.$$
Given that the $m$-successor eventually constant schema contains two parameters it is beyond the capabilities of~\cite{DBLP:phd/hal/Kersani14}. Interesting enough, the prover can find invariants for each value of $m$ in terms of $n$, though, these invariant get impressively large quite quickly. The cut structure of the $m$-successor eventually constant schema by be extracted as an inductive definition of an unsatisfiable negation normal form formula. We provide this definition by rewrite rules below: 
\begin{align*}
O(n,m) \equiv &  D(n,m)  \wedge\  P(n,m)\\
C(y,n,0)  \equiv&  \neqpred{S(0,y)}{n}\\ 
C(y,n,s(m))  \equiv&  \neqpred{S(s(m),y)}{n} \ \vee \ C(y,n,m)\\
T(n,0) \equiv&  \forall x(\nleqpred{S(0,x)}{s(n)}  \ \vee \ \eqpred{S(0,x)}{n} \ \vee \ \leqpred{x}{n})\\
T(n,s(m)) \equiv&   \forall x (\nleqpred{S(s(m),x)}{s(n)}  \ \vee \ \eqpred{S(s(m),x)}{n} \vee \leqpred{x}{n}) \\ & \wedge \ T(n,m)\\
P(0,m) \equiv&   \forall x(C(x,0,m)) \ \wedge\ \nleqpred{a}{0} \\
P(s(n),m) \equiv&  ( \forall x(C(x,s(n),m))  \ \wedge  (T(n,m)) \ \wedge\ P(n,m) \\
D(n,0) \equiv&  \forall x(\eqpred{S(0,x)}{n} \ \vee \  \leqpred{x}{n}) \\
D(n,s(m)) \equiv&  \forall x( \eqpred{S(s(m),x)}{n} \ \vee \  \leqpred{x}{n} ) \wedge \  D(n,m)\\
S(0,y)\equiv&  y\\
S(s(n),y) \equiv& \mathit{suc}(S(s(n),y))\\
\end{align*}
where $a$ is some arbitrary constant.  We will show how our new formalism can provide a finite representation of the refutations of inductive definition even though the refutation requires mutual recursion as well as multiple parameters (six in total) to state finitely.

\section{Schematic Language}\label{sec.schematicproofs}

We work in a two-sorted version of classical first-order logic. The first sort we consider is $\omega$, in which every ground term normalizes to a {\em numeral}, i.e. a term inductively constructable by  $N \Rightarrow s(N) \ | \ 0$, such that $s(N) \not= 0$ and $s(N) = s(N') \to N = N'$. Natural numbers ($\mathbb{N}$)  will be denoted by lower-case Greek letters ($\alpha$, $\beta$, $\gamma$, etc); The numeral $s^\alpha 0$, $\alpha \in \mathbb{N}$, will be written as $\bar{\alpha}$.  The set of numerals is denoted by $\Num$. Furthermore, the $\omega$ sort includes a countable set of variables $\mathcal{N}$ called parameters. We denote parameters by $n,m,n_1,n_2,\ldots,m_1,m_2,\ldots$. The set of parameters occurring in an expression $E$ is denoted by $\Ncal(E)$.

The second sort, the $\iota$-sort (individuals), is a standard first-order term language extended by {\em defined function symbols}. Defined function symbols, i.e. primitive recursively defined functions,  are differentiated from standard function symbols by adding a $\hat{\cdot}$.

Variables of the $\iota$-sort are what we refer to as {\em global} variables.  Global variables are indexed by terms of the $\omega$-sort and are of type $\omega^{\alpha} \to \iota$ for $\alpha \geq 0$.  The set of global variables will be denoted by $V^G$. Given a sequence of $\omega$ terms $\overrightarrow{t}$ of length $\alpha$ and a global variable X of type $\omega^{\alpha} \to \iota$, we will refer to the $\iota$ terms $X(\overrightarrow{t})$ as {\em V-terms over  $X$}. The set of {\em individual variables} $V^{\iota}$ consists of objects of the form $X(\overrightarrow{t})$ where $\overrightarrow{t}$ is a tuple of numerals. Additionally we introduce $V^F$, the set of formula variables of type $o$ which aid the construction {\em defined predicate symbols}.

For terms we consider the set of function symbols of type $\tau$, $\Fcal^\tau$.  The set of {\em defined function symbols} of type $\tau$ is denoted 
by $\Fcalhat^\tau$.  The types $\tau$ are either of the form $\omega^\alpha \to \omega$ (for $\alpha \in \omega$)  which we call {\em numeric types} 
or of type $\iota^\alpha \to \iota$ for $\alpha \geq 0$ which we call {\em individual types}. We distinguish $\Fcalhat_\omega$ - 
the set of all defined function symbols of numeric type and $\Fcalhat_\iota$ - the set of all defined function symbols of individual type. We define 
$\Fcal^\omega= \{\bar{0}\}$, $\Fcal^{\omega \to \omega}= \{s\}$, and $ \Fcal^\tau = \emptyset$ for  $\alpha>1$. We leave the symbol sets of complex $
\omega$ types empty so that normalization of a ground term always results in a numeral. For the same reason the only constants of type $\omega$ are $0$ and $s$. Basic 
functions such as {\em projections} will be introduced as defined function symbols. For all other types $\tau$ the set $\Fcal^\tau$ is infinite. This 
extends to the  defined symbols of any type, i.e.  $\Fcalhat^{\tau}$  is infinite. For defined function symbols we make use of global variables: the types $\tau$ are $(\omega^{\gamma_1} \to  \iota) \times \ldots \times (\omega^{\gamma_\alpha} \to  \iota) \times \omega^\beta \to \iota$ for $\alpha \geq 0$ and $\gamma_i> 0$ for $i \in \{1, \ldots , \alpha\}$. The symbols in $\Fcalhat_\omega$ and $\Fcalhat_\iota$ are partially ordered by $\orderf$ where $\orderf$ is irreflexive, transitive and Noetherian.

We define a similar signature for predicate symbols of type $\tau$, where $\Pcal^\tau_i$ is the (infinite) set of predicate symbols of type $\tau$; 
the set of {\em defined predicate symbols} of type $\tau$ is denoted by $\Pcalhat^\tau $. For ordinary ($\alpha$-ary) predicate symbols the types are $\iota^\alpha \to o$, for $\alpha \geq 0$ as usual. Like for defined function symbols we make use of global and context variables for defined predicate symbols: the types $\tau$ are $(\omega^{\gamma_1} \to  \iota) \times \ldots \times (\omega^{\gamma_\alpha} \to  \iota) \times \omega^\beta \to o$ for $\alpha \geq 0$ and $\gamma_i> 0$ for $i \in \{1, \ldots , \alpha\}$. The symbols in $\Pcalhat$ are partially ordered by $\orderp$ where $\orderp$ is irreflexive, transitive and Noetherian.

For the term language we consider {\em $\omega$-terms} of type $\omega$ and {\em $\iota$-terms} of type $\iota$. Both term sets are defined via function symbols and defined function symbols.
\begin{definition}[$\omega$-terms $T^\omega$ ]\label{def.omegaterms}\mbox{ }
\begin{itemize}
\item[(1)] $\bar{0} \in T^\omega,\ \mathcal{N} \IN T^\omega$, and  if $t \in T^\omega$ then $s(t) \in T^\omega$,
\item[(2)] if $\fhat \in \Fcalhat_{\omega}^\tau$ for $\tau = \omega^\alpha \to \omega$  and $t_1,\ldots,t_\alpha \in T^\omega$ then $\fhat(t_1,\ldots,t_\alpha) \in T^\omega$.
\end{itemize}
The set $T^\omega_0$ denotes terms constructed using (1). Note that the set  of parameter-free terms in $T^\omega_0$ is $\Num$, the set of numerals.
\end{definition}
For every defined function symbol $\fhat \in \Fcalhat_\omega$ there exists a set of defining equations $D(\fhat)$ which expresses a primitive recursive definition of $\fhat$.
\begin{definition}[defining equations for numeric function symbols]\label{def.def-equ-f}
For every $\fhat \in \Fcalhat_\omega$, $\fhat\colon \omega^{\alpha+1} \to \omega$ we define a set $D(\fhat)$ consisting of two equations:
$$\fhat(n_1,\ldots,n_\alpha,\overline{0}) = t_B,  \quad \fhat(n_1,\ldots,n_\alpha,s(m)) = t_S\{k \ass \fhat(n_1,\ldots,n_\alpha,m)\},$$
where for minimal $\fhat$ $t_B,t_S \in T^\omega_0$; for nonminimal $\fhat$ 
$t_B,t_S \in T^\omega$ where $t_B,t_S$ may contain only defined function symbols smaller than $\fhat$ in $\orderf$. Furthermore $\Ncal(t_B) \subseteq \{n_1,\ldots,n_\alpha\}$, and $\Ncal(t_S) \subseteq \{n_1,\ldots,n_\alpha\} \union \{m,k\}$.

We define $D(\Fcalhat_\omega) = \Union\{D(\fhat) \mid \fhat \in \Fcalhat_\omega\}$, which is the set of all defining equations in the numeric types.
\end{definition}
\begin{example}\label{ex.Tomega}
For $\widehat{p}\in \Fcalhat^{\omega \to \omega}$, $D(\widehat{p}) = \left\lbrace \widehat{p}(\bar{0}) = \bar{0},\
\widehat{p}(s(m)) = m\right\rbrace $,  $t_B = \bar{0}$,  $t_s = m$.\\[1ex]
Let $\fhat,\ghat \in \Fcalhat^\tau$ for $\tau = \omega \times \omega \to \omega$, $\fhat$ be minimal and $\fhat \orderf \ghat$. We define $D(\fhat)$ as 
$$\fhat(n,\bar{0}) = t_B,\quad \fhat(n,s(m)) = t_S\{k \ass \fhat(n,m)\}$$
for $t_B = n$ and $t_S = s(k)$. Then, obviously, $\fhat$ defines $+$. \\[1ex]
Now we define $D(\ghat)$ as 
$$\ghat(n,\bar{0}) = t'_B,\quad \ghat(n,s(m)) = t'_S\{k \ass \ghat(n,m)\}$$
where $t'_B = \bar{0}$ and $t'_S = \fhat(n,k)$. Then $\ghat$ defines $*$.
\end{example}
It is easy to see that, given any parameter assignment, all terms in $T^\omega$ evaluate to numerals. 
\begin{definition}[parameter assignment]
A function $\sigma\colon \Ncal \to \Num$ is called a {\em parameter assigment}.
$\sigma$ is extended to terms homomorphically:
\begin{itemize}
\item $\sigma(\bar{\beta}) = \bar{\beta}$ for numerals $\bar{\beta}$.
\item $\sigma(\fhat(t_1,\ldots,t_\alpha)) = \fhat(\sigma(t_1),\ldots,\sigma(t_\alpha))$ 
for $\fhat\colon \omega^\alpha \to \omega$ and $t_1,\ldots,t_\alpha \in T^\omega$.
\end{itemize}
The set of all parameter assigments is denoted by $\Apar$.
\end{definition}
To simplify notation we use the following convention: if $\sigma \in \Apar$ and $\vec{n} = (n_1,\ldots,$ $n_\alpha)$ we write  $\sigma(\vec{n})$ for $(\sigma(n_1),\ldots,$ $\sigma(n_\alpha))$.
\begin{definition}[rewrite system $R(\Fcalhat_\omega)$]\label{def.RFcalhat}
Let $\fhat \in \Fcalhat_\omega$. Then $R(\fhat)$ is the set of the following rewrite rules obtained from $D(\fhat)$: 
$$\fhat(n_1,\ldots,n_\alpha,\hat{0}) \to t_B,  \quad \fhat(n_1,\ldots,n_\alpha,s(m)) \to t_S\{k \ass \fhat(n_1,\ldots,n_\alpha,m)\}$$
$R(\Fcalhat_\omega) = \Union\{R(\fhat) \mid \fhat \in \Fcalhat_\omega\}$. When a numeric term $s \in T^\omega$ rewrites to $t$ under $R(\Fcalhat_\omega)$ we write $s \toomega t$.
\end{definition}
\begin{proposition}\label{prop.evalTomega}
\mbox{ }.
\begin{itemize}
\item $R(\Fcalhat_\omega)$ is a canonical rewrite system.
\item  Let $t \in T^\omega$ and $\sigma \in \Apar$. Then the (unique) normal form of $\sigma(t)$ under $R(\Fcalhat_\omega)$  (denoted by $\sigma(t)\Eval_\omega$) is a numeral .
\end{itemize}
\end{proposition}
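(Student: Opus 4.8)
\noindent\emph{Proof (plan).}
Recall that a rewrite system is \emph{canonical} precisely when it is both terminating and confluent, so for the first bullet the plan is to establish these two properties separately, and then to read the second bullet off from canonicity together with an analysis of the possible normal forms. The routing of the work is that termination is the substantial part, confluence is essentially free from the shape of the rules, and the numeral claim follows by a short reducibility argument.

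For \textbf{termination} I would exhibit a well-founded reduction order orienting every rule of $R(\Fcalhat_\omega)$ from left to right. Concretely, take the lexicographic path order $\succ_{\mathrm{lpo}}$ induced by the precedence $\gg$ defined by $\fhat \gg \ghat$ iff $\ghat \orderf \fhat$ for $\fhat,\ghat \in \Fcalhat_\omega$, together with $\fhat \gg s \gg \bar 0$ for every defined symbol (so the constructors are minimal). Since $\orderf$ is irreflexive, transitive and Noetherian, the precedence $\gg$ is well-founded, hence $\succ_{\mathrm{lpo}}$ is a well-founded reduction order and it suffices to check $l \succ_{\mathrm{lpo}} r$ for each rule. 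For the base rule $\fhat(n_1,\ldots,n_\alpha,\bar 0) \to t_B$ this is immediate, because every function symbol of $t_B$ is $\orderf$-smaller than $\fhat$ (hence $\gg$-below it) and every parameter of $t_B$ lies in $\{n_1,\ldots,n_\alpha\}$, so it already occurs as an argument of the left-hand side. For the step rule $\fhat(n_1,\ldots,n_\alpha,s(m)) \to t_S\{k \ass \fhat(n_1,\ldots,n_\alpha,m)\}$ the pivotal comparison is $\fhat(n_1,\ldots,n_\alpha,s(m)) \succ_{\mathrm{lpo}} \fhat(n_1,\ldots,n_\alpha,m)$, which holds since the two terms share a head and differ only in the last argument, where $s(m) \succ_{\mathrm{lpo}} m$.

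I expect the step rule to be the main obstacle: its verification is not a single comparison but an induction on the structure of $t_S$, and one must track that the substitution $\{k \ass \fhat(n_1,\ldots,n_\alpha,m)\}$ only ever places the recursive call $\fhat(n_1,\ldots,n_\alpha,m)$ — which is strictly $\succ_{\mathrm{lpo}}$-below the left-hand side — underneath symbols that are themselves $\gg$-below $\fhat$. This is exactly what the side conditions $\Ncal(t_S) \subseteq \{n_1,\ldots,n_\alpha,m,k\}$ and ``$t_S$ contains only defined symbols smaller than $\fhat$'' are there to guarantee. For \textbf{confluence}, by contrast, the shape of the rules does the work: each rule is left-linear (the pattern variables are pairwise distinct) and each left-hand side carries a defined symbol at the root with no defined symbol strictly below it, so the only conceivable overlaps are root overlaps between two left-hand sides, and these never unify (distinct symbols have distinct heads, while the base and step rule of one $\fhat$ disagree in their last argument, $\bar 0$ versus $s(m)$). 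Thus $R(\Fcalhat_\omega)$ has no critical pairs, hence is locally confluent, and by Newman's Lemma termination upgrades this to confluence; together these give canonicity.

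Finally, for \textbf{normal forms}, fix $t \in T^\omega$ and $\sigma \in \Apar$. Since $\sigma$ sends every parameter to a numeral, $\sigma(t)$ is a ground $\omega$-term, and groundness is preserved by $\toomega$, so the unique normal form $\sigma(t)\Eval_\omega$ is ground. It remains to argue it contains no defined symbol: if it did, I would pick an occurrence of a defined symbol $\fhat$ at maximal depth, so that all its arguments are ground, in normal form, and free of defined symbols, whence they are parameter-free terms of $T^\omega_0$, i.e.\ numerals. In particular the last argument is $\bar 0$ or of the form $s(\cdot)$, so the base rule or the step rule of $R(\fhat)$ applies to this subterm, contradicting normality. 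Hence $\sigma(t)\Eval_\omega$ is a ground term built solely from $\bar 0$ and $s$, that is, a numeral.
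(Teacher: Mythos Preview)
Your argument is correct and considerably more explicit than the paper's. The paper dispatches the proposition in two sentences, observing that $\bar 0,s$ together with $R(\Fcalhat_\omega)$ implement the primitive recursive functions and that termination therefore follows by a double induction on $\orderf$ and on the value of the recursion argument; confluence and the shape of normal forms are simply declared ``well known''. Your route is the standard syntactic one from term rewriting: a lexicographic path order built on the precedence induced by $\orderf$ (with the constructors pushed to the bottom) for termination, orthogonality for local confluence, and Newman's Lemma to combine them. The two approaches coincide in spirit---your LPO with the recursive call compared in the last argument is precisely a repackaging of the paper's ``double induction on $\orderf$ and the recursion parameter''---but yours makes every step checkable, in particular the verification that the side conditions on $t_B,t_S$ in Definition~\ref{def.def-equ-f} are exactly what the LPO subterm and precedence cases require. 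Your normal-form argument (pick an innermost defined symbol in a putative ground normal form and fire a rule) is also what the paper leaves implicit. In short: same underlying idea, but you have actually written the proof.
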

\begin{proof}
Straightforward: termination and confluence of $R(\Fcalhat_\omega$) are well known. In particular $\bar{0},s$ and $R(\Fcalhat)$ define a language for computing the set of primitive recursive functions; in particular the recursions are well founded. A formal proof of termination requires double induction on $\orderf$ and the value of the recursion parameter.
\end{proof}
An important remark concerning global variables and normalization is as follows: Let $t_1,\cdots t_{\alpha},s_1,\cdots s_{\alpha} \in \omega$, $X,Y \in V^G$ of type $\omega^{\alpha}\to \iota$,  then we say $X(t_1,\cdots t_{\alpha}) = Y(s_1,\cdots s_{\alpha})$ iff $X=Y$ and for any parameter assignment $\sigma$ we have $\sigma(t_i)\Eval_\omega = \sigma(s_i)\Eval_\omega$ for $1\leq i \leq \alpha$.
\begin{definition}[the $\iota$-terms $T^\iota$]\label{def.Tiota}
The set $T^\iota$ is defined inductively as follows:
\begin{itemize}
\item all constants of type $\iota$ are in $T^\iota$,
\item for all $X:\omega^{\alpha}\to \iota \in V^{\iota}$, for $\alpha \geq 0$ and $t_1,\cdots, t_{\alpha} \in T^\omega$, $X(t_1,\cdots, t_{\alpha}) \in T^\iota$, i.e.  $X(t_1,\cdots, t_{\alpha})$ is a $V$-term over $X$. The set of all $V$-terms is denoted by $T^\iota_V$.
\item if $f \in \Fcal$, $f\colon \iota^\alpha  \to \iota$, $s_1,\ldots,s_\alpha \in T^\iota$ then $f(s_1,\ldots,s_\alpha) \in T^\iota$, 
\item Let $\fhat \in \Fcalhat$, $\fhat\colon \tau(\gamma(1),\ldots,\gamma(\alpha))\times \omega^{\beta+1} \to \iota$, where $\tau(\gamma(1),\ldots,\gamma(\alpha)) = (\omega^{\gamma(1)} \to \iota)\times \ldots \times (\omega^{\gamma(\alpha)} \to \iota)$. If $X_i \in V^G$ and $X_i\colon \omega^{\gamma(i)} \to \iota$, and $t_1,\ldots,t_{\beta+1} \in T^\omega$ then 
$\fhat(X_1,\ldots,X_{\alpha},t_1, \ldots,t_{\beta+1}) \in T^\iota$.
\end{itemize}
The set of all terms in $T^\iota$ which contain no defined symbols and no parameters is denoted by $T^\iota_0$. $T^\iota_0$ is a set of ``ordinary'' first-order terms.

In the following definitions we will abbreviate sequences of $\omega$-terms of length $\beta$ by $\overrightarrow{t}_{\beta}$ and sequences of instantiated global variablesf $X_1(\overrightarrow{t_1}_{\beta_1}), \cdots X_{\alpha}(\overrightarrow{t_{\alpha}}_{\beta_{\alpha}})$ as $\overrightarrow{X(\overrightarrow{t}_{\beta_i})}^{\alpha}_{i}$. When dealing with uninstantiated global variables we just write $\overrightarrow{X}_{\alpha}$. 
\end{definition}
\begin{definition}[defining equations for $\iota$-symbols]\label{def-eq-iota}
Let $\fhat \in \Fcalhat_{\iota}^\tau$ for $\tau = \tau(\gamma(1),\ldots,\gamma(\alpha)) \times \omega^{\beta+1} \to \iota$  where $\alpha\geq 0$ and $1\leq i\leq \alpha$, $\gamma_i >0$. The defining equations $D(\fhat)$ are  defined below.
\begin{eqnarray*}
\fhat(\overrightarrow{X}_{\alpha},\overrightarrow{n}_\beta,\bar{0}) &=& t_B,\\
\fhat(\overrightarrow{X}_{\alpha},\overrightarrow{n}_\beta,s(m)) &=& t_S\{Y \ass \fhat(\overrightarrow{X}_{\alpha},\overrightarrow{n}_\beta,m)\},
\end{eqnarray*}
where  $Y \colon \iota$. For minimal $\fhat$ $t_B$ is a term of type $\iota$ such that V-terms are over the variables $\overrightarrow{X}_{\alpha},Y$, $\Ncal(t_B) \IN \{n_1,\ldots,n_\beta$\} and $t_B$ contains no defined symbols from $\Fcalhat^\tau$ for nonnumeric types $\tau$. For nonminimal $\fhat$, $t_B$ may contain defined symbols $\ghat$ of type $\tau(\gamma'(1),\ldots,\gamma'(\alpha')) \times \omega^{\beta'+1} \to \iota$  with $\ghat \orderf \fhat$.\\[1ex]
$t_S$ is a term of $T^\iota$ such that  $T^\iota_V(t_B)$ only contains V-terms over the variables $\overrightarrow{X}_{\alpha},Z$ and $\Ncal(t_B) \IN \{n_1,\ldots,n_\beta\} \union \{m\}$. For all defined symbols $\ghat$  of type $\tau(\gamma'(1),\ldots,\gamma'(\alpha'_1))\times \omega^{\beta'+1} \to \iota$  occurring in $t_S$ we must have $\ghat \orderf \fhat$.\\[1ex]
Like for the numeric terms we define $D(\Fcalhat_\iota) = \Union \{D(\fhat) \mid \fhat \in \Fcalhat_\iota\}$.
\end{definition}

\begin{example}\label{ex.Tiota}
Let $g \in \Fcal^{\iota\to (\iota\to \iota)}$ and $\fhat \in \Fcalhat^{(\omega\to\iota) \times \omega \to \iota}$. We define $D(\fhat)$  as 
$$ \fhat(X,0)  = X(0),\ \ \fhat(X,m+1) = g(X(m+1),\fhat(X,m)).$$
Here, $t_B = X(0), t_S = g(X(m+1),Y)$.
\end{example}
While numeric terms evaluate to numerals under parameter assignments, terms in $T^\iota$ evaluate to terms in $T^\iota_0$, i.e. to ordinary first-order terms. Like for the terms in $T^\omega$ the evaluation is defined via a rewrite system. 

\begin{definition}[rewrite system $R(\Fcalhat_\iota)$]\label{def.rewrite-iota}
Let $\fhat \in \Fcalhat_\iota$. Then $R(\fhat)$ is the set of the following rewrite rules obtained from $D(\fhat)$: 
\begin{eqnarray*}
\fhat(\overrightarrow{X}_{\alpha_1},\overrightarrow{n}_\beta,\bar{0}) &\to& t_B,\\
\fhat(\overrightarrow{X}_{\alpha_1},\overrightarrow{n}_\beta,s(m)) &\to& t_S\{Y \ass \fhat(\overrightarrow{X}_{\alpha_1},\overrightarrow{n}_\beta,m)\}
\end{eqnarray*}
$R(\Fcalhat_\iota) = \Union\{R(\fhat) \mid \fhat \in \Fcalhat_\iota\}$. \\[1ex]
If a term $s$ rewrites to $t$ under $R(\Fcalhat_\iota)$ we write $s \toiota t$.
\end{definition}
\begin{proposition}
$R(\Fcalhat_\iota)$ is a canonical rewrite system.
\end{proposition}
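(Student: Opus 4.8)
The plan is to prove the two properties whose conjunction is canonicity, namely termination (strong normalization) and confluence. Because the defining equations of Definition~\ref{def-eq-iota} follow the same primitive recursive pattern as the numeric ones, the argument parallels the proof of Proposition~\ref{prop.evalTomega}; the only genuinely new feature is that the recursion is now stratified by the Noetherian precedence $\orderf$ on $\Fcalhat_\iota$ rather than by a single recursion parameter, so well-foundedness of $\orderf$ will do the work that well-foundedness of $\Nat$ did there.

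For termination I would argue by a double (lexicographic) induction, primary on the $\orderf$-rank of the defined symbol $\fhat$ being unfolded and secondary on its last (numeric) argument, exactly as in the numeric case; the formal device that packages this is a lexicographic path order induced by the precedence that extends $\orderf$ and places all constructors ($0,s$), ordinary function symbols and global variables strictly below every defined symbol, giving each $\fhat$ lexicographic status on its last argument. One then checks that both rule schemes strictly decrease. In the base rule $\fhat(\overrightarrow{X}_{\alpha},\overrightarrow{n}_\beta,\bar 0) \to t_B$ the symbol $\fhat$ disappears and every defined symbol left in $t_B$ is $\orderf$-smaller than $\fhat$ (none for minimal $\fhat$), so the left-hand side dominates. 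In the step rule $\fhat(\overrightarrow{X}_{\alpha},\overrightarrow{n}_\beta,s(m)) \to t_S\{Y \ass \fhat(\overrightarrow{X}_{\alpha},\overrightarrow{n}_\beta,m)\}$ the unique surviving occurrence of $\fhat$ is the recursive call whose last argument $m$ is a proper subterm of $s(m)$, so the lexicographic status makes it smaller, while every other defined symbol introduced by $t_S$ is $\orderf$-below $\fhat$; hence the left-hand side again dominates.

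For confluence I would invoke orthogonality rather than Newman's lemma. The rules are left-linear, since in each left-hand side the global variables $\overrightarrow{X}_{\alpha}$, the parameters $\overrightarrow{n}_\beta$ and the variable $m$ each occur exactly once. They are non-overlapping: rules for distinct defined symbols have distinct roots; the base and step rule of one symbol have the incompatible last arguments $\bar 0$ and $s(m)$ and so do not unify; and no left-hand side unifies with a proper non-variable subterm of another, the only such subterms being the constructor patterns $\bar 0$ and $s(m)$, which carry constructor (not defined) roots and are therefore never rewritten. With no critical pairs the system is orthogonal, hence confluent; being also terminating it is canonical.

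The main obstacle is the termination step, precisely because the ordering must decrease for two different reasons simultaneously --- a strict drop in the $\orderf$-precedence for the auxiliary symbols supplied by $t_B$ and $t_S$, and a purely structural decrease of the recursion parameter for the single recursive call of $\fhat$ --- and because the global variables $X_i$ occur bare as arguments of $\fhat$ on the left yet applied as V-terms $X_i(\overrightarrow{t})$ on the right; the lexicographic-status path order reconciles both, but verifying the argument comparisons in its recursive case is the delicate point. Finally, to reduce a term whose recursion argument is an arbitrary numeric term one works in $R(\Fcalhat_\omega) \union R(\Fcalhat_\iota)$; this union is again orthogonal, since the numeric rules can only fire within the variable positions of the $\iota$-rules, and terminating, hence canonical, and by Proposition~\ref{prop.evalTomega} the numeric arguments first normalize to numerals, after which the $\iota$-recursion proceeds structurally.
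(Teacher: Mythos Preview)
Your proof is correct and, for termination, follows essentially the paper's approach: the paper's one-line proof simply refers back to the argument for $R(\Fcalhat_\omega)$, namely double induction on $\orderf$ and the value of the recursion parameter, and your lexicographic path order is precisely the standard way of packaging that double induction into a reduction order.

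For confluence you take a slightly different route. The paper argues local confluence (implicitly: no nontrivial critical pairs) and then, together with termination, obtains confluence via Newman's lemma. You instead observe that the system is orthogonal (left-linear and non-overlapping) and conclude confluence directly. Both arguments rest on the same underlying observation --- the left-hand sides $\fhat(\ldots,\bar 0)$ and $\fhat(\ldots,s(m))$ cannot overlap --- but your route is mildly more economical, since orthogonality gives confluence without first establishing termination. Your additional remark about the combined system $R(\Fcalhat_\omega)\cup R(\Fcalhat_\iota)$ goes beyond what the paper proves here; the paper handles that interaction separately in the evaluation operator $\Eval_\iota$ rather than as a single canonical rewrite system.
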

\begin{proof}
That $R(\Fcalhat_\iota)$ is strongly normalizing and locally confluent can be shown in the same way as for $R(\Fcalhat_\omega)$.
\end{proof}
To evaluate a term $t \in T^\iota$ under $\sigma \in \Apar$ to a numeral we have to combine $\toomega$ and $\toiota$.

\begin{definition}[evaluation of $T^\iota$]\label{def.evalTiota}
Let $\sigma \in \Apar$ and $t \in T^\iota$. We define $\sigma(t)\Eval_\iota$: 
\begin{itemize}
\item if $c$ is a constant of type $\iota$ then $\sigma(c)\Eval_\iota = c$.
\item If $X(\overrightarrow{t}) \in T^\iota_V$ then $\sigma(X(t))\Eval_\iota = X(\sigma(t)\Eval_\omega)$.
\item if $f \in \Fcal$, $f\colon \iota^\alpha \to \iota$, $\overrightarrow{s}_\alpha \in (T^\iota)^\alpha$ then 
$\sigma(f(s_1,\ldots,s_\alpha))\Eval_\iota  = f(\sigma(\overrightarrow{s}_\alpha)\Eval_\iota).$
\item if $\fhat \in \Fcalhat$, $\fhat\colon \tau(\gamma(1),\ldots,\gamma(\alpha)) \times \omega^{\beta+1} \to \iota$, s.t.  $\alpha\geq 0$, $1\leq i\leq \alpha$, and $\gamma_i >0$,  $\overrightarrow{t}_{\beta+1} \in T^\omega$ then 
$$\sigma(\fhat(\overrightarrow{X}_{\alpha},\overrightarrow{t}_{\beta+1}))\Eval_\iota = \fhat(\overrightarrow{X}_{\alpha}, \sigma(\overrightarrow{t}_{\beta+1})\Eval_\omega)\Eval_\iota.$$
\end{itemize}
\end{definition}
Under a parameter assignment every term in $T^\iota$ evaluates to a first-order term:
\begin{proposition}
Let $t \in T^\iota$ and $\sigma \in \Apar$ then $\sigma(t)\Eval_\iota \in T^\iota_0$.
\end{proposition}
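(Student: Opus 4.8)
The plan is to prove the statement by structural induction on the construction of $t \in T^\iota$ (Definition~\ref{def.Tiota}), mirroring the clausal definition of $\sigma(\cdot)\Eval_\iota$ in Definition~\ref{def.evalTiota}; throughout I use that $\sigma(\cdot)\Eval_\iota$ is well defined, i.e. that normal forms are unique, which is exactly the canonicity of $R(\Fcalhat_\iota)$ established above. The three easy cases are the constants, the $V$-terms and the ordinary function applications. If $t=c$ is a constant then $\sigma(c)\Eval_\iota = c$, which contains neither defined symbols nor parameters, so $c \in T^\iota_0$. If $t = X(\overrightarrow{t})$ is a $V$-term then $\sigma(X(\overrightarrow{t}))\Eval_\iota = X(\sigma(\overrightarrow{t})\Eval_\omega)$, and by Proposition~\ref{prop.evalTomega} each component $\sigma(t_i)\Eval_\omega$ is a numeral, so the result is an individual variable in $V^\iota \subseteq T^\iota_0$. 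If $t = f(s_1,\ldots,s_\alpha)$ with $f \in \Fcal$ ordinary, then $\sigma(t)\Eval_\iota = f(\sigma(s_1)\Eval_\iota,\ldots,\sigma(s_\alpha)\Eval_\iota)$; by the induction hypothesis each $\sigma(s_i)\Eval_\iota \in T^\iota_0$, and since $f$ is not a defined symbol and introduces no parameters the whole term lies in $T^\iota_0$.

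The remaining case $t = \fhat(\overrightarrow{X}_\alpha,\overrightarrow{t}_{\beta+1})$ with $\fhat \in \Fcalhat_\iota$ is the crux. Here $\sigma(t)\Eval_\iota = \fhat(\overrightarrow{X}_\alpha,\sigma(\overrightarrow{t}_{\beta+1})\Eval_\omega)\Eval_\iota$, and by Proposition~\ref{prop.evalTomega} the evaluated $\omega$-arguments form numerals. It thus suffices to prove the auxiliary claim: for every $\fhat \in \Fcalhat_\iota$, every numeral tuple $\overrightarrow{\nu}$ and every numeral $\mu$, one has $\fhat(\overrightarrow{X},\overrightarrow{\nu},\mu)\Eval_\iota \in T^\iota_0$. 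I would establish this by the same double induction that underlies Proposition~\ref{prop.evalTomega}: an outer well-founded induction on $\fhat$ with respect to $\orderf$ (legitimate since $\orderf$ is Noetherian) and, for fixed $\fhat$, an inner induction on the value of the recursion argument $\mu$.

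In the inner base case $\mu = \bar{0}$, the rule $\fhat(\overrightarrow{X},\overrightarrow{n}_\beta,\bar{0}) \to t_B$ of $R(\fhat)$ applies with each $n_i$ matched to the numeral $\nu_i$. By Definition~\ref{def-eq-iota} the only defined symbols in $t_B$ are symbols $\ghat$ with $\ghat \orderf \fhat$, and $\Ncal(t_B) \subseteq \{n_1,\ldots,n_\beta\}$; hence after the match every $\omega$-subterm is parameter-free and normalizes to a numeral by Proposition~\ref{prop.evalTomega}. Each maximal subterm headed by a defined symbol $\ghat$ then carries numeral $\omega$-arguments and so evaluates into $T^\iota_0$ by the outer induction hypothesis, while the surrounding structure consists only of ordinary function symbols, constants and $V$-terms over $\overrightarrow{X}$ with numeral arguments; the normal form therefore lies in $T^\iota_0$. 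The inner step $\mu = s(\mu')$ is analogous via $\fhat(\overrightarrow{X},\overrightarrow{n}_\beta,s(m)) \to t_S\{Y \ass \fhat(\overrightarrow{X},\overrightarrow{n}_\beta,m)\}$: the recursive occurrence becomes $\fhat(\overrightarrow{X},\overrightarrow{\nu},\mu')$ and evaluates into $T^\iota_0$ by the inner induction hypothesis (as $\mu'$ is smaller than $\mu$), and the remaining defined symbols of $t_S$ are again below $\fhat$ and handled by the outer hypothesis.

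The main obstacle is exactly this defined-symbol case. Termination of the rewriting is already secured by the canonicity of $R(\Fcalhat_\iota)$; what requires care is showing that the normal form retains no defined symbols and no parameters. This is where the syntactic restrictions of Definition~\ref{def-eq-iota} — that $t_B$ and $t_S$ mention only defined symbols strictly below $\fhat$ in $\orderf$ and, as parameters, only the recursion parameters $n_1,\ldots,n_\beta$ — feed the outer and inner induction hypotheses, while Proposition~\ref{prop.evalTomega} simultaneously discharges every $\omega$-subterm to a numeral, thereby eliminating all remaining parameters and numeric defined symbols.
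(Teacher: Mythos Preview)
Your proof is correct and follows essentially the same approach as the paper: both proceed by structural induction on $t$ following Definition~\ref{def.evalTiota}, handling constants, $V$-terms, and ordinary function applications directly, and treating the defined-symbol case by a double induction on $\orderf$ and on the value of the recursion parameter. The paper organizes this double induction by first treating $\orderf$-minimal $\fhat$ (as the outer base case) and then remarking that non-minimal $\fhat$ requires adding the induction on $\orderf$, whereas you state the unified double induction up front; these are the same argument, and your presentation is arguably a bit more explicit about how the syntactic restrictions in Definition~\ref{def-eq-iota} feed the two hypotheses.
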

\begin{proof} 
We proceed according to Definition~\ref{def.evalTiota}.
\begin{itemize}
\item If $c$ is a constant in $T^\iota$  then $c \in T^\iota_0$ and $\sigma(c)\Eval_\iota = c$.
\item If $t=X(s_1,\ldots,s_\alpha)$ then $\sigma((s_1,\ldots,s_\alpha))\Eval_\omega = (\bar{\gamma_1},\ldots,\bar{\gamma_\alpha})$ for $\gamma_i \in \Nat$. Therefore $\sigma(X(s_1,\ldots,s_\alpha))\Eval_\iota = X(\bar{\gamma_1},\ldots,\bar{\gamma_\alpha}) \in V^\iota \IN T^\iota_0$.
\item If $t = f(s_1,\ldots,s_\alpha)$ for $f \in \Fcal, f\colon \iota^\alpha \to \iota$ then 
$$\sigma (f(s_1,\ldots,s_\alpha))\Eval_\iota = f(\sigma(s_1)\Eval_\iota,\ldots,\sigma(s_\alpha)\Eval_\iota)$$
By induction we may assume that $s'_i\colon \sigma(s_i)\Eval_\iota \in T^\iota_0$ for $i \in \{1,\ldots,\alpha\}$  (the base cases are shown above). Thus $f(s'_1,\ldots,s'_\alpha) \in T^\iota_0$. 
\item Now consider $\fhat(\orra{X}_{\alpha},t_1,\ldots,t_\beta,t_{\beta+1})$ for $\fhat \in \Fcalhat$ and $\fhat$ is minimal in $<_{\Fcalhat}$.  we distinguish two cases 
	\begin{enumerate}
		\item $\sigma(t_{\beta+1})\Eval_\omega = \bar{0}$.
		Then, by definition, 
		$$\sigma(\fhat(\orra{X}_{\alpha},t_1,	
               \ldots,t_\beta,t_{\beta+1}))\Eval_\iota)) = \fhat(\orra{X}_{\alpha},\bar{\gamma}_1,\ldots,
               \bar{\gamma}_\beta,\bar{0})$$
          for $\gamma_i \in \Nat$. According to Definition~\ref{def.rewrite-iota} 
        $\fhat(\orra{X}_{\alpha}, 
        \bar{\gamma}_1,\ldots,\bar{\gamma}_\beta,\bar{0})$ rewrites to $t'_B$  
       where $t'_B$ is a term of type $\iota$ containing neither defined symbols nor  
       parameters; so $t'_B \in T^\iota_0$.
      \item $\sigma(t_{\beta+1})\Eval_\omega = \bar{p}$ for $p>0$. Then 
      $\fhat(\orra{X}_{\alpha},
         \bar{\gamma}_1,\ldots,\bar{\gamma}_\beta,\bar{p})$ rewrites to a term 
    $$t_S\{Y \ass \fhat(\orra{X}_{\alpha},
         \bar{\gamma}_1,\ldots,\bar{\gamma}_\beta,\bar{p-1})$$
    where $t_S \in T^\iota_0$. By induction on $p$ we infer that \\
   $\fhat(\orra{X}_{\alpha},
         \bar{\gamma}_1,\ldots,\bar{\gamma}_\beta,\bar{p-1})\Eval_\iota \in 
         T^\iota_0$ and so \\ 
         $\fhat(\orra{X}_{\alpha}, \bar{\gamma}_1,\ldots 
            \bar{\gamma}_\beta,\bar{p}) \in T^\iota_0$.
        \end{enumerate}
\item It $t= \fhat(\orra{X}_{\alpha},t_1,\ldots,t_\beta,t_{\beta+1})$ for $\fhat \in \Fcalhat$ and $\fhat$ is not minimal in $<_{\Fcalhat}$ we have to add induction on $<_{\Fcalhat}$ with the base cases shown above.
\end{itemize} 
\end{proof}
\begin{example}\label{ex.evalTiota}
\label{ex.Tiota}
Let $X \in V^G, X\colon \omega^2 \to \iota$, $Y\colon \iota$, $g \in \Fcal, g\colon \iota \times \iota \to \iota$ and $n,m$ parameters. Assume $\fhat$ is defined as 
\begin{eqnarray*}
\fhat(X,Y,n,\bar{0}) &=& Y,\\
\fhat(X,Y,n,s(m)) &=& g(X(n,m),\fhat(X,Y,n,m)).
\end{eqnarray*}
We evaluate $\fhat(X,Y,n,m)$ under $\sigma$, where $\sigma(n) = \bar{1}, \sigma(m) = \bar{2}$ and $\sigma(k) = \bar{0}$ for $k \not \in \{n,m\}$.
\[
\begin{array}{l}
\sigma(\fhat(X,Y,n,m))\Eval_\iota = \fhat(X,Y,\bar{1},\bar{2})\Eval_\iota = 
g(X(\bar{1},\bar{1}),\fhat(X,Y,\bar{1},\bar{1})\Eval_\iota) =\\
g(X(\bar{1},\bar{1}),g(X(\bar{1},\bar{0}),\fhat(X,Y,\bar{1},\bar{0})\Eval_\iota) = 
g(X(\bar{1},\bar{1}),g(X(\bar{1},\bar{0}),Y)).
\end{array}
\]
When we write $x_1$ for $X(\bar{1},\bar{1})$ and $x_2$ for $X(\bar{1},\bar{0})$ and $y$ for $Y$ we get the term in the common form $g(x_1,g(x_2,y))$.
\end{example}
Substitutions on term schemata need to be schematic as well, particularly when we are interested in unification. We develop some formal tools below to describe such schemata.
\begin{definition}\label{def.essdis}
Let $\orra{s}_1 = (s_1,\ldots,s_\alpha)$ for $s_1,\ldots,s_\alpha \in T^\omega_0$ and  $\orra{s}_2 = (s'_1,\ldots,s'_\beta)$ for $s'_1,\ldots,s'_\beta \in T^\omega_0$. $\orra{s}_1$ and $\orra{s}_2$ are called {\em essentially distinct} if either $\alpha \neq \beta$ or  for all $\sigma \in \Apar$ $\sigma(\orra{s}_1) \neq \sigma(\orra{s}_2)$.
\end{definition}
\begin{remark}
Note that, for $\alpha = \beta$ in Definition~\ref{def.essdis},  $\sigma(\orra{s}_1) \neq \sigma(\orra{s}_2)$ iff there exists an $i \in \{1,\ldots,\alpha\}$ such that $\sigma(s_i) \neq \sigma(s'_i)$. 
\end{remark}
\begin{example}
$n$ and $s(n)$ are essentially distinct and so are $\bar{0}$ and $s(n)$; $m$ and $s(n)$ are not essentially distinct (just use $\sigma$ with $\sigma(m) = \bar{1}$ and $\sigma(n)= \bar{0}$).
\end{example}

\begin{definition}[s-substitution]
Let $\Theta$ be a finite set of pairs $(X(\orra{s}_{\alpha}),t)$ where $X(\orra{s}_{\alpha}) \in T^\iota_V$, $\orra{s}_\alpha$ a tuple of terms in $T^\omega_0$ and $t \in T^\iota$. Note that the global variables occurring in $\Theta$ need not be of the same type. $\Theta$ is called an {\em s-substitution} if for all $(X(\overrightarrow{s}_{\alpha}),t)$ , $(Y(\overrightarrow{s'}_{\alpha}),t') \in \Theta$ either $X \neq Y$ or the tuples $\orra{s}_{\alpha}$ and $\orra{s'}_{\alpha}$ are essentially distinct. For $\sigma \in \Apar$ we define $\Theta[\sigma] =\{X(\sigma(\overrightarrow{s}_{\alpha})) \ass t\sigma\Eval_\iota \mid (X(\overrightarrow{s}_{\alpha}),t) \in \Theta\}.$\\[1ex]
We define $\dom(\Theta) = \{X(\overrightarrow{s}_{\alpha}) \mid (X(\overrightarrow{s}_{\alpha}),t) \in \Theta\}$ and $\rg(\Theta) = \{t \mid (X(\overrightarrow{s}_{\alpha}),t) \in \Theta\}$.
\end{definition}

\begin{remark}
Note that for pairs $(X(\orra{s}_{\alpha}),t)$ occurring in substitutions we require that the components of $\orra{s}_\alpha$ are terms in $T^\omega_0$. For these term tuples we have $\sigma(\orra{s}) = \sigma(\orra{s})\Eval_\omega$, so there is no need to apply $\Eval_\omega$ after substitution.
\end{remark} 
\begin{proposition}
For all $\sigma \in \Apar$ and every s-substitution $\Theta$, $\Theta[\sigma]$ is a (first-order) substitution.
\end{proposition}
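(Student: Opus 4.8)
The plan is to verify directly the two defining conditions of a first-order substitution for the finite set $\Theta[\sigma]$: first, that each of its elements is a binding of an \emph{individual} variable to an ordinary first-order term; and second, that the left-hand sides of distinct bindings are pairwise distinct, so that $\Theta[\sigma]$ genuinely denotes a (partial) function on $V^\iota$. Finiteness of the domain is immediate, since $\Theta$ is finite and $\Theta[\sigma]$ has at most as many elements.

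For the first condition, fix a pair $(X(\orra{s}_\alpha),t) \in \Theta$. The components of $\orra{s}_\alpha$ lie in $T^\omega_0$, so by the remark following the definition of s-substitution we have $\sigma(\orra{s}_\alpha) = \sigma(\orra{s}_\alpha)\Eval_\omega$, a tuple of numerals; hence $X(\sigma(\orra{s}_\alpha))$ is a $V$-term over $X$ whose indices are numerals, i.e. an element of $V^\iota$. For the right-hand side, $t \in T^\iota$, so by the Proposition asserting $\sigma(t)\Eval_\iota \in T^\iota_0$ the term $t\sigma\Eval_\iota$ is an ordinary first-order term. Thus every element of $\Theta[\sigma]$ has the shape (individual variable) $\ass$ (first-order term).

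The second condition is the crux and is exactly what the s-substitution restriction was designed to guarantee. Take two distinct pairs $(X(\orra{s}_\alpha),t)$ and $(Y(\orra{s'}_\beta),t') \in \Theta$. By the definition of s-substitution, either $X \neq Y$ or the tuples $\orra{s}_\alpha$ and $\orra{s'}_\beta$ are essentially distinct. If $X \neq Y$, then by the normalization convention for global variables a $V$-term over $X$ can equal a $V$-term over $Y$ only when $X=Y$, so $X(\sigma(\orra{s}_\alpha)) \neq Y(\sigma(\orra{s'}_\beta))$. If instead $X = Y$, then both index tuples have the single length $\alpha$ fixed by the type of $X$, so essential distinctness cannot stem from differing arities and must give $\sigma(\orra{s}_\alpha) \neq \sigma(\orra{s'}_\alpha)$ for the chosen $\sigma$; by the remark after the definition of essential distinctness this yields an index $i$ with $\sigma(s_i) \neq \sigma(s'_i)$, i.e. distinct numerals in position $i$, whence the individual variables $X(\sigma(\orra{s}_\alpha))$ and $X(\sigma(\orra{s'}_\alpha))$ differ. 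In either case the two bindings have distinct left-hand sides.

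Combining the two conditions, $\Theta[\sigma]$ is a finite set of bindings of pairwise distinct individual variables to first-order terms, which is precisely a first-order substitution. I expect the distinctness step to be the only delicate one: it hinges on the fact that essential distinctness is quantified over \emph{all} parameter assignments, so it specializes to the particular $\sigma$ at hand, together with the convention identifying $V$-terms only up to equality of their numeral indices. The remaining steps are routine appeals to the evaluation results already established for $T^\omega$ and $T^\iota$.
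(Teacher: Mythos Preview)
Your proposal is correct and follows essentially the same approach as the paper: both argue distinctness of the left-hand sides by the case split $X \neq Y$ versus $X = Y$ with essential distinctness, and both note that $X(\sigma(\orra{s}_\alpha)) \in V^\iota$. Your version is more explicit in also verifying that the right-hand sides land in $T^\iota_0$, which the paper leaves implicit.
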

\begin{proof}
It is enough to show that for all $(X(\overrightarrow{s}_{\alpha}),t),(Y(\overrightarrow{s'}_{\alpha'}),t') \in \Theta$ $X(\sigma(\overrightarrow{s}_{\alpha}))$ $\neq Y(\sigma(\overrightarrow{s'}_{\alpha'}))$ for all $\sigma \in \Apar$. If $X \neq Y$ this is obvious; if $X =Y$ then, by definition of $\Theta$, $\overrightarrow{s}_{\alpha}$ and $\overrightarrow{s'}_{\alpha}$ are essentially distinct and so for each $\sigma \in \Apar$ we have $X(\sigma(\overrightarrow{s}_{\alpha})) \neq X(\sigma(\overrightarrow{s'}_{\alpha'}))$. Thus  $\Theta[\sigma]$ is indeed a substitution as for $X(\overrightarrow{s}_{\alpha}) \in T^\iota_V$ $X(\sigma(\overrightarrow{s}_{\alpha})) \in V^\iota$.
\end{proof}
\begin{example}\label{ex.ssubstitution}
\label{ssub}
The following is an s-substitution
$$\Theta = \{(X(n,m), \hat{S}(Y(m),n)), (X(s(n),m), \hat{S}(Y(m),s(n))), (X(0,0), Y(0))\}.$$
\end{example}
The application of an s-substitution $\Theta$ to terms in $T^\iota$ is defined inductively on the complexity of term definitions as usual.
\begin{definition}\label{def.appssub}
Let $\Theta$ be an s-substitution and $\sigma$ a parameter assignment. We define $t\Theta[\sigma]$ for terms $t \in T^\iota$:
\begin{itemize}
\item if $t$ is a constants of type $\iota$, then $t\Theta[\sigma] = t$,
\item if $t= X(\overrightarrow{s}_{\alpha})$ and $ (X(\overrightarrow{s'}_{\alpha}),t')  \in \Theta$ such that $X(\sigma(\orra{s'}_{\alpha})) =  X(\sigma(\orra{s}_{\alpha}))$, then $X(\orra{s}_\alpha)\Theta[\sigma] = \sigma(t')\Eval_\iota$, otherwise  $X(\overrightarrow{s}_{\alpha})\Theta[\sigma] = X(\sigma(\orra{s}_\alpha))$;
\item if $f \in \Fcal$, $f\colon \iota^\alpha  \to \iota$, $s_1,\ldots,s_\alpha \in T^\iota$ then $f(s_1,\ldots,s_\alpha)\Theta[\sigma] = f(s_1\Theta[\sigma],\ldots,$ $s_\alpha\Theta[\sigma])$, 
\item if $\fhat \in \Fcalhat$, $\fhat\colon \tau(\gamma(1),\ldots,\gamma(\alpha_1)) \times \omega^{\beta+1} \to \iota$, then 
$$\fhat(\orra{X}_{\alpha_1}, t_1,\ldots,t_{\beta+1})\Theta[\sigma] = \fhat(\orra{X}_{\alpha_1},\sigma(t_1)\Eval_\omega,\ldots,\sigma(t_{\beta+1})\Eval_\omega)\Eval_\omega \Theta[\sigma].$$
\end{itemize}
\end{definition}
\begin{example}
Let us consider the following defined function symbol:
\begin{eqnarray*}
\ghat(X,n,\bar{0}) &=& X(n,0),\\
\ghat(X,n,s(m)) &=& g(X(n,m),\ghat(X,s(n),m)).
\end{eqnarray*}
and the parameter assignment $\sigma = \left\lbrace n\leftarrow 0 , m\leftarrow s(0) \right\rbrace$. Then the evaluation of the term $\ghat(X,n,m)$ by the s-substitution $\Theta$ from Example~\ref{ssub} proceeds as follows:
\[\begin{array}{lcl}
\ghat(X,n,m)\Theta[\sigma] &=&  \ghat(X,\sigma(n)\Eval_{\omega},\sigma(m)\Eval_{\omega})\Eval_{\omega}\Theta[\sigma] \\
&=& \ghat(X,0,s(0))\Eval_{\omega}\Theta[\sigma] \\
&=& g(X(0,s(0)),\ghat(X,s(0),0)\Eval_{\omega})\Theta[\sigma]\\
&=& g(X(0,s(0)),X(s(0),0))\Theta[\sigma] = g(f(Y(s(0))),s(Y(0)))
\end{array}
\]
Where $\Theta[\sigma]$ is 
$$ \{(X(0,s(0)), Y(s(0))), (X(s(0),s(0)), s(Y(s(0)))), (X(0,0), Y(0))\}.$$

\end{example}
The composition of $s$-substitutions is not trivial as, in general, there is no uniform representation of composition under varying parameter assignments.
\begin{example}\label{ex.ssubstitution2}
Let 
$$\Theta_1 = \{(X_1(n),f(X_1(n))\}\ \Theta_2 = \{(X_1(0),g(a))\}.$$
Then, for $\sigma \in \Apar$ s.t. $\sigma(n)=0$ we get 
$$\Theta_1[\sigma] \circ \Theta_2[\sigma] = \{X_1(0) \ass f(X_1(0))\} \circ \{X_1(0) \ass g(a)\} =  \{X_1(0) \ass f(g(a))\}.$$
On the other hand, for $\sigma' \in \Apar$ with $\sigma'(n)=1$ we obtain
$$\Theta_1[\sigma'] \circ \Theta_2[\sigma'] = \{X_1(1) \ass f(X_1(1))\} \circ \{X_1(0) \ass g(a)\} = \{X_1(1) \ass f(X_1(1)),$$ $$X_1(0) \ass g(a)\}.$$
Or take $\Theta'_1 = \{(X_1(n),X_2(n))\}$ and $\Theta'_2 = \{(X_2(m),X_1(m))\}$.\\[1ex]
Let $\sigma(n)=\sigma(m)=0$ and $\sigma'(n)=0,\sigma'(m)=1$. Then 
\begin{eqnarray*}
\Theta'_1[\sigma] \circ \Theta'_2[\sigma] &=& \{X_2(0) \ass X_1(0)\},\\
\Theta'_1[\sigma'] \circ \Theta'_2[\sigma'] &=& \{X_1(0) \ass X_2(0), X_2(1) \ass X_1(1)\}.
\end{eqnarray*}
\end{example} 
The examples above suggest the following restrictions on s-substitutions with respect to composition. The first definition ensures that domain and range are variable-disjoint.
\begin{definition}
Let $\Theta$ be an s-substitution. $\Theta$ is called {\em normal} if for all $\sigma \in \Apar$ $\dom(\Theta[\sigma]) \intrs V^\iota(\rg(\Theta[\sigma])) = \emptyset$.
\end{definition}

\begin{example}
The s-substitution in Example~\ref{ex.ssubstitution} is normal.  The substitutions $\Theta'_1$ and $\Theta'_2$ in Example~\ref{ex.ssubstitution2} are normal. $\Theta_1$ in Example~\ref{ex.ssubstitution2} is not normal.
\end{example}
\begin{proposition}\label{prop.normalssubdec}
It is decidable whether a given s-substitution is normal.
\end{proposition}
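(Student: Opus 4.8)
The plan is to reduce the universally quantified condition defining normality to finitely many decidable sub-problems. First I would observe that, since $\Theta$ is a finite set, $\Theta$ fails to be normal precisely when there is a parameter assignment $\sigma \in \Apar$ and an ordered pair of elements $(X(\orra{s}_{\alpha}),t),(Y(\orra{u}_{\alpha'}),t') \in \Theta$ such that the ground domain variable $X(\sigma(\orra{s}_{\alpha})\Eval_\omega)$ occurs in the evaluated range term $\sigma(t')\Eval_\iota$. A clash can only arise when the global variable $X$ actually occurs in $t'$, so it suffices to decide, for each of the finitely many ordered pairs of elements of $\Theta$ and each global variable $X$ supplying a domain V-term, whether some $\sigma$ realises such an occurrence. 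Normality is then the conjunction of the negations of these finitely many conditions.

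Next I would make the set of $X$-V-terms occurring in $\sigma(t')\Eval_\iota$ explicit and uniform in $\sigma$. By recursion on the structure of $t'$ I would compute a finite set $\mathrm{VT}_X(t')$ of index schemata, i.e.\ tuples $\orra{w}_{\alpha}$ of $\omega$-terms in which, besides the parameters, a bounded supply of fresh loop variables may occur, together with the bounds over which those loop variables range. The constant, $V$-term and ordinary-function cases are immediate (a $V$-term $X(\orra{r})$ in $t'$ simply contributes the schema $\orra{r}$); the defined-symbol case $\fhat(\orra{X}_{\alpha},\orra{r}_{\beta},t_{\beta+1})$ is handled via the defining equations of Definition~\ref{def-eq-iota}: unfolding the recursion $\sigma(t_{\beta+1})\Eval_\omega = \bar{p}$ times yields, at each depth $0\le j\le p$, the $X$-V-terms contributed by the step term $t_S$ with its recursion parameter set to $\bar{j}$, so a single loop variable $j$ captures all of them at once. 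That this analysis terminates and produces a finite object follows because $\orderf$ is Noetherian, exactly as termination of $R(\Fcalhat_\omega)$ and $R(\Fcalhat_\iota)$ does (cf.\ Proposition~\ref{prop.evalTomega}).

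With this in hand the clash condition becomes an index-matching problem: $\Theta$ is not normal iff for some ordered pair, some schema $\orra{w}_{\alpha}\in \mathrm{VT}_X(t')$, and some assignment $\tau$ extending a parameter assignment by values for the loop variables (which we treat as additional $\omega$-parameters subject to the computed bounds), one has $\tau(\orra{s}_{\alpha})\Eval_\omega = \tau(\orra{w}_{\alpha})\Eval_\omega$ componentwise. Here the left-hand components have the very simple shape $s^{c}(n)$ or $\bar{c}$ because $\orra{s}_{\alpha}$ is a tuple over $T^\omega_0$, whereas the right-hand components are evaluations of arbitrary numeric terms. I would finish by showing that each such matching problem is decidable: both sides are effectively computable functions of the finitely many parameter values and loop indices involved, and solvability of the resulting system over $\Num$ reduces to a decidable arithmetic question, namely a generalisation of essential distinctness (Definition~\ref{def.essdis}) that also quantifies the loop variables.

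The step I expect to be the main obstacle is precisely this last reduction. Deciding $\exists\tau:\ \tau(\orra{s}_{\alpha})\Eval_\omega = \tau(\orra{w}_{\alpha})\Eval_\omega$ is trivial when the indices $\orra{w}_{\alpha}$ are themselves successor terms, but becomes delicate once the numeric defined function symbols unfolded out of $t'$ are allowed to be arbitrary primitive recursive functions, since matching a linear term against a primitive recursive one is not decidable in general. The way I would force the argument through is to establish that the index sets produced by the extraction above are \emph{effectively semilinear} (equivalently, Presburger-definable) in the parameter values within the fragment actually used, the loop construction contributing exactly the arithmetic progressions underlying semilinearity, so that the whole clash condition is an existential Presburger formula and hence decidable; normality is then its decidable negation.
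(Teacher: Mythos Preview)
The paper's argument is far shorter and more elementary than yours: it simply enumerates the finitely many V-terms $X(\vec s)$ in $\dom(\Theta)$ and the finitely many V-terms $X(\vec t)$ occurring \emph{syntactically} as subterms of the range terms, and for each such pair with the same head $X$ tests whether the index tuples $\vec s,\vec t$ are unifiable as ordinary first-order terms over $T^\omega_0$. No unfolding of defined symbols, no loop variables, no Presburger arithmetic.

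You go further and try to account for V-terms that only materialise after evaluating a subterm $\fhat(X,\ldots)$ in a range term; the paper's short proof does not visibly treat that case, so your worry is legitimate. However, the repair you propose via semilinearity has a genuine gap. By Definition~\ref{def-eq-iota} the index positions of the V-terms inside $t_B,t_S$ are arbitrary $T^\omega$-terms and may therefore contain defined numeric symbols from $\Fcalhat_\omega$; these compute arbitrary primitive recursive functions of the recursion counter $m$ and of the parameters. The index sets your extraction produces are thus, in general, ranges of primitive recursive functions rather than semilinear sets, and the reduction to existential Presburger arithmetic fails. Concretely: with a domain entry $(X(\bar c),\cdot)$ having a numeral index and a range term whose unfolding yields $X(\hat g(\bar 0)),X(\hat g(\bar 1)),\ldots$ for some primitive recursive $\hat g$, your clash test becomes ``is $c$ in the range of $\hat g$?'', which is only semi-decidable (every nonempty r.e.\ set is the range of a primitive recursive function). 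To make your line go through you would have to add a hypothesis restricting the numeric content of the defining equations---for instance, that all V-term indices lie in $T^\omega_0$---which is effectively the implicit assumption behind the paper's one-paragraph proof.
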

\begin{proof}
Let $\Theta$ be an s-substitution. We search for equal global variables in $\dom(\Theta)$ and in $\rg(\Theta)$; if there are none then $\Theta$ is trivially normal. So let $X \in V^G(\dom(\Theta))$ $\intrs V^G(\rg(\Theta))$. For every $X(\overrightarrow{s}_{\alpha}) \in \dom(\Theta)$ and for every $X(\overrightarrow{t}_{\alpha})$ occurring in $\rg(\Theta)$ we test whether there exists a $\sigma \in \Apar$ such that $\sigma(\orra{s}_\alpha) = \sigma(\orra{t}_\alpha)$. This test uses ordinary first-order unification on terms in $T^\omega_0$. When we find $X(\overrightarrow{s}),X(\overrightarrow{t})$ such that there exists a $\sigma \in \Apar$ for which $\sigma(\orra{s})$ is equal to $\sigma(\orra{t})$ then $\Theta$ is not normal, and normal otherwise.
\end{proof}
Example~\ref{ex.ssubstitution2} shows also that normal s-substitutions cannot always be composed to an s-substitution;  thus we need an additional condition.
\begin{definition}
Let $\Theta_1,\Theta_2$ be normal s-substitutions. $(\Theta_1,\Theta_2)$ is called {\em composable} if for all $\sigma \in \Apar$ 
\begin{enumerate}
\item $\dom(\Theta_1[\sigma]) \intrs \dom(\Theta_2(\sigma)) = \emptyset$,
\item $\dom(\Theta_1[\sigma]) \intrs V^\iota(\rg(\Theta_2[\sigma])) = \emptyset$.
\end{enumerate} 
\end{definition}
\begin{proposition}
It is decidable whether, for two normal s-substitutions $\Theta_1,\Theta_2$, $(\Theta_1,\Theta_2)$ is composable.
\end{proposition}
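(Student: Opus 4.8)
The plan is to reduce composability, exactly as normality was reduced in Proposition~\ref{prop.normalssubdec}, to a finite family of first-order unifiability problems over $T^\omega_0$. Observe first that $(\Theta_1,\Theta_2)$ is composable iff \emph{both} universally quantified conditions hold, and each has the shape ``for all $\sigma\in\Apar$, $A[\sigma]\intrs B[\sigma]=\emptyset$'' for suitable finite sets $A[\sigma],B[\sigma]$ of individual variables. Such a statement fails precisely when there is a single $\sigma$ witnessing a common element, i.e. a pair of V-terms, one drawn from $A$ and one from $B$, that $\sigma$ maps to the same individual variable. Hence for each condition it suffices to enumerate the finitely many candidate pairs of V-terms and decide whether some parameter assignment identifies them; the condition holds for all $\sigma$ exactly when no pair can be so identified. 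Taking the conjunction of the two decision procedures decides composability.

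For the first condition I would compare $\dom(\Theta_1)$ with $\dom(\Theta_2)$, both finite sets of V-terms $X(\orra{s}_\alpha)$ with $\orra{s}_\alpha$ a tuple over $T^\omega_0$. For every $X(\orra{s}_\alpha)\in\dom(\Theta_1)$ and every $X(\orra{u}_\alpha)\in\dom(\Theta_2)$ over the \emph{same} global variable $X$ (and hence of the same arity) I would test, by ordinary first-order unification on the $T^\omega_0$-components with parameters treated as variables, whether there is a $\sigma$ with $\sigma(\orra{s}_\alpha)=\sigma(\orra{u}_\alpha)$. A unifier exists iff such a $\sigma$ does, since any unifier can be grounded to a genuine parameter assignment by mapping the leftover parameters to $\bar 0$. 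Condition~1 holds for all $\sigma$ iff no same-variable pair unifies.

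The second condition has exactly the form treated in the normality proof, only with the domain taken from $\Theta_1$ and the range from $\Theta_2$: we must decide whether $\dom(\Theta_1[\sigma])\intrs V^\iota(\rg(\Theta_2[\sigma]))=\emptyset$ for all $\sigma$. I would therefore reuse that method: restrict attention to the global variables in $V^G(\dom(\Theta_1))\intrs V^G(\rg(\Theta_2))$, and for each such $X$, each $X(\orra{s}_\alpha)\in\dom(\Theta_1)$ and each V-term $X(\orra{t}_\alpha)$ occurring in $\rg(\Theta_2)$, test via first-order unification over $T^\omega_0$ whether some $\sigma$ satisfies $\sigma(\orra{s}_\alpha)=\sigma(\orra{t}_\alpha)\Eval_\omega$. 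If no pair is unifiable the condition holds for all $\sigma$, otherwise it fails.

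The step I expect to be the main obstacle is the honest enumeration of the V-terms ``occurring in $\rg(\Theta_2)$'' in Condition~2. Because range terms lie in $T^\iota$ and may contain defined function symbols, the individual variables that actually appear in the \emph{evaluated} term $\sigma(t')\Eval_\iota$ are not in general the ones written syntactically in $t'$: unfolding a defined symbol such as the $\fhat$ of Example~\ref{ex.evalTiota} produces a whole family of V-terms $X(\dots,\bar j,\dots)$ whose index $j$ ranges over an interval determined by $\sigma$. Decidability survives because the \emph{global variables} surviving evaluation are bounded (they are the argument variables $\orra{X}$ of the defined symbols together with the recursion variables, all of which ultimately resolve to the $\orra{X}$), so one obtains a finite \emph{schematic} description of the occurring V-terms: finitely many patterns $X(\orra{w}(\vec k))$ with $\omega$-components over $T^\omega_0$ and index variables $\vec k$ constrained to intervals. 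Matching a domain V-term against such a pattern is then unifiability over $T^\omega_0$ augmented with interval constraints on $\vec k$, reducing to decidable linear arithmetic over $\Nat$. This is precisely where the decidability proof for normality already does the work, so I would factor it out and invoke it rather than re-derive it.
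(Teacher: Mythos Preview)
Your approach is exactly the paper's: the paper's proof is a single sentence saying ``like in Proposition~\ref{prop.normalssubdec}: by unification tests on $X(\overrightarrow{s}),X(\overrightarrow{t})$ occurring in the sets under consideration.'' Your first two paragraphs unpack precisely this, splitting into the two composability conditions and reducing each to finitely many first-order unifiability checks over $T^\omega_0$.

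Your third paragraph, on the obstacle posed by defined $\iota$-function symbols in range terms, actually goes \emph{beyond} what the paper does. The paper's proof of Proposition~\ref{prop.normalssubdec} simply speaks of ``every $X(\overrightarrow{t}_{\alpha})$ occurring in $\rg(\Theta)$'' without confronting the issue that evaluation of a term like $\fhat(X,Y,n,m)$ may spawn a $\sigma$-dependent family of V-terms; the present proposition then merely points back to that proof. So you have identified a genuine subtlety that the paper elides rather than resolves. Your suggested patch (finitely many schematic V-term patterns with interval-constrained index variables, reducing to linear arithmetic over $\Nat$) is plausible but is not something the paper supplies, and would itself need a careful argument about the shape of V-terms arising from the defining equations in Definition~\ref{def-eq-iota}. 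For the purposes of matching the paper, you can drop this discussion; for the purposes of a watertight proof, it is a fair point that neither the paper nor your sketch fully closes.
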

\begin{proof}
Like in Proposition~\ref{prop.normalssubdec}: by unfication tests on $X(\overrightarrow{s}),X(\overrightarrow{t})$ occurring in the sets under consideration.
\end{proof} 

\begin{definition}
Let $\Theta_1,\Theta_2$ be normal s-substitutions and $(\Theta_1,\Theta_2)$ composable. Assume that 
\begin{eqnarray*}
\Theta_1 &=& \{ (X_1(\orra{s_1}),t_1),\ldots,(X_\alpha(\orra{s_\alpha}),t_\alpha)\},\\
\Theta_2 &=& \{(Y_1(\overrightarrow{w_1}),r_1),\ldots,(Y_\beta(\overrightarrow{w_\beta}),r_\beta)\}.
\end{eqnarray*}
Then the composition $\Theta_1 \star \Theta_2$ is defined as 
$$\{ (X_1(\overrightarrow{s_1}),t_1\Theta_2),\ldots,(X_\alpha(\overrightarrow{s_\alpha}),t_\alpha\Theta_2),(Y_1(\overrightarrow{w_1}),r_1),\ldots,(Y_\beta(\overrightarrow{w_\beta}),r_\beta)\}.$$
\end{definition}
The following proposition shows that $\Theta_1 \star \Theta_2$ really represents composition.
\begin{proposition}\label{prop.ssubcomposition}
Let $\Theta_1,\Theta_2$ be normal s-substitutions and $(\Theta_1,\Theta_2)$ be composable then for all $\sigma \in \Apar$ $(\Theta_1 \star \Theta_2)[\sigma] = \Theta_1[\sigma] \circ \Theta_2[\sigma]$.
\end{proposition}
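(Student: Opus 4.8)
The plan is to fix an arbitrary $\sigma \in \Apar$, write $\mu = \Theta_1[\sigma]$ and $\nu = \Theta_2[\sigma]$ (both genuine first-order substitutions by the proposition following the definition of $\Theta[\sigma]$), and prove the equality $(\Theta_1 \star \Theta_2)[\sigma] = \mu \circ \nu$ by showing that the two substitutions have the same domain and assign the same term to every individual variable in that domain. First I would record that $\Theta_1 \star \Theta_2$ is really an s-substitution: clashes inside the $X_i$-block are excluded because $\Theta_1$ is one, clashes inside the $Y_j$-block because $\Theta_2$ is one, and a clash between the two blocks (some $X_i = Y_j$) is excluded by composability condition~1, which forces $\orra{s_i}$ and $\orra{w_j}$ to be essentially distinct. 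That same condition gives $\dom(\mu) \intrs \dom(\nu) = \emptyset$, so $\dom((\Theta_1 \star \Theta_2)[\sigma]) = \dom(\mu) \union \dom(\nu)$; and normality of $\Theta_1$ (so $v \notin V^\iota(\rg(\mu))$ for $v \in \dom(\mu)$) together with composability condition~2 (so $v \notin V^\iota(\rg(\nu))$) shows $(v\mu)\nu \neq v$, i.e. no trivial binding is produced, whence this set also equals $\dom(\mu \circ \nu)$.

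It then remains to compare values on each block. Over the $Y_j$-block this is immediate: a variable $v = Y_j(\sigma(\orra{w_j})) \in \dom(\nu)$ lies outside $\dom(\mu)$ by composability~1, so $v(\mu \circ \nu) = v\nu = \sigma(r_j)\Eval_\iota$, which is exactly the value assigned by the pair $(Y_j(\orra{w_j}), r_j)$ that $\star$ copies unchanged. Over the $X_i$-block, a variable $v = X_i(\sigma(\orra{s_i})) \in \dom(\mu)$ gets $v(\mu \circ \nu) = (v\mu)\nu = (\sigma(t_i)\Eval_\iota)\nu$ on the right, whereas on the left the pair $(X_i(\orra{s_i}), t_i\Theta_2)$ assigns $\sigma(t_i\Theta_2)\Eval_\iota$. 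Hence everything reduces to the single commutation identity
\[
\sigma(t\Theta_2)\Eval_\iota = (\sigma(t)\Eval_\iota)(\Theta_2[\sigma]) \qquad \text{for all } t \in T^\iota .
\]

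I would establish this identity by structural induction on $t$ along the clauses of Definition~\ref{def.appssub}. The constant case is trivial; for a $V$-term $X(\orra{s})$ the two sides coincide once one notes (via the remark on $T^\omega_0$-tuples) that $\sigma(\orra{s}) = \sigma(\orra{s})\Eval_\omega$, so ``$\Theta_2$ fires on $X(\orra{s})$'' precisely when $X(\sigma(\orra{s})) \in \dom(\nu)$, both sides then returning $\sigma(t')\Eval_\iota$ for the matching range term $t'$; the ordinary function-symbol case is the usual distribution of a substitution over $f(\cdots)$ and closes under the induction hypothesis. The genuine obstacle is the defined-function-symbol case, where $\Theta_2$ acts only after the symbol has been unfolded: here one must push $\sigma$ and $\Eval_\iota$ through the defining rewrite rules, using the propositions that the rewrite systems are canonical and that $\sigma(t)\Eval_\iota \in T^\iota_0$, and then apply the induction hypothesis to the fully evaluated, parameter- and defined-symbol-free term. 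Keeping the schematic application $\Theta_2$ and its instance $\Theta_2[\sigma]$ coherent across this unfolding is where the bookkeeping is delicate; the remaining cases are routine.
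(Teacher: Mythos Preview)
Your approach is essentially the same as the paper's: fix $\sigma$, unfold both sides using the composability conditions, and identify the two as first-order substitutions. The paper's proof is terser---it writes $\theta_1\theta_2 = \{x_i \ass t'_i\theta_2\} \cup \theta_2$ and simply asserts this is $(\Theta_1 \star \Theta_2)[\sigma]$, whereas you explicitly isolate the commutation identity $\sigma(t\Theta_2)\Eval_\iota = (\sigma(t)\Eval_\iota)\,\Theta_2[\sigma]$ and propose a structural induction for it; that induction (including the defined-symbol case you flag as delicate) is exactly the step the paper leaves implicit, so your argument is a correct and more careful version of theirs.
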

\begin{proof}
Let 
\begin{eqnarray*}
\Theta_1 &=& \{ (X_1(\orra{s_1}),t_1),\ldots,(X_\alpha(\orra{s_{\alpha}}),t_\alpha)\},\\
\Theta_2 &=& \{(Y_1(\orra{w_1}),r_1), \ldots,(Y_{\beta}(\orra{w_\beta}),r_\beta)\}.
\end{eqnarray*}
Then $\Theta_1 \star \Theta_2$ is defined as 
$$\{ (X_1(\overrightarrow{s_1}),t_1\Theta_2),\ldots,(X_\alpha(\overrightarrow{s_\alpha}),t_\alpha\Theta_2),(Y_1(\overrightarrow{w_1}),r_1),\ldots,(Y_\beta(\overrightarrow{w_\beta}),r_\beta)\}.$$
We write $x_i$ for $X_i(\sigma(\orra{s_i})))$ and $y_j$ for $Y_j(\sigma(\orra{w_j}))$, $\theta_1$ for $\Theta_1[\sigma]$ and $\theta_2$ for $\Theta_2[\sigma]$. Moreover let $t'_i = \sigma(t_i)\Eval_\iota, r'_j = \sigma(r_j)\Eval_\iota$.
Then 
\begin{eqnarray*}
\theta_1 &=& \{ x_1 \ass t'_1,\ldots,x_\alpha  \ass t'_\alpha\},\\
\theta_2 &=& \{(y_1 \ass r'_1,\ldots,y_{\alpha'} \ass r'_\beta\}.
\end{eqnarray*}
As $(\Theta_1,\Theta_2)$ is composable we have
\begin{enumerate}
\item $\{x_1,\ldots,x_\alpha\} \intrs \{y_1,\ldots,y_\beta\} = \emptyset$, and 
\item $\{x_1,\ldots,x_\alpha\} \intrs V^\iota(\{r'_1,\ldots,r'_\beta\}) = \emptyset$.
\end{enumerate} 
So 
\[
\begin{array}{l}
\theta_1\theta_2 =\{ x_1 \ass t'_1,\ldots,x_\alpha  \ass t'_\alpha)\}\theta_2= 
\{x_1 \ass t'_1\theta_2,\ldots,x_\alpha \ass t'_\alpha\theta_2\}  \union \theta_2.
\end{array}
\]
The last substitution is just $(\Theta_1 \star \Theta_2)[\sigma]$.
\end{proof}

\begin{proposition}
Let $\Theta_1,\Theta_2$ be normal s-substitutions and $(\Theta_1,\Theta_2)$ composable. Then $\Theta_1 \star \Theta_2$ is normal.
\end{proposition}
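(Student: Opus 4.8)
The plan is to reduce the claim to a purely first-order statement about the substitutions $\Theta_1[\sigma]$ and $\Theta_2[\sigma]$, exploiting the fact that normality is defined pointwise in $\sigma$. First I would fix an arbitrary $\sigma \in \Apar$; the goal is then to show $\dom((\Theta_1 \star \Theta_2)[\sigma]) \intrs V^\iota(\rg((\Theta_1 \star \Theta_2)[\sigma])) = \emptyset$. By Proposition~\ref{prop.ssubcomposition} we have $(\Theta_1 \star \Theta_2)[\sigma] = \Theta_1[\sigma] \circ \Theta_2[\sigma]$, so writing $\theta_1 = \Theta_1[\sigma]$ and $\theta_2 = \Theta_2[\sigma]$ it suffices to prove that the ordinary first-order substitution $\theta_1\theta_2$ is normal. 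Adopting the notation of the proof of Proposition~\ref{prop.ssubcomposition}, I have $\theta_1 = \{x_1 \ass t'_1,\ldots,x_\alpha \ass t'_\alpha\}$, $\theta_2 = \{y_1 \ass r'_1,\ldots,y_\beta \ass r'_\beta\}$ and $\theta_1\theta_2 = \{x_1 \ass t'_1\theta_2,\ldots,x_\alpha \ass t'_\alpha\theta_2\} \union \theta_2$.

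The four facts I would collect are: (i) $\{x_1,\ldots,x_\alpha\} \intrs V^\iota(\{t'_1,\ldots,t'_\alpha\}) = \emptyset$ (normality of $\Theta_1$, read at $\sigma$); (ii) $\{y_1,\ldots,y_\beta\} \intrs V^\iota(\{r'_1,\ldots,r'_\beta\}) = \emptyset$ (normality of $\Theta_2$ at $\sigma$); (iii) $\{x_1,\ldots,x_\alpha\} \intrs \{y_1,\ldots,y_\beta\} = \emptyset$; and (iv) $\{x_1,\ldots,x_\alpha\} \intrs V^\iota(\{r'_1,\ldots,r'_\beta\}) = \emptyset$, where (iii) and (iv) are exactly the two composability conditions specialized to $\sigma$.

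Next I would bound the domain and the range variables of $\theta_1\theta_2$. The domain satisfies $\dom(\theta_1\theta_2) \subseteq \{x_1,\ldots,x_\alpha\} \union \{y_1,\ldots,y_\beta\}$ by (iii). For the range, since applying $\theta_2$ to $t'_i$ can only leave those variables of $t'_i$ that lie outside $\dom(\theta_2)$ or introduce variables coming from $\rg(\theta_2)$, I obtain
\[
V^\iota(\rg(\theta_1\theta_2)) \subseteq \bigl(V^\iota(\{t'_1,\ldots,t'_\alpha\}) \setminus \{y_1,\ldots,y_\beta\}\bigr) \union V^\iota(\{r'_1,\ldots,r'_\beta\}).
\]
The argument then finishes by a two-case check on a variable $v$ occurring in the range. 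If $v \in V^\iota(\{r'_1,\ldots,r'_\beta\})$, then $v \notin \{x_1,\ldots,x_\alpha\}$ by (iv) and $v \notin \{y_1,\ldots,y_\beta\}$ by (ii); if instead $v \in V^\iota(\{t'_1,\ldots,t'_\alpha\}) \setminus \{y_1,\ldots,y_\beta\}$, then $v \notin \{x_1,\ldots,x_\alpha\}$ by (i) and $v \notin \{y_1,\ldots,y_\beta\}$ by the choice of case. In either case $v \notin \dom(\theta_1\theta_2)$, which gives the required disjointness. Since $\sigma$ was arbitrary, $\Theta_1 \star \Theta_2$ is normal.

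I expect the only delicate point to be the range-variable inclusion displayed above: one must argue carefully that no individual variable can appear in $\rg(\theta_1\theta_2)$ beyond those already controlled by the normality and composability hypotheses, i.e. that every variable of $t'_i\theta_2$ is either a surviving variable of $t'_i$ or a variable of some $r'_j$. Everything else is routine set-theoretic bookkeeping, carried out uniformly for each $\sigma$ once Proposition~\ref{prop.ssubcomposition} has been invoked to replace $(\Theta_1 \star \Theta_2)[\sigma]$ by the honest first-order composition $\theta_1\theta_2$.
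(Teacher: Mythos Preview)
Your proof is correct and follows essentially the same approach as the paper: fix $\sigma$, reduce to the first-order composition $\theta_1\theta_2$ using the explicit description from Proposition~\ref{prop.ssubcomposition}, and verify the disjointness of domain and range variables using the normality of $\Theta_1,\Theta_2$ together with the two composability conditions. Your version is in fact more careful than the paper's, which argues the same disjointness but leaves the range-variable inclusion and the case split largely implicit.
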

\begin{proof}
Like in the proof of Proposition~\ref{prop.ssubcomposition} let $\Theta_1[\sigma] =\theta_1,\Theta_2[\sigma] = \theta_2$. We have to show that $\dom(\theta_1\theta_2) \intrs V^\iota(\rg(\theta_1\theta_2) = \emptyset$. We have 
$$\theta_1\theta_2 = \{x_1 \ass t'_1\theta_2,\ldots,x_\alpha \ass t'_\alpha\theta_2\}  \union \theta_2.$$
As $\theta_1$ is normal we have $V^\iota(t'_i) \intrs \{x_1,\ldots,x_\alpha\} =\emptyset$ for $i = 1,\ldots,\alpha$. By definition of composability $\rg(\theta_2) \intrs \{x_1,\ldots,x_\alpha\} = \emptyset$, and therefore 
$$V^\iota(\{t'_1\theta_2,\ldots,t'_\alpha\theta_2\}) \intrs \{x_1,\ldots,x_\alpha\} = \emptyset.$$
So  $\{x_1 \ass t'_1\theta_2,\ldots,x_\alpha \ass t'_\alpha\theta_2\}$ is normal. As also $\Theta_2$ is normal we have $\dom(\theta_2) \intrs V^\iota(\rg(\theta_2) = \emptyset$. Hence we obtain $\dom(\theta_1\theta_2) \intrs V^\iota(\rg(\theta_1\theta_2)) = \emptyset.$ 
\end{proof}

\begin{definition}[s-unifier]\label{def.sunifier}
Let $t_1,t_2 \in T^\iota$. An s-substitution $\Theta$ is called an s-unifier of $t_1,t_2$ if for all $\sigma \in \Apar$ 
$(t_1\sigma\Eval_\iota)\Theta[\sigma] = (t_2\sigma\Eval_\iota)\Theta[\sigma]$. We refer to $t_1,t_2$ as s-unifiable if there exists an s-unifier of $t_1,t_2$. s-unifiability can be extended to more than two terms and to formula schemata (to be defined below) in an obvious way. 
\end{definition}
%_
Notice that the  s-substitution of Example~\ref{ssub} is an s-unifier of $X_{3}(0,m)$ and $X_{4}(0,m)$.
\begin{definition}\label{def.sunifnormal}
An s-unifier $\Theta$ of $t_1,t_2$ is called {\em restricted} to $\{t_1,t_2\}$ if $T^\iota_V(\Theta) \IN T^\iota_V(\{t_1,t_2\})$.
\end{definition} 
\begin{remark}
It is easy to see that for any s-sunifier $\Theta$ of $\{t_1,t_2\}$ there exists an s-unifier $\Theta'$ of $\{t_1,t_2\}$ which is {\em restricted} to $\{t_1,t_2\}$.
\end{remark} 
Formula schemata are defined in a way that also the number of variables in formulas can increase with the assignments of parameters. For this reason we use global variables in the definition.

\begin{definition}[formula schemata ($\FS$)]\label{def.formschem}
We define the set $\FS$  inductively:
\begin{itemize}
\item Let $\xi$ be a formula variable in $V^F$ then $\xi \in \FS$.
\item Let $P\colon \iota^\alpha \to o \in \Pcal$ and $t_1,\ldots, t_{\alpha} \in T^\iota$. Then $P(t_1,\ldots,t_\alpha) \in \FS$
\item Let $\Phat \in \Pcal^\tau$ for $\tau\colon \tau(\gamma(1),\ldots,\gamma(\alpha))  \times \omega^{\beta+1} \to o$, for $\alpha \geq 0$, $0< i\leq \alpha$, and $\gamma_i> 0$, $\overrightarrow{X}_{\alpha}$ a tuple of variables in $V^G$  and  $\overrightarrow{t}_{\beta+1}$ a tuple of terms in $T^\omega$. Then $\Phat(\overrightarrow{X}_{\alpha},\overrightarrow{t}_{\beta+1}) \in \FS$.
\item Let $F \in \FS$ then $\neg F \in \FS$.
\item  If $F_1,F_2 \in \FS$ then $F_1 \land F_2 \in \FS$ and $F_1 \lor F_2 \in \FS$.
\end{itemize}
The subset of $\FS$ not containing formula variables is denoted by $\FS_0$. The subset of $\FS$ containing no defined symbols at all and neither parameters nor numerals is denoted by $\Fnull$. $\Fnull$ is a set of ordinary quantifier-free first-order formulas.
\end{definition}

\begin{definition}[defining equations for predicate symbols]\label{def.defeqPS}
For every $\Phat \in \Pcalhat^\tau$ for  $\tau\colon (\omega^{\gamma_1} \to  \iota) \times \ldots \times (\omega^{\gamma_\alpha} \to  \iota) \times \omega^{\beta+1} \to o$, for $\alpha\geq 0$ and $\gamma_i > 0$ for $i \in \{1, \ldots, \alpha\}$, we define a set $D(\Phat)$ of defining equations, where  $\overrightarrow{X}_{\alpha} \in (V^G)^\alpha$, and  $\overrightarrow{t}_{\beta+1} \in (T^\omega)^{\beta+1}$. $D(\Phat)$ consists of the equations
$$ \Phat(\overrightarrow{X}_{\alpha},\overrightarrow{t}_{\beta},\bar{0}) = F_B,\quad \Phat(\overrightarrow{X}_{\alpha},\overrightarrow{t}_{\beta},s(m)) = F_S\{\xi \ass \Phat(\overrightarrow{X}_{\alpha},\overrightarrow{t}_{\beta},m)\},$$ 
where, for a $\orderp$-minimal $\Phat$ $F_B,F_S \in \Fnull$. If $\Phat$ is not $\orderp$-minimal then $F_B,F_S \in \FS$ such that for every 
$\Qhat \in \Pcalhat$ occurring in $F_B,F_S$ we have $\Qhat \orderp \Phat$.  The only global variable and parameters occurring in $F_B$ are in  $\overrightarrow{X}_{\alpha}$, and those occurring in the terms of$\overrightarrow{t}_{\beta}$ respectively.  The only global variables occurring in $F_S$ are $\overrightarrow{X}_{\alpha}$, and besides the parameters occurring in $\overrightarrow{t}_{\beta}$, the parameter $m$ and the formula variable $\xi$ may occur in $F_S$.
Like for $\Fcalhat_\omega$ and $\Fcalhat_\iota$ we define$$D(\Pcalhat) = \Union\{D(\Phat) \mid \Phat \in \Pcalhat\}.$$
\end{definition}
 The evaluation of a formula $F \in \FS$ is denoted by $\Eval_o$ and is defined inductively.
\begin{definition}\label{def.EvalFS}
Let $\sigma \in \Apar$; we define $\sigma(F)\Eval_o$ for $F \in \FS$. 
\begin{itemize}
\item[(1)] Let $\xi$ be a formula variable in $V^F$ then $\sigma(\xi)\Eval_o = \xi$.
\item[(2)] Let $P\colon \iota^\alpha \to o \in \Pcal$ and $t_1,\ldots, t_{\alpha} \in T^\iota$. Then 
$$\sigma(P(t_1,\ldots,t_\alpha))\Eval_o = P(\sigma(t_1)\Eval_\iota,\ldots,\sigma(t_{\alpha})\Eval_\iota).$$
\item[(3)] Let $\Phat \in \Pcal^\tau$ and $F=\Phat(X_1,\ldots,X_\alpha,t_1,\ldots,t_{\beta+1})$. Let $D(\Phat)=$
\begin{eqnarray*}
\Phat(\overrightarrow{X}_{\alpha},\overrightarrow{t}_{\beta},0) &=& F_B, \\
\Phat(\overrightarrow{X}_{\alpha},\overrightarrow{t}_{\beta},m+1) &=& F_S\{\xi \ass \Phat(\overrightarrow{X}_{\alpha},\overrightarrow{t}_{\beta},m)\}.
\end{eqnarray*}
we distinguish two cases:\\
(a) $\sigma(t_{\beta+1})\Eval_\iota = \bar{0}$. Then 
$$\sigma(\Phat(\overrightarrow{X}_{\alpha},\overrightarrow{t}_{\beta},0))\Eval_o = \sigma(F_B)\Eval_o$$
(b) $\sigma(t_{\beta+1})\Eval_\iota = s(\bar{p})$. Then 
$$\sigma(\Phat(\overrightarrow{X}_{\alpha},\overrightarrow{t}_{\beta},s(\bar{p})))\Eval_o = \sigma(F'_S)\Eval_o$$
For 
\begin{eqnarray*}
F'_S &=& F_S\{\xi \ass \Phat(\overrightarrow{X}_{\alpha},\overrightarrow{t}_{\beta},\bar{p})\}\\
\end{eqnarray*}

\item[(4)] $\sigma(\neg F)\Eval_o = \neg \sigma(F)\Eval_o$.
\item[(5)] $\sigma(F_1 \circ F_2)\Eval_o =  \sigma(F_1)\Eval_o \circ \sigma(F_2)\Eval_o$ for $\circ \in \{\land,\lor\}$.
\end{itemize}
\end{definition}
\begin{proposition}
Let $F \in \FS_0$ and $\sigma \in \Apar$. Then $\sigma(F)\Eval_o \in \Fnull$.
\end{proposition}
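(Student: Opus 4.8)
The plan is to prove the statement by well-founded induction on a measure of $\sigma(F)$, rather than by plain structural induction, since unfolding a defined predicate symbol reproduces the same symbol (with a decremented recursion argument) and so does not decrease formula structure. I would take the measure of $F$ to be the pair consisting of (i) the finite multiset of pairs $(\rho,\nu)$, one for each occurrence of a defined predicate atom in $F$, where $\rho$ is the $\orderp$-rank of its head symbol and $\nu = \sigma(t_{\beta+1})\Eval_\omega$ is the numeral to which its recursion argument normalizes (a numeral by Proposition~\ref{prop.evalTomega}), and (ii) the size of $F$; these are ordered lexicographically, with the Dershowitz--Manna multiset extension of the lexicographic order on pairs as the first component. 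This order is well founded because $\orderp$ is Noetherian and the numerals are well ordered. Since $F \in \FS_0$ carries no formula variables, clause (1) of Definition~\ref{def.EvalFS} never applies at the top level.

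The propositional and ordinary atomic cases are routine. For $F = \neg F'$ and $F = F_1 \circ F_2$ with $\circ \in \{\land,\lor\}$, clauses (4) and (5) reduce $\sigma(F)\Eval_o$ to the corresponding combination of $\sigma(F')\Eval_o$, respectively $\sigma(F_1)\Eval_o$ and $\sigma(F_2)\Eval_o$; the immediate subformulas have strictly smaller measure, so the induction hypothesis applies and $\Fnull$, being closed under $\neg,\land,\lor$, absorbs the result. For $F = P(t_1,\ldots,t_\alpha)$ with $P \in \Pcal$, clause (2) gives $\sigma(F)\Eval_o = P(\sigma(t_1)\Eval_\iota,\ldots,\sigma(t_\alpha)\Eval_\iota)$, and the earlier proposition stating that $\sigma(t)\Eval_\iota \in T^\iota_0$ for $t \in T^\iota$ guarantees each argument is an ordinary first-order term, placing the atom in $\Fnull$.

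The heart of the argument is the defined-predicate case $F = \Phat(\orra{X}_\alpha,t_1,\ldots,t_{\beta+1})$, where I would split on the normal form of the recursion argument. If $\sigma(t_{\beta+1})\Eval_\omega = \bar{0}$, clause (3a) reduces the evaluation to $\sigma(F_B)\Eval_o$; by Definition~\ref{def.defeqPS} every defined predicate symbol in $F_B$ is strictly $\orderp$-below $\Phat$ (vacuously when $\Phat$ is $\orderp$-minimal), so the multiset component strictly decreases and the induction hypothesis gives $\sigma(F_B)\Eval_o \in \Fnull$. If $\sigma(t_{\beta+1})\Eval_\omega = s(\bar{p})$, clause (3b) reduces the evaluation to $\sigma(F'_S)\Eval_o$ with $F'_S = F_S\{\xi \ass \Phat(\orra{X}_\alpha,\orra{t}_\beta,\bar{p})\}$. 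The only formula variable of $F_S$ is $\xi$; after the substitution it is replaced by the call $\Phat(\orra{X}_\alpha,\orra{t}_\beta,\bar{p})$, whose head has the same $\orderp$-rank as $\Phat$ but whose recursion numeral $\bar{p}$ is strictly smaller, while by Definition~\ref{def.defeqPS} every other defined predicate symbol of $F_S$ is strictly $\orderp$-below $\Phat$. Thus passing from $F$ to $F'_S$ removes one maximal pair from the multiset and inserts only strictly smaller ones, so $F'_S$ has strictly smaller measure and the induction hypothesis yields $\sigma(F'_S)\Eval_o \in \Fnull$.

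The step I expect to be the main obstacle is this defined-predicate case, and specifically the verification that the measure truly decreases: one must confirm that the $\xi$-substitution reintroduces $\Phat$ only with a strictly smaller numeral while $F_B$ and $F_S$ contribute only $\orderp$-smaller defined predicates, and that once the substituted recursive call is evaluated no formula variable survives, which is exactly what discharges the ``no formula variables'' condition built into $\Fnull$. Additional care is needed to see that $\sigma$ and $\Eval_o$ commute correctly with the defining-equation substitution of Definition~\ref{def.defeqPS}; this relies on the normalization results for $T^\omega$ and $T^\iota$ proved above, which ensure that the recursion argument normalizes to a numeral and that all numeric and individual subterms evaluate into $T^\iota_0$.
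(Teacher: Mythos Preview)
Your proposal is correct and follows essentially the same approach as the paper: both arguments rest on a double induction over the $\orderp$-rank of defined predicate symbols and the numeral value of the recursion argument. Your packaging via a Dershowitz--Manna multiset measure is more explicit and handles compound formulas with several defined-predicate occurrences more cleanly than the paper's somewhat informal ``induction on $\orderp$ and the induction parameter,'' but the underlying well-foundedness and the case analysis are the same.
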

\begin{proof}
If there are no defined predicate symbols in $F$ then, obviously, $\sigma(F)\Eval_o \in \Fnull$;  indeed, here only the cases (1),(2),(4) and (5) in Definition~\ref{def.EvalFS} apply.

If there are defined predicate symbols we proceed by induction on $\orderp$ and the induction parameter.

Let $\Phat$ be minimal in $\orderp$ and let $F=\Phat(\overrightarrow{X}_{\alpha},\overrightarrow{t}_{\beta+1})$. We show that $\sigma(F)\Eval_o \in F_0$:
\begin{itemize}
\item[(a)] $\sigma(t_{\beta+1})\Eval_\iota = \bar{0}$. Then, by Definition~\ref{def.EvalFS} 
$$\sigma(\Phat(\overrightarrow{X}_{\alpha},\overrightarrow{t}_{\beta+1}))\Eval_o = \sigma(F_B)\Eval_o$$
As $\Phat$ is minimal the formula $F_B$ does not contain defined predicate symbols and so $\sigma(F_B)\Eval_o \in \Fnull$.
\item[(b)] $\sigma(t_{\beta+1})\Eval_\iota = s(\bar{p})$. Here we have 
$$\sigma(\Phat(\overrightarrow{X}_{\alpha},\overrightarrow{t}_{\beta+1}))\Eval_o = \sigma(F'_S)\Eval_o$$
for 
\begin{eqnarray*}
F'_S &=& F_S\{\xi \ass \Phat(\overrightarrow{X}_{\alpha},\overrightarrow{t}_{\beta}, \overline{p})\}.
\end{eqnarray*}
Note that $F_S$ itself does not contain defined predicate symbols; in $F'_S$ we have the symbol $\Phat$ but with $\Phat(\overrightarrow{X}_{\alpha},\overrightarrow{t}_{\beta}, \overline{p})$. Therefore  we proceed by induction on the value of $\sigma(t_{\beta+1})$ and infer that also 
$\sigma(F'_S)\Eval_o \in \Fnull$. 
\end{itemize}
If $\Phat$ is not minimal the base case for $\Phat$ involves only smaller defined predicate symbols. So by induction on $\orderp$ we get the desired result.
\end{proof}
\begin{definition}[unsatisfiable schemata]\label{def.schema-unsat}
Let $F \in \FS$. Then $F$ is called {\em unsatisfiable} if for all $\sigma \in \Apar$ the formula $\sigma(F)\Eval_o$ is unsatisfiable.
\end{definition}
\begin{example}\label{ex.2parameters}
Let $a$ be a constant symbol of type $\iota$, $P \in \Pcal^{\iota \times \iota \to o}$, $\fhat$ as in Example~\ref{ex.Tiota}, $\Phat \in \Pcalhat^\tau$ for $\tau = (\omega\to \iota) \times \omega \to o$, and $\Qhat \in \Pcalhat^{\tau'}$ for $\tau' = (\omega \to \iota)^2 \times \omega^2 \to o$. Concerning the ordering we have $\Phat \orderp \Qhat$. The defining equations for $\Phat$ and $\Qhat$ are:
\begin{eqnarray*}
\Phat(X,\bar{0}) &=& \neg P(X(\bar{0}),\fhat(a,0)),\\
\Phat(X,s(n)) &=& \Phat(X,n) \lor \neg P(X(s(n)),\fhat(a,s(n))).
\end{eqnarray*}
\begin{eqnarray*}
\Qhat(X,Y,n,\bar{0}) &=& P(\fhat(Y(\bar{0}),\bar{0}),Y(\bar{1})) \land \Phat(X,n),\\
\Qhat(X,Y,n,s(m)) &=& P(\fhat(Y(\bar{0}),s(m)),Y(\bar{1})) \land \Phat(X,n).
\end{eqnarray*}
It is easy to see that the schema $\Qhat(X,Y,n,m)$ is unsatisfiable. We compute $\sigma(\Qhat(X,Y,n,m))\Eval_o$ for $\sigma$ with $\sigma(m) = \bar{2}, \sigma(n) = \bar{3}$: 
\[
\begin{array}{l}
\sigma(\Qhat(X,Y,n,m))\Eval_o = P(\fhat(Y(0),2),Y(1)) \land \Phat(X,3)\Eval_o)=\\
P(\fhat(Y(0),2),Y(1)) \land (\Phat(X,2)\Eval_o \lor \neg P(X(3),\fhat(a,3)\Eval_\iota) = \\
P(\fhat(Y(0),2),Y(1)) \land (\Phat(X,1)\Eval_o  \lor \neg P(X(2),\fhat(a,2)\Eval_\iota) \lor \neg P(X(3),\fhat(a,3)\Eval_\iota)) =\\
.... P(g(g(Y(0)),Y(1)) \land ( \neg P(X(0),a) \lor \neg P(X(1),g(a)) \lor\\ 
\neg P(X(2),g(g(a))) \lor \neg P(X(3),g(g(g(a))))).
\end{array}
\]
Note that, for $\sigma(n) = \bar{\alpha}$ the number of different variables in 
$\sigma(\Qhat(X,Y,n,m))\Eval_o$ is $\alpha+2$; so the number of variables increases with the parameter assignments.
\end{example}

\begin{example} \label{1smaform}
Let us now consider the schematic formula representation of the inductive definition extracted from the $m$-successor eventually constant schema presented in Section~\ref{motivation}. This requires us to define four defined predicate symbols ${\hat{F}_1},{\hat{F}_2},{\hat{F}_3},{\hat{F}_4}$, and  ${\hat{F}_5}$ are of type $(\omega \times \omega \to \omega)^2 \times (\omega \to \omega) \times \omega^2  \to o$  where ${\hat{F}_5} \orderp {\hat{F}_4} \orderp {\hat{F}_3} \orderp {\hat{F}_2} \orderp {\hat{F}_1}$. Furthermore, these defined predicate symbols contain  the symbols $=,< \in \omega \times \omega \to o$, $f,\mathit{suc} \in  \omega \to \omega$, $a\colon \omega$, and $n,m\in \Ncal$. We also require a defined function symbol $\hat{S}$ of type $\omega  \times \omega \to \omega$. Note that in this case the type $\iota$ is identical to $\omega$. Using these symbols we can rewrite the inductive definition provided in  Section~\ref{motivation} as follows: 
\begin{align*}
\hat{F_1}(\mathbf{X},n,m) \equiv &  \hat{F_2}(\mathbf{X},n,m)  \wedge\  \hat{F_3}(\mathbf{X},n,m)\\
\hat{F_2}(\mathbf{X},n,0) \equiv&  \eqpred{\hat{S}(X_1(n,0),0)}{n} \ \vee \  \leqpred{X_1(n,0)}{n} \\
\hat{F_2}(\mathbf{X},n,s(m)) \equiv&  ( \eqpred{\hat{S}(X_1(n,s(m)),s(m))}{n} \ \vee \  \leqpred{X_1(n,s(m))}{n} ) \wedge \\ &  \hat{F_2}(\mathbf{X},n,m)\\
\hat{F_3}(\mathbf{X},0,m) \equiv&   \hat{F_5}(\mathbf{X},0,m) \ \wedge\ \nleqpred{a}{0} \\
\hat{F_3}(\mathbf{X},s(n),m) \equiv&  ( \hat{F_5}(\mathbf{X},s(n),m)  \ \wedge  (\hat{F_4}(\mathbf{X},n,m)) \ \wedge \\ & \hat{F_3}(\mathbf{X}, n,m) \\
\hat{F_4}(\mathbf{X},n,0) \equiv&  \nleqpred{\hat{S}(X_2(n,0),0)}{s(n)}  \ \vee \ \eqpred{\hat{S}(X_2(n,0),0)}{n} \ \vee \\ & \leqpred{X_2(n,0)}{n}\\
\hat{F_4}(\mathbf{X},n,s(m)) \equiv&   \nleqpred{\hat{S}(X_2(n,s(m)),s(m))}{s(n)}  \ \vee \\ & \eqpred{\hat{S}(X_2(n,s(m)),s(m))}{n} \vee \leqpred{X_2(n,s(m))}{n} \ \wedge \\ & \hat{F_4}(\mathbf{X},n,m)\\
\hat{F_5}(\mathbf{X},n,0)  \equiv&  \neqpred{\hat{S}(X_3(n),0)}{n}\\ 
\hat{F_5}(\mathbf{X},n,s(m))  \equiv&  \neqpred{\hat{S}(X_3(n),s(m))}{n} \ \vee \ \hat{F_5}(\mathbf{X},n,m)\\
\hat{S}(Z,0)\equiv&  Z\\
\hat{S}(Z,s(n)) \equiv& \mathit{suc}(\hat{S}(Z,n))\\
\end{align*}
\normalsize
\end{example}

\section{The Resolution Calculus}\label{sec:resolution}

The basis of our calculus for refuting formula schemata is a calculus for quantifier-free formulas $\RPLnull$ which combines dynamic normalization rules (a la Andrews, see~\cite{DBLP:journals/jacm/Andrews81}) with the resolution rule. In contrast to~\cite{DBLP:journals/jacm/Andrews81} we do not restrict the resolution rule to atomic formulas. We denote as $\PLnull$ the set of quantifier-free formulas in predicate logic; for simplicity we omit $\impl$ and represent it by $\neg$ and $\lor$ in the usual way.  {\em Sequents} are objects of the form $\Gamma \seq \Delta$ where $\Gamma$ and $\Delta$ are multisets of formulas in $\PLnull$.

\begin{definition}[$\RPLnull$]
The axioms of $\RPLnull$ are sequents $\seq F$ for $F \in \PLnull$.\\[1ex]
The rules are elimination rules for the connectives and the resolution rule.
\[
\infer[\AndrI]{\Gamma \seq \Delta, A}
   {\Gamma \seq \Delta, A \land B} \ \
\infer[\AndrII]{\Gamma \seq \Delta, B}
   {\Gamma \seq \Delta, A \land B} \ \
\infer[\Andl]{A,B,\Gamma \seq \Delta}
  { A \land B,\Gamma \seq \Delta
  }
\]
\[
\infer[\Orr]{\Gamma \seq \Delta,A,B}
 { \Gamma \seq \Delta, A \lor B} \ \
\infer[\OrlI]{A,\Gamma \seq \Delta}
 { A \lor B,\Gamma \seq \Delta} \ \ 
\infer[\OrlII]{B,\Gamma \seq \Delta}
 { A \lor B,\Gamma \seq \Delta} 
\]
\[
\infer[\negr]{A, \Gamma \seq \Delta}
 {\Gamma \seq \Delta, \neg A} \ \  
\infer[\negl]{\Gamma \seq \Delta,A}
 { \neg A, \Gamma \seq \Delta}
\]
The resolution rule where $\vartheta$ is an m.g.u. of $\{A_1,\ldots,A_k,B_1,\ldots,B_l\}$ and\\ 
$V(\{A_1,\ldots,A_k\}) \intrs V(\{B_1,\ldots,B_l\}) = \emptyset$ is 
\[
\infer[\res]{\Gamma\vartheta,\Pi\vartheta \seq \Delta\vartheta,\Lambda\vartheta} 
{\Gamma \seq \Delta,A_1,\ldots,A_k 
  &
  B_1,\ldots,B_m,\Pi \seq \Lambda
}
\]
\end{definition}

\begin{proposition}\label{prop.RPLnull-complete}
$\RPLnull$ is sound and refutationally complete, i.e. 
\begin{itemize}
\item[(1)] all rules in $\RPLnull$ are sound and 
\item[(2)] for any unsatisfiable formula $\forall F$ and $F \in \PLnull$ there exists a $\RPLnull$-derivation of $\seq$ from axioms of the form $\seq F\vartheta$ where $\vartheta$ is a renaming of $V(F)$.
\end{itemize}
\end{proposition}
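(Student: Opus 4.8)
The plan is to prove the two claims separately, since soundness and refutational completeness require rather different machinery.

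For soundness (part (1)), I would check each rule individually. The connective-elimination rules are the standard sequent-calculus decomposition rules read in a specific direction, so soundness here amounts to the familiar observation that for each rule the conclusion is a logical consequence of the premise(s): e.g. for $\AndrI$, if $\Gamma \seq \Delta, A \land B$ is valid then so is $\Gamma \seq \Delta, A$, because any model falsifying $\Gamma \seq \Delta, A$ also falsifies $\Gamma \seq \Delta, A \land B$. One runs through $\AndrII$, $\Andl$, $\Orr$, $\OrlI$, $\OrlII$, $\negr$, $\negl$ in the same routine fashion. The only rule needing real care is $\res$: here I would argue that $\vartheta$ being an m.g.u.\ of the resolved atoms, together with the variable-disjointness condition $V(\{A_1,\ldots,A_k\}) \cap V(\{B_1,\ldots,B_l\}) = \emptyset$, guarantees that any instance of the two premises yields the conclusion by the usual ground resolution argument composed with the substitution lemma for first-order substitutions. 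The variable-disjointness is exactly what lets us treat the two premises as standardized apart so that applying $\vartheta$ globally is legitimate.

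For refutational completeness (part (2)), the cleanest route is to reduce to known completeness of ground/clausal resolution rather than to reprove it from scratch. Given an unsatisfiable $\forall F$ with $F \in \PLnull$, I would first use the connective-elimination rules to simulate clause-normal-form conversion: starting from axioms $\seq F\vartheta$, the rules $\AndrI,\AndrII,\Orr,\negr$ and their left-hand companions let one decompose $F$ into its clausal constituents, effectively deriving the clauses of a CNF of $F$ as sequents. Then, since $\forall F$ is unsatisfiable, by Herbrand's theorem and completeness of resolution there is a resolution refutation of the clause set; I would show that each step of that refutation is mirrored by an application of the $\res$ rule (possibly after using renamed axiom copies $\seq F\vartheta$ to supply variable-disjoint variants), terminating in the empty sequent $\seq$.

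The main obstacle, and the part deserving the most attention, is showing that the dynamic normalization rules genuinely suffice to extract the clausal structure without losing completeness — in particular that applying the Andrews-style elimination rules non-deterministically still reaches every clause needed, and that the unrestricted resolution rule (resolving on arbitrary quantifier-free formulas, not just atoms, as the paper emphasizes) remains complete. I would handle this by an argument that unrestricted resolution on compound formulas can always be replaced by, or is subsumed by, resolution on atoms after full decomposition, so that completeness follows from the atomic case; alternatively one invokes the completeness result for the Andrews system directly and notes that our calculus only adds power. Since the proposition is stated without proof in the excerpt, I expect the intended justification to be exactly this appeal to soundness of each rule plus a reduction to the standard completeness of resolution augmented by Andrews' normalization.
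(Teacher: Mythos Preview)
Your approach is essentially the same as the paper's: for (1) verify that each rule preserves models, and for (2) use the elimination rules to decompose $\seq F$ into its standard clause set and then invoke the known refutational completeness of first-order resolution. The paper's proof is terser but follows exactly this outline; your worry about resolution on compound formulas is unnecessary here, since after full decomposition all resolved formulas are atomic and only the classical atomic case of $\res$ is needed.
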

\begin{proof}
(1) is trivial: if $\m$ is a model of the premis(es) of a rule then $\m$ is also a model of the conclusion.\\[1ex]
For proving (2) we first derive the standard clause set $\Ccal$ of $F$. Therefore, we apply the rules of $\RPLnull$ to $\vdash F$, decomposing $F$ into its subformulas, until we cannot apply any rule other than the resolution rule $res$. The last subformula obtained in this way is atomic and hence a clause.
The standard clause set $\Ccal$ of $F$ is comprised of the clauses obtained in this way.
As $\forall F$ is unsatisfiable, its standard clause set is refutable by resolution. Thus, we apply $res$ to the clauses and obtain $\vdash$. The whole derivation lies in $\RPLnull$. 
\end{proof}
We will extend $\RPLnull$ by rules handling schematic formula definitions. But we have to consider another aspect as well: in inductive proofs the use of lemmas is vital, i.e. an ordinary refutational calculus (which has just a weak capacity of lemma generation) may fail to derive the desired invariant. To this aim we extend the calculus by adding some tautological sequent schemata which enrich $\RPLnull$ (which only decomposes formulas) by the potential to derive more complex formulas. Note that our aim is to use the calculi in an interactive way and not fully automatic, which justifies this process of ``anti-refinement''. \\[1ex]
In extending $\RPLnull$ to  a schematic calculus we have to replace unification by s-unification. Formally we have to define how s-substitutions are extended to formula schemata and sequent schemata.
\begin{definition}\label{def.ssubsFS}
Let $\Theta$ be an s-substitution. We define $F\Theta$ for all $F \in \FS$ which do not contain formula variables.
\begin{itemize}
\item Let $P\colon \iota^\alpha \to o \in \Pcal$ and $t_1,\ldots, t_{\alpha} \in T^\iota$. Then $P(t_1,\ldots,t_\alpha)\Theta = P(t_1\Theta,\ldots,$ $t_\alpha\Theta)$
\item Let $\Phat \in \Pcal^\tau$ for $\tau\colon (\omega^{\gamma_1} \to  \iota) \times \ldots \times (\omega^{\gamma_\alpha} \to  \iota) \times \omega^{\beta+1} \to o$, $X_1,\ldots,X_\alpha \in V^G$, $t_1,\ldots,t_{\beta+1} \in T^\omega$ then 
$$\Phat(X_1,\ldots,X_\alpha,t_1,\ldots,t_{\beta+1})\Theta = \Phat(X_1,\ldots,X_\alpha,t_1\Theta\ldots,t_{\beta+1}\Theta).$$
\item  $(\neg F)\Theta = \neg F\Theta$.
\item  If $F_1,F_2 \in \FS$ then 
\begin{eqnarray*}
(F_1 \land F_2)\Theta &=& F_1\Theta \land F_2\Theta,\\ 
(F_1 \lor F_2)\Theta &=& F_1\Theta \lor F_2\Theta.
\end{eqnarray*}
\end{itemize}
Let $S\colon A_1,\ldots, A_\alpha \seq B_1,\ldots,B_\beta$ be a sequent schema. Then 
$$S\Theta =  A_1\Theta,\ldots, A_\alpha\Theta \seq B_1\Theta,\ldots,B_\beta\Theta.$$
\end{definition}
In the resolution rule we have to take care that the sets of variables in $\{A_1,\ldots,A_k\}$ and $\{B_1,\ldots,B_l\}$ are pairwise disjoint. We need a corresponding concept of disjointness for the schematic case.
\begin{definition}[essentially disjoint]\label{def.essdisjoint}
Let $\Acal,\Bcal$ be finite sets of schematic variables in $T^\iota_V$. $\Acal$ and $\Bcal$ are called {\em essentially disjoint} if for all $\sigma \in \Apar$ $\Acal[\sigma] \intrs \Bcal[\sigma] = \emptyset$.
\end{definition} 
\begin{definition}[$\RPLnull^\Psi$]\label{def.RPLnullpsi}
Let $\Psi$ be a schematic formula definition as in Definitions~\ref{def.formschem} and~\ref{def.defeqPS} where
$$\Phat(\vec{Y},\vec{n},0) =F_B,\quad \Phat(\vec{Y},\vec{n},s(m)) = F_S\{\xi \ass \Phat(\vec{Y},\vec{n},m)\},$$
then $\RPLnull^\Psi$ is the extension of $\RPLnull$ by the rules 
\[
\infer[B\Phat r]{\Gamma \seq \Delta, F_B}
  {\Gamma \seq \Delta, \Phat(\vec{Y},\vec{n},0)
  }
\ \ 
\infer[S\Phat r]{\Gamma \seq \Delta,F_S\{\xi \ass \Phat(\vec{Y},\vec{n},m)\} }
 { \Gamma \seq \Delta,\Phat(\vec{Y},\vec{n},s(m))
 }
\]
\[
\infer[B\Phat l]{F_B, \Gamma \seq \Delta}
  {  \Phat(\vec{Y},\vec{n},0),\Gamma \seq \Delta
  }
\ \ 
\infer[S\Phat l]{F_S\{\xi \ass \Phat(\vec{Y},\vec{n},m)\}, \Gamma \seq \Delta}
 {\Phat(\vec{Y},\vec{n},s(m)) ,\Gamma \seq \Delta
 }
\]
for the elimination of defined symbols. For the introduction of defined symbols we invert the rules above:
\[
\infer[B\Phat r^+]{\Gamma \seq \Delta, \Phat(\vec{Y},\vec{n},0)}
  {\Gamma \seq \Delta,F_B}
\ \ 
\infer[S\Phat r^+] { \Gamma \seq \Delta,\Phat(\vec{Y},\vec{n},s(m))}
 {\Gamma \seq \Delta,F_S\{\xi \ass \Phat(\vec{Y},\vec{n},m)\} }
\]
\[
\infer[B\Phat l^+]{  \Phat(\vec{Y},\vec{n},0),\Gamma \seq \Delta} 
{F_B, \Gamma \seq \Delta} 
\ \ 
\infer[S\Phat l^+] {\Phat(\vec{Y},\vec{n},s(m)) ,\Gamma \seq \Delta} 
{F_S\{\xi \ass \Phat(\vec{Y},\vec{n},m)\}, \Gamma \seq \Delta} 
\]
We also adapt the resolution rule to the schematic case: \\
Let $T^\iota_V(\{A_1,\ldots,A_\alpha\}), T^\iota_V(\{B_1,\ldots,B_\beta\})$ be essentially disjoint sets of schematic variables and $\Theta$ be an s-unifier of $\{A_1,\ldots,A_\alpha,B_1,\ldots,B_\beta\}$. Then the resolution rule is defined as  
\[
\infer[\res]{\Gamma\Theta,\Pi\Theta \seq \Delta\Theta,\Lambda\Theta} 
{\Gamma \seq \Delta,A_1,\ldots,A_\alpha
  &
  B_1,\ldots,B_\beta,\Pi \seq \Lambda
}
\]
Moreover we add the following tautological sequent schemata ($\xi_1,\xi_2$ are formula variables): $\xi_1,\xi_2 \seq \xi_1 \land \xi_2$, $\xi_1 \land \xi_2 \seq \xi_1$, $\xi_1 \land \xi_2 \seq \xi_2$, $\xi_1 \seq \xi_1 \lor \xi_2$, $\xi_2 \seq \xi_1 \lor \xi_2$, $\seq \xi_1,\neg \xi_1$, $\xi_1,\neg \xi_1 \seq$. For comfort we may add arbitrary tautological sequent schemata to increase the flexibility and the practical use of the calculus.

\end{definition}
\begin{remark}
It is easy to see that the added tautology schemata together with the cut rule simulate the logical introduction rules for $\land,\lor,\neg$. As $\res$ generalizes the cut rule this is possible also in $\RPLnull^\Psi$. We could instead have added the introduction rules themselves which is logically equivalent. But note that adding additional tautology schemata (besides these defined above) increases the flexibility of rule specification via ``macros''. 
\end{remark}
The  refutational completeness of $\RPLnull^\Psi$ is not as issue as already $\RPLnull$ is refutationally complete. $\RPLnull^\Psi$ is also sound if the defining equations are considered.
\begin{proposition}\label{prop.RPLnull-sound}
Assume that the sequent $S$ is derivable in $\RPLnull^\Psi$. Then $D(\Fcalhat_\omega)$ $\union D(\Fcalhat_\iota) \union D(\Pcalhat) \models S$.
\end{proposition}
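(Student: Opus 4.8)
The plan is to prove soundness by induction on the length of the $\RPLnull^\Psi$-derivation of $S$, showing that every rule preserves the semantic consequence relation $D(\Fcalhat_\omega) \union D(\Fcalhat_\iota) \union D(\Pcalhat) \models \cdot$. Since the claim is an entailment under all parameter assignments, I would fix an arbitrary $\sigma \in \Apar$ together with a model $\m$ of the defining equations and argue that $\m$ satisfies $\sigma(S)\Eval_o$ whenever it satisfies the evaluated premises of the last rule applied.

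First I would dispatch the base case: the axioms of $\RPLnull$ are sequents $\seq F$, which are tautological and hence entailed by any theory, in particular by the union of the defining-equation sets. For the inductive step I would split into three groups of rules. The purely logical elimination and resolution rules are already known to be sound from Proposition~\ref{prop.RPLnull-complete}(1) in the first-order setting; here one additionally checks that s-unification behaves correctly, i.e. that if $\Theta$ is an s-unifier of $\{A_1,\ldots,A_\alpha,B_1,\ldots,B_\beta\}$ then by Definition~\ref{def.sunifier} the instantiated atoms coincide under every $\sigma$, so the resolution step collapses to an ordinary sound resolution step on $\sigma(\cdot)\Eval_o$, using that $T^\iota_V(\{A_1,\ldots,A_\alpha\})$ and $T^\iota_V(\{B_1,\ldots,B_\beta\})$ are essentially disjoint (Definition~\ref{def.essdisjoint}) so the variable-disjointness requirement of the first-order resolution rule holds after evaluation. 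The tautological sequent schemata ($\xi_1,\xi_2 \seq \xi_1 \land \xi_2$, etc.) evaluate to ordinary propositional tautologies for every $\sigma$ and are therefore entailed.

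The interesting group is the rules $B\Phat r, S\Phat r, B\Phat l, S\Phat l$ and their inverted forms $B\Phat r^+$ etc. For these I would invoke the evaluation semantics of Definition~\ref{def.EvalFS}. The key observation is that for any $\sigma$, if $\sigma(t_{\beta+1})\Eval_\iota = \bar{0}$ then $\sigma(\Phat(\vec{Y},\vec{n},0))\Eval_o = \sigma(F_B)\Eval_o$, and if $\sigma(t_{\beta+1})\Eval_\iota = s(\bar p)$ then $\sigma(\Phat(\vec{Y},\vec{n},s(m)))\Eval_o = \sigma(F_S\{\xi \ass \Phat(\vec{Y},\vec{n},m)\})\Eval_o$. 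Hence in any model $\m$ of $D(\Pcalhat)$ the formula $\sigma(\Phat(\ldots,0))\Eval_o$ and $\sigma(F_B)\Eval_o$ are literally the same evaluated formula, and likewise for the successor case; so replacing one by the other in a sequent leaves its truth value in $\m$ unchanged. This makes every elimination/introduction rule a replacement of a subformula by a semantically identical one, which trivially preserves entailment in both directions, justifying both the rules and their inverses simultaneously.

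The main obstacle I expect is not any single rule but the careful bookkeeping needed to make ``same evaluated formula'' precise across the interaction of the three evaluation relations $\Eval_\omega$, $\Eval_\iota$ and $\Eval_o$, together with the s-substitution applied in the resolution and defined-symbol rules. In particular one must verify that s-substitution commutes appropriately with evaluation so that $\sigma(F\Theta)\Eval_o = (\sigma(F)\Eval_o)\Theta[\sigma]$ in the relevant cases, which is what licenses treating the schematic resolution step as a genuine first-order resolution step under each $\sigma$. I would isolate this commutation as the crux, establish it by induction on the structure of $F$ using Definition~\ref{def.appssub} and the evaluation definitions, and then let the rest of the soundness argument follow routinely by the subformula-replacement principle above.
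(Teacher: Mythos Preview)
Your overall strategy---induction on the derivation, showing each rule preserves entailment modulo the defining equations---is exactly what the paper does, though the paper compresses the whole argument to two sentences: the $\Phat$-introduction/elimination rules are sound w.r.t.\ $D(\Pcalhat)$, and the (s-unification) resolution rule is sound w.r.t.\ $D(\Fcalhat_\omega)\cup D(\Fcalhat_\iota)$. Your elaboration of the $\Phat$-rules via Definition~\ref{def.EvalFS} and of resolution via the commutation $\sigma(F\Theta)\Eval_o = (\sigma(F)\Eval_o)\,\Theta[\sigma]$ is the right way to unpack that sketch.

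There is, however, a slip in your base case. The axioms of $\RPLnull$ (and hence of $\RPLnull^\Psi$) are sequents $\seq F$ for \emph{arbitrary} $F\in\PLnull$; they are not tautological. If such axioms are admitted unrestrictedly, the proposition as literally stated fails: take $F$ contradictory, then $\seq F$ is an axiom but is certainly not entailed by the (consistent) set of defining equations. The paper's own proof quietly sidesteps this by arguing only the soundness of the \emph{rules} and of the added tautological schemata; what is really being established is rule-soundness relative to $D$, i.e.\ that each inference preserves $D\models\cdot$. You should either adopt this reading explicitly---treating the proposition as a statement about the rules and the tautological axiom schemata, not about derivability from arbitrary $\seq F$---or reformulate your base case to take $D\models{}\seq F$ as a hypothesis for each axiom instance used, which is how the result is actually deployed in the refutational applications later in the paper.
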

\begin{proof}
The introduction and elimination rules for defined predicate symbols are sound with respect to $D(\Pcalhat)$; the resolution rule (involving s-unification ) is sound with respect to $D(\Fcalhat_\omega) \union D(\Fcalhat_\iota)$.
\end{proof} 
\begin{definition}\label{def.cutderiv}
An $\RPLnull^\Psi$ derivation $\varrho$ is called a {\em cut-derivation} if the s-unifiers of all resolution rules are empty.
\end{definition}
\begin{remark}
A cut-derivation is an $\RPLnull^\Psi$ derivation with only propositional rules. Such a derivation can be obtained by combining all unifiers to a global unifier.
\end{remark}
In computing global unifiers we have to apply s-substitutions to proofs. However, not every s-substitution applied to a $\RPLnull^\Psi$ derivation results in a $\RPLnull^\Psi$ derivation again. Just assume that an s-unifier in a resolution is of the form $(X_1(s),X_2(s'))$; if $\Theta = \{(X_1(s),a),(X_2(s'),b)\}$ for different constant symbols $a,b$ then $X_1(s)\Theta$ and $X_2(s')\Theta$ are no longer unifiable and the resolution is blocked. 
\begin{definition}\label{def.subsadmit}
Let $\rho$ be a derivation in $\RPLnull^\Psi$ which does not contain the resolution rule; then for any s-substitution $\Theta$ $\rho\Theta$ is the derivation in which every sequent occurrence $S$ is replaced by $S\Theta$. We say that $\Theta$ is admissible for $\rho$. Now let $\rho=$
\[
\infer[\res]{\Gamma\Theta',\Pi\Theta' \seq \Delta\Theta',\Lambda\Theta'} 
{\deduce{\Gamma \seq \Delta,A_1,\ldots,A_\alpha}{(\rho_1)}
  &
  \deduce{B_1,\ldots,B_\beta,\Pi \seq \Lambda}{(\rho_2)}
}
\]
where $\Theta'$ is an s-unifier of $\{A_1,\ldots,A_\alpha,B_1,\ldots,B_\beta\}$. 
Let us assume that $\Theta$ is admissible for $\rho_1$ and $\rho_2$. We define that $\Theta$ is admissible for $\rho$ if the set 
$$U\colon \{A_1\Theta,\ldots,A_\alpha\Theta,B_1\Theta,\ldots,B_\beta\Theta\}$$
is s-unifiable. If $\Theta^*$ is an s-unifier of $U$ then we can define $\rho\Theta$ as 
\[
\infer[\res]{\Gamma\Theta\Theta^*,\Pi\Theta\Theta^* \seq \Delta\Theta\Theta^*,\Lambda\Theta\Theta^*} 
{\deduce{\Gamma\Theta \seq \Delta\Theta,A_1\Theta,\ldots,A_\alpha\Theta}{(\rho_1\Theta)}
  &
  \deduce{B_1\Theta,\ldots,B_\beta\Theta,\Pi\Theta \seq \Lambda\Theta}{(\rho_2\Theta)}
}
\]
\end{definition}
\begin{definition}\label{def.globunif}
Let $\varrho$ be an $\RPLnull^\Psi$ derivation and $\Theta$ be an s-substitution which is admissible for $\varrho$. $\Theta$ is called a {\em global unifier} for $\varrho$ if $\varrho\Theta$ is a cut-derivation.
\end{definition}
In order to compute global unifiers we need $\RPLnull^\Psi$ derivations in some 
kind of ``normal form''. Below we define two necessary restrictions on derivations.
\begin{definition}\label{def.normderiv}
An $\RPLnull^\Psi$ derivation $\varrho$ is called {\em normal} if all s-unifiers of resolution rules in $\varrho$ are normal and restricted.
\end{definition}
\begin{remark}
Note that, in case of s-unifiability, we can always find normal and restricted s-unifiers; thus the definition above does not really restrict the derivations, it only requires some renamings.
\end{remark}
\begin{definition}\label{def.regulderiv}
An $\RPLnull^\Psi$ derivation $\varrho$ is called {\em regular} if for all subderivations $\varrho'$ of $\varrho$ of the form 
\[
\infer[res(\Theta)]{\Gamma\Theta,\Pi\Theta \seq \Delta\Theta,\Lambda\Theta}
 { \deduce{\Gamma \seq \Delta,A_1,\ldots, A_\alpha}{(\varrho'_1)}
   &
  \deduce{B_1,\ldots,B_\beta,\Pi \seq \Lambda}{(\varrho'_2)}
 }
\]
we have $V^G(\varrho'_1) \intrs V^G(\varrho'_2) = \emptyset$
\end{definition}
Note that the condition  $V^G(\varrho'_1) \intrs V^G(\varrho'_2) = \emptyset$ in Definition~\ref{def.regulderiv}  guarantees that, for all parameter assigments $\sigma$, $\varrho'_1[\sigma]$ and $\varrho'_2[\sigma]$ are variable-disjoint. \\[1ex]
We write $\varrho' \subsumes \varrho$ if there exists an s-substitution $\Theta$ such that $\varrho'\Theta = \varrho$. 
\begin{proposition}\label{prop.normreg}
Let $\varrho$ be a normal   $\RPLnull^\Psi$ derivation. Then there exists a $\RPLnull^\Psi$ derivation $\varrho'$ such that $\varrho' \subsumes \varrho$ and $\varrho'$ is normal and regular.
\end{proposition}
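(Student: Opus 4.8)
The plan is to proceed by induction on the structure of $\varrho$, maintaining at each stage a more general derivation that is already normal and regular and from which $\varrho$ is recovered by a single admissible s-substitution. If $\varrho$ is an axiom $\seq F$ it contains no resolution inference, hence is vacuously regular and normal by hypothesis, and we take $\varrho' = \varrho$ with the empty s-substitution. If $\varrho$ ends in a unary rule (a connective or negation rule, or one of the defined-predicate rules $B\Phat r, S\Phat r,\dots$) with immediate subderivation $\varrho_1$, the induction hypothesis supplies a normal, regular $\varrho'_1$ with $\varrho'_1\Theta_1 = \varrho_1$; we reapply the same rule to $\varrho'_1$ to obtain $\varrho'$ and keep $\Theta=\Theta_1$. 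Since a unary rule introduces no resolution node, the resolution subderivations of $\varrho'$ are exactly those of $\varrho'_1$, so regularity and normality are inherited. The only point to verify is that the last rule commutes with $\Theta_1$; for the connective and negation rules this is immediate from Definition~\ref{def.ssubsFS}, and for the defined-predicate rules one uses that an s-substitution leaves the $\omega$-arguments and the global-variable arguments of $\Phat$ untouched and acts uniformly on $F_B,F_S$.

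The interesting case is when $\varrho$ ends in $\res$ with premises $\varrho_1$ (concluding $\Gamma \seq \Delta, A_1,\ldots,A_\alpha$) and $\varrho_2$ (concluding $B_1,\ldots,B_\beta,\Pi \seq \Lambda$) and s-unifier $\Theta_0$, which is normal and restricted since $\varrho$ is normal. Applying the induction hypothesis to each premise yields normal, regular $\varrho'_1,\varrho'_2$ with $\varrho'_1\Theta_1 = \varrho_1$ and $\varrho'_2\Theta_2 = \varrho_2$. To secure regularity of the final node we rename the global-variable symbols of $\varrho'_2$ to fresh symbols not occurring in $\varrho'_1$, obtaining $\varrho''_2$. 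A symbol-level renaming of global variables preserves normality and regularity (essential distinctness depends only on the index tuples, and the disjointness conditions transform consistently), and the recovering substitution is adapted by renaming the domain global variables of $\Theta_2$ accordingly, giving an s-substitution $\Theta'_2$ with $\varrho''_2\Theta'_2 = \varrho_2$. Now $V^G(\varrho'_1) \intrs V^G(\varrho''_2) = \emptyset$.

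Because these global-variable sets are disjoint, $\Theta_1$ and $\Theta'_2$ have domains over disjoint global variables, so their union $\Theta$ is again a legitimate s-substitution, with $\varrho'_1\Theta = \varrho_1$ and $\varrho''_2\Theta = \varrho_2$. Writing $A'_i$ for the atoms of $\varrho'_1$ and $B''_j$ for those of $\varrho''_2$, we have $A'_i\Theta = A_i$ and $B''_j\Theta = B_j$, so $\Theta$ followed by $\Theta_0$ is an s-unifier of $\{A'_1,\ldots,A'_\alpha,B''_1,\ldots,B''_\beta\}$; in particular this set is s-unifiable. We therefore choose a normal, restricted s-unifier for it (by the remark following Definition~\ref{def.sunifnormal}) and apply $\res$ to $\varrho'_1$ and $\varrho''_2$, yielding $\varrho'$, which is normal and regular by construction. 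Finally $\Theta$ is admissible for $\varrho'$, since $\{A'_i\Theta, B''_j\Theta\} = \{A_i,B_j\}$ is s-unifiable via $\Theta_0$; choosing $\Theta_0$ as the residual unifier $\Theta^*$ permitted in Definition~\ref{def.subsadmit} gives $\varrho'\Theta = \varrho$, so $\varrho' \subsumes \varrho$.

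The main obstacle is the resolution case, and within it the lifting of s-unification: one must argue that after generalizing the two premises and standardizing their global variables apart, the corresponding atoms remain s-unifiable, and that the substitution recovering the original derivation can be assembled as a single admissible normal s-substitution. This is precisely where the earlier machinery is needed — the normality and composability results for s-substitutions, the essential-distinctness bookkeeping that keeps the union $\Theta_1 \union \Theta'_2$ a valid s-substitution, and the freedom in the choice of residual unifier $\Theta^*$ granted by Definition~\ref{def.subsadmit}.
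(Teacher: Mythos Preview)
Your inductive argument by variable renaming is exactly the approach the paper intends; the paper's own proof is the single sentence ``By renaming of variables in subproofs and in s-unifiers,'' so your write-up is a considerable elaboration of what the authors leave implicit.

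There is one soft spot worth flagging. In the resolution case you rename the global variables of $\varrho'_2$ to fresh symbols and then claim that ``renaming the domain global variables of $\Theta_2$ accordingly'' yields an s-substitution $\Theta'_2$ with $\varrho''_2\Theta'_2 = \varrho_2$. This is not quite enough: any V-term $X(\vec{s})$ occurring in $\varrho'_2$ but \emph{outside} $\dom(\Theta_2)$ becomes $X'(\vec{s})$ in $\varrho''_2$ and must be sent back to $X(\vec{s})$, so $\Theta'_2$ must also contain pairs of the form $(X'(\vec{s}),X(\vec{s}))$. For this enlarged $\Theta'_2$ to be an s-substitution in the sense of the paper, all such index tuples $\vec{s}$ over the same $X'$ must be pairwise essentially distinct, which need not hold in general (e.g.\ $X(n)$ and $X(m)$ with distinct parameters $n,m$). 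The paper is aware of this tension---see the footnote in Example~\ref{ex.2parametersCont} remarking that the global-variable renaming used there ``is no s-unifier but a (higher-order) renaming to ensure regularity''---and simply works modulo such higher-order renamings. So the gap is in the formalism rather than in your strategy, but you should not assert that the recovering map is literally an s-substitution without comment.
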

\begin{proof}
By renaming of variables in subproofs and in s-unfiers. 
\end{proof}
\begin{proposition}\label{prop.golbalunif}
Let $\varrho$ be a normal and regular $\RPLnull^\Psi$ derivation. Then there exists a global s-unifier $\Theta$ for $\varrho$ which is normal and $V^G(\Theta) \IN V^G(\varrho)$.
\end{proposition}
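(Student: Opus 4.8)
The plan is to prove the statement by induction on the number of resolution inferences in $\varrho$, merging the local s-unifiers of the individual resolution rules into a single global s-substitution. This mirrors exactly the classical fact that the local most general unifiers of a first-order resolution refutation can be composed into one global unifier whose application turns every resolution into a cut; here the composition must be carried out schematically using $\star$ and justified via Proposition~\ref{prop.ssubcomposition}. In the base case $\varrho$ contains no resolution rule, so it is already a cut-derivation and the empty s-substitution serves as $\Theta$; it is trivially normal and satisfies $V^G(\Theta) = \emptyset \IN V^G(\varrho)$. If the last inference of $\varrho$ is a connective-elimination, defined-symbol, or tautology rule (i.e. any non-resolution rule), then it imposes no unification constraint and commutes with s-substitution, so the global unifier $\Theta_1$ obtained for the premise derivation by the induction hypothesis is admissible for $\varrho$ by Definition~\ref{def.subsadmit}, and $\varrho\Theta_1$ is again a cut-derivation with the normality and containment properties inherited directly.

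The essential case is when the last inference is a resolution with local s-unifier $\Theta_0$ acting on $\{A_1,\ldots,A_\alpha,B_1,\ldots,B_\beta\}$, with subderivations $\varrho_1$ (ending in $\Gamma \seq \Delta, A_1,\ldots,A_\alpha$) and $\varrho_2$ (ending in $B_1,\ldots,B_\beta,\Pi \seq \Lambda$). By the induction hypothesis there are normal global unifiers $\Theta_1,\Theta_2$ with $V^G(\Theta_i) \IN V^G(\varrho_i)$. Regularity (Definition~\ref{def.regulderiv}) gives $V^G(\varrho_1) \intrs V^G(\varrho_2) = \emptyset$, hence the domains and ranges of $\Theta_1$ and $\Theta_2$ are pairwise global-variable-disjoint; together with normality this makes $(\Theta_1,\Theta_2)$ composable, so $\Theta_{12} := \Theta_1 \star \Theta_2$ is a well-defined normal s-substitution (by the proposition that composition of composable normal s-substitutions is normal), and in fact it coincides with the disjoint union of the two. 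Because $\varrho_1$ contains no global variable from $\dom(\Theta_2)$ and vice versa, $\Theta_{12}$ acts as $\Theta_1$ on $\varrho_1$ and as $\Theta_2$ on $\varrho_2$, so $\varrho_1\Theta_{12}$ and $\varrho_2\Theta_{12}$ are both cut-derivations. It then remains to take an s-unifier $\Theta_0'$ of $\{A_1\Theta_{12},\ldots,A_\alpha\Theta_{12}, B_1\Theta_{12},\ldots,B_\beta\Theta_{12}\}$, restricted to the global variables of these atoms, and to set $\Theta := \Theta_{12} \star \Theta_0'$. By Proposition~\ref{prop.ssubcomposition}, for every $\sigma \in \Apar$ the atoms satisfy $A_i\Theta = A_i\Theta_{12}\Theta_0' = B_j\Theta_{12}\Theta_0' = B_j\Theta$, so the bottom resolution becomes a cut with empty local unifier; and since a cut-derivation uses only propositional rules, which are stable under substitution, applying $\Theta_0'$ to $\varrho_1\Theta_{12}$ and $\varrho_2\Theta_{12}$ leaves them cut-derivations. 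Thus $\varrho\Theta$ is a cut-derivation, i.e. $\Theta$ is a global unifier; normality follows again from normality-preservation under composition, and $V^G(\Theta) \IN V^G(\varrho_1) \cup V^G(\varrho_2) \IN V^G(\varrho)$ from the containment of each piece.

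I expect the main obstacle to be the justification that $\Theta_0'$ actually exists, i.e. that applying the subderivation unifiers $\Theta_1,\Theta_2$ does not destroy the s-unifiability of the cut atoms that was witnessed by $\Theta_0$. The clean way to secure this is the observation (the schematic analogue of idempotency of a composed resolution unifier) that the domain variables of $\Theta_i$ are precisely those eliminated inside $\varrho_i$ and therefore do not occur in the end-sequent atoms $A_i,B_j$; consequently $A_i\Theta_{12} = A_i$ and $B_j\Theta_{12} = B_j$, so $\Theta_0'$ may be taken to be $\Theta_0$ itself. This same disjointness of $\dom(\Theta_{12})$ from the variables occurring in the cut atoms is exactly what is needed to verify the two composability conditions for the pair $(\Theta_{12},\Theta_0')$, so it is worth isolating this elimination property as a preliminary lemma before assembling the composition. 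The remaining verifications—that the non-resolution rules genuinely commute with s-substitution (which for the defined-symbol rules uses that s-substitution only touches $V$-term positions and not the recursion defining $F_B,F_S$), and that admissibility in the sense of Definition~\ref{def.subsadmit} is met at every node—are routine once this lemma is in place.
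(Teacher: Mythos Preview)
Your proposal is correct and follows essentially the same inductive strategy as the paper: obtain global unifiers $\Theta_1,\Theta_2$ for the two subderivations, exploit regularity to make them variable-disjoint, and then combine them with the local s-unifier $\Theta_0$ of the last resolution. The only cosmetic difference is the order of assembly: the paper forms $\Theta_1 \star \Theta_0$ and $\Theta_2 \star \Theta_0$ separately and takes their union, whereas you first form $\Theta_{12}=\Theta_1\cup\Theta_2$ and then compose $\Theta_{12}\star\Theta_0$; since $\Theta_1,\Theta_2$ are global-variable-disjoint these yield the same s-substitution.

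One point worth noting: the ``elimination lemma'' you isolate---that the domain variables of the inductively built $\Theta_i$ do not occur in the end-sequent of $\varrho_i$---is exactly what the paper invokes when it writes ``As $\Theta_1$ is normal we have $V^\iota(\{A_1,\ldots,A_\alpha\}[\sigma]) \cap \dom(\Theta_1[\sigma]) = \emptyset$.'' The paper attributes this to normality of $\Theta_1$ alone, but normality in the sense of Definition preceding Proposition~\ref{prop.normalssubdec} only gives $\dom(\Theta_1[\sigma])\cap V^\iota(\rg(\Theta_1[\sigma]))=\emptyset$, not disjointness from the end-sequent variables. Your plan to carry this disjointness as an explicit inductive invariant (using normality of each local unifier together with regularity and restrictedness at every resolution node) is the right way to close this step cleanly, and is in fact a sharpening of the paper's presentation rather than a deviation from it.
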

\begin{proof}
By induction on the number of inferences in $\varrho$.\\[1ex]
Induction base: $\varrho$ is an axiom. $\emptyset$ is a global s-unifier which trivially fulfils the properties.\\[1ex]
For the induction step we distinguish two cases.
\begin{itemize}
\item The last rule in $\varrho$ is unary. Then $\varrho$ is of the form 
\[
\infer[\xi]{\Gamma \seq \Delta}
 { \deduce{\Gamma' \seq \Delta'}{(\varrho')}
 } 
\]
By induction hypothesis there exists a global substitution $\Theta'$ which is a global unifier for $\varrho'$ such that $\Theta'$ is normal and 
$V^G(\Theta') \IN V^G(\varrho')$. We define $\Theta = \Theta'$. Then, trivially, $\Theta$ is normal and a global unifier of $\varrho$. Moreover, by definition of the unary rules in $\RPLnull^\Psi$, we have $V^G(\varrho') = V^G(\varrho)$ and so $V^G(\Theta) \IN V^G(\varrho)$.
\item $\varrho$ is of the form
\[
\infer[res(\Theta)]{\Gamma\Theta,\Pi\Theta \seq \Delta\Theta,\Lambda\Theta}
 { \deduce{\Gamma \seq \Delta,A_1,\ldots,A_\alpha}{(\varrho_1)}
   &
  \deduce{B_1,\ldots,B_\beta,\Pi \seq \Lambda}{(\varrho_2)}
} 
\]
As $\varrho$ is a normal $\RPLnull^\Psi$ derivation the unifier $\Theta$ is normal. By regularity of $\varrho$ we have $V^G(\varrho_1) \intrs V^G(\varrho_2) = \emptyset$.\\[1ex]
By induction hypothesis there exist global normal unifiers $\Theta_1,\Theta_2$ for  $\varrho_1$ and $\varrho_2$ such that $V^G(\Theta_1) \IN V^G(\varrho_1)$ and $V^G(\Theta_2) \IN V^G(\varrho_2)$. By  $V^G(\varrho_1) \intrs V^G(\varrho_2) = \emptyset$ we also have $V^G(\Theta_1) \intrs V^G(\Theta_2) = \emptyset$.\\[1ex]
 We show now that $(\Theta_1,\Theta)$ and $(\Theta_2,\Theta)$ are composable. 
As $\Theta_1$ is normal we have for all $\sigma \in \Apar$ 
$$V^\iota(\{A_1,\ldots,A_\alpha\}[\sigma]) \intrs \dom(\Theta_1[\sigma]) = \emptyset.$$
Similarly we obtain 
$$V^\iota(\{B_1,\ldots,B_\beta\}[\sigma]) \intrs \dom(\Theta_2[\sigma]) = \emptyset.$$
As $\Theta$ is normal and restricted we have for all $\sigma \in \Apar$ 
$$V^\iota(\Theta[\sigma]) \IN V^\iota(\sigma\{A_1,\ldots,A_\alpha,B_1,\ldots,B_\beta\}\Eval_o).$$
Therefore $(\Theta_1,\Theta)$ and $(\Theta_2,\Theta)$ are both composable. As $\Theta_1,\Theta_2,\Theta$ are normal so are $\Theta_1 \star \Theta$ and $\Theta_2 \star \Theta$. As $\Theta_1,\Theta_2$ are essentially disjoint we can define 
$$\Theta(\varrho) = \Theta_1 \star \Theta \union \Theta_2 \star \Theta.$$
$\Theta(\varrho)$ is a normal s-substitution and $V^G(\Theta(\varrho)) \IN V^G(\varrho)$.\\[1ex]
$\Theta(\varrho)$ is also a global unifier of $\varrho$. Indeed, $\varrho_1\Theta(\varrho)=$
\[
\deduce{\Gamma\Theta \seq \Delta\Theta,A_1\Theta,\ldots,A_1\Theta}{(\varrho_1\Theta(\varrho))}
\]
and 
$\varrho_2\Theta(\varrho)=$
\[
\deduce{A_1\Theta,\ldots,A_1\Theta,\Pi\Theta \seq \Lambda\Theta}{(\varrho_2\Theta(\varrho))}
\]
So we obtain the derivation 
\[
\infer[\cut]{\Gamma\Theta,\Pi\Theta \seq \Delta\Theta,\Lambda\Theta}
 { \varrho_1\Theta(\varrho)
   &
  \varrho_2\Theta(\varrho)
 }
\]
which is an instance of $\varrho$ and a cut derivation (note that every instance of a cut derivation is a cut derivation as well).

\item $\varrho$ is of the form
\begin{prooftree}
	\AxiomC{$(\varrho_1)$}
	\noLine
	\UnaryInfC{$\Gamma \vdash \Delta$}
	\AxiomC{$(\varrho_2)$}
	\noLine
	\UnaryInfC{$\Pi \vdash \Lambda$}
	\RightLabel{$\chi$}
	\BinaryInfC{$\Gamma', \Pi' \vdash \Delta', \Lambda'$}
\end{prooftree}
where $\chi$ is a binary introduction rule.

As $\varrho$ is a normal $\RPLnull^\Psi$ derivation all occurring $s$-unifiers in $\varrho_1$ and $\varrho_2$ are normal. By regularity of $\varrho$ we have that $V^G(\varrho_1) \intrs V^G(\varrho_2) = \emptyset$.\\[1ex]
By induction hypothesis there exist global normal unifiers $\Theta_1,\Theta_2$ for $\varrho_1$ and $\varrho_2$ such that $V^G(\Theta_1) \IN V^G(\varrho_1)$ and $V^G(\Theta_2) \IN V^G(\varrho_2)$. By  $V^G(\varrho_1) \intrs V^G(\varrho_2) = \emptyset$ we also have $V^G(\Theta_1) \intrs V^G(\Theta_2) = \emptyset$. Moreover, there is no overlap between the domain variables of the unifiers $\Theta_1$ and $\Theta_2$, i.e. $\dom(\Theta_1[\sigma]) \intrs \dom(\Theta_2(\sigma)) = \emptyset$ for all $\sigma \in \Apar$. Therefore, we can define $\Theta = \Theta_1 \union \Theta_2$, which is obviously a global s-unifier of $\varrho$. Furthermore, $V^G(\Theta) = V^G(\Theta_1) \union V^G(\Theta_2)$, therefore $V^G(\Theta) \IN V^G(\varrho_1) \union V^G(\varrho_2)$ and by definition of binary introduction rules in $\RPLnull^\Psi$, we have $V^G(\Theta) \IN V^G(\varrho)$.
\end{itemize}
\end{proof}

\begin{example}
\label{basecaseProof}
We provide a simple $\RPLnull^\Psi$ refutation using the schematic formula constructed in Example \ref{1smaform}. We will only cover the $\RPLnull^\Psi$ derivation of the base case and wait for the introduction of proof schemata to provide a full refutation. We abbreviate $X_1,\cdots,X_5$ by $\mathbf{X}$. 
\begin{tiny}
\begin{prooftree}
\AxiomC{$  \vdash   \hat{F_1}(\mathbf{X},0,0)$}
\RightLabel{$S\hat{F_1}r$}
\UnaryInfC{$  \vdash   \hat{F_2}(\mathbf{X},0,0) \wedge \hat{F_3}(\mathbf{X},0,0)$}
\RightLabel{$\wedge:r$}
\UnaryInfC{$  \vdash  \hat{F_2}(\mathbf{X},0,0)$}
\RightLabel{$B\hat{F_2}r$}
\UnaryInfC{$ \vdash \leqpred{X_4(0,0)}{0}\vee \eqpred{X_4(0,0)}{0} $}
\RightLabel{$\vee:r$}
\UnaryInfC{$ \vdash \leqpred{X_4(0,0)}{0}, \eqpred{X_4(0,0)}{0} $}
\RightLabel{$B\hat{S}r$}
\UnaryInfC{$ \vdash \leqpred{X_4(0,0)}{0}, \eqpred{X_4(0,0)}{0} $}
\noLine
\UnaryInfC{(2)}
\end{prooftree}
\begin{prooftree}
\AxiomC{(2)}
\AxiomC{$  \vdash   \hat{F_1}(\mathbf{X},0,0)$}
\RightLabel{$S\hat{F_1}r$}
\UnaryInfC{$  \vdash   \hat{F_2}(\mathbf{X},0,0) \wedge \hat{F_3}(\mathbf{X},0,0) $}
\RightLabel{$\wedge:r$}
\UnaryInfC{$  \vdash   \hat{F_3}(\mathbf{X},0,0)$}
\RightLabel{$S\hat{F_3}r$}
\UnaryInfC{$  \vdash   \hat{F_5}(X_{2}(0,0),\mathbf{X},0,0) \wedge \nleqpred{\mathbf{0}}{0}$}
\UnaryInfC{$  \vdash   \hat{F_5}(X_2(0,0),\mathbf{X},0,0)$}
\RightLabel{$B\hat{F_5}r$}
\UnaryInfC{$  \vdash   \neqpred{\hat{S}(X_2(0,0),0)}{0}$}
\RightLabel{$B\hat{S}r$}
\UnaryInfC{$  \vdash   \neqpred{X_2(0,0)}{0}$}
\RightLabel{$\neg:r$}
\UnaryInfC{$  \eqpred{X_2(0,0)}{0} \vdash $}
\RightLabel{Res$\left(\left\lbrace\begin{array}{c}  X_4(0,0)\leftarrow X_2(0,0) \end{array}\right\rbrace  \right) $}
\BinaryInfC{$\vdash \leqpred{X_4(0,0)}{0}$}
\noLine
\UnaryInfC{$(1)$}
\end{prooftree}

\begin{prooftree}
\AxiomC{$(1)$}
\noLine
\UnaryInfC{$ \vdash \leqpred{X_4(0,0)}{0}$}
\AxiomC{$  \vdash   \hat{F_1}(\mathbf{X},0,0)$}
\RightLabel{$S\hat{F_1}r$}
\UnaryInfC{$  \vdash   \hat{F_2}(\mathbf{X},0,0) \wedge \hat{F_3}(\mathbf{X},0,0) $}
\RightLabel{$\wedge:r$}
\UnaryInfC{$  \vdash  \hat{F_3}(\mathbf{X},0,0)$}
\RightLabel{$B\hat{F_3}r$}
\UnaryInfC{$ \vdash \hat{F_5}(X_1(0,0),\mathbf{X},0,0)\wedge \nleqpred{a}{0}$}
\RightLabel{$\wedge:r_2$}
\UnaryInfC{$  \vdash \nleqpred{a}{0}$}
\RightLabel{$\neg:r$}
\UnaryInfC{$  \leqpred{a}{0} \vdash$}
\RightLabel{Res$\left(\left\lbrace X_4(0,0)\leftarrow a \right\rbrace  \right) $}
\BinaryInfC{$\vdash$}
\noLine
\UnaryInfC{$(\delta_0, 0,0)$}

\end{prooftree}
\end{tiny}
\end{example}
Derivations in $\RPLnull^\Psi$ are defined like for $\RPLnull$. But derivations in $\RPLnull^\Psi$ with ordinary axioms do not suffice to describe schematic derivations. We need the additional concept of {\em call graphs} which can be decorated by $\RPLnull^\Psi$ derivations in such a way that together they provide a sound derivation in  $\RPLnull^\Psi$. 

\section{A Scaffolding for Schematic Derivations}
\label{Scaffolding}
In this section we define a scaffolding which supports the construction of schematic derivations. These are sequences of proofs pieced together from derivations whose leaves may be non-axiomatic initial sequents. These non-axiomatic initial sequents match the end sequent of other derivations and when pieced together form a derivation containing axiomatic initial sequents only. Usually schematic derivations have a starting derivation which contains {\em free parameters}. By substituting these free parameters by natural numbers and propagating the substitutions we can construct a resolution derivation. 

Each one of these derivations can be thought of as a control mechanism which directs a {\em flow} of substitutions through a network of {\em junctions}. The non-axiomatic initial sequents can be thought of as the {\em junctions} at which two flows connect. As one changes the substitutions one changes how the substitutions flow through the network of junctions and thus the final output proof. This analogy to network flow is precisely how our scaffolding should be interpreted. Together, a collection of coalescing flow controls describe what we refer to as a {\em call graph} which  defines, finitistically, how the flow is controlled within a network of junctions.  We will show how this flow control mechanism can be used to provide a semantic foundation for schematic $\RPLnull^\Psi$ derivations discussed later in this paper.

\subsection{Flows over a Junction Network}
Rather than defining our framework over individual terms of $T^\omega$, we instead consider $m$-tuples of terms which we refer to as {\em points of length $m$}.  Points are divided into sets $\left\lbrace (t_1,\cdots, t_m)\ \middle\vert\ \right.  $ $\left.  t_1,\cdots, t_m\in T^\omega\right\rbrace$, denoted by $\Tvs{m}$, containing all points of the same length. We refer to these sets as {\em point spaces} of length $m$. Additionally, we distinguish the subset   $\left\lbrace (t_1,\cdots, t_m)\ \middle\vert\ t_1,\cdots, t_m\in \mathit{Num}\right\rbrace$, denoted by  $\Tvs{m}_{0}$. We refer to this subset of $\Tvs{m}$ as the {\em concrete point space} of length $m$. Associated with $\Tvs{m}_{0}$ is a well founded total order $\prec_m$ which orders points of length $m$. 

We extend the ordering of $\prec_m$ to $\Tvs{m}$ by extending the evaluation procedure for $T^\omega$  discussed in  Section~\ref{sec.schematicproofs}. Given a point $\mathbf{v} = (t_1,\cdots, t_m) \in \Tvs{m}$ and $\sigma\in \mathcal{S}$, then  $\sigma(\mathbf{v})\Eval_{\omega} = ( \sigma(t_1)\Eval_{\omega},\cdots,  \sigma(t_m)\Eval_{\omega}) \in \Tvs{m}_0$. Now let $S\subseteq \mathcal{S}$ and $\mathbf{v},\mathbf{v}'\in \Tvs{m}$. Then  $\mathbf{v}\prec_{m}^S \mathbf{v}'$ iff for every $\sigma \in S$, $\sigma(\mathbf{v})\Eval_{\omega}\prec_m \sigma(\mathbf{v}')\Eval_{\omega}$. 

Points are paired with a symbol from a countably infinite set $\Delta$ to form {\em junctions}. However, not every pairing of symbol and point is desired, in particular each symbol should only be paired with points of a specific length. To enforce this restriction we introduce a so called {\em arity function} $\Af:\Delta \rightarrow \mathbb{N}$. Thus, if for some symbol $\delta \in \Delta$, $\Af(\delta) = m$ a well-formed junction would pair $\delta$ with a point of length $m$. We refer to junctions which are well-formed with respect to an arity function $\Af$ as  {\em $\Af$-junctions}. The set of all $\Af$-junctions will be referred to as an {\em  $\Af$-junction network $\mathcal{J}^{\star}(\Af)$}. 

Note that $\prec_{m}^S$  may be easily extended to junctions whose points come from the same point space,  however, this leaves many junctions incomparable. Thus, we extend $\prec_{m}^S$ to $\prec_{\mathcal{J}^{\star}(\Af)}^{S}$ as follows: 

\begin{definition}[network order] 
\label{networkorder}
Let  $(\delta, p), (\delta', q)\in \mathcal{J}^{\star}(\Af)$ and $S\subseteq \mathcal{S}$. Then $(\delta, p)\prec_{\mathcal{J}^{\star}(\Af)}^{S} (\delta', q)$ if either $p \prec_{\Af(\delta)}^S q $, or  $\Af(\delta') < \Af(\delta)$. When possible, we use the simplified notation $\prec^{S}$.
 \end{definition}
 Notice that $\prec^{S}$, for certain choices of arity function, is not well founded. While this is in general problematic for termination, we will only consider finite subsets of $\Delta$ and thus infinite descending chains can be avoided.

\begin{definition}
\label{curves} 
Let $\mathcal{S}^{*}$ be a partitioning of $\mathcal{S}$ into mutually disjoint sets and $\Pf{\mathcal{S}^{*}}{\Af}:\mathcal{S}^{*}\rightarrow \mathcal{J}^{\star}(\Af)$ an injective mapping from $\mathcal{S}^{*}$ to finite subsets of $\mathcal{J}^{\star}(\Af)$. We refer to   $\Pf{\mathcal{S}^{*}}{\Af}$ as a {\em flow} if the following conditions are met:
\begin{itemize}
\item[1)] $\exists! j\in \mathcal{J}^{\star}(\Af)$ $\forall S\in\mathcal{S}^{*}$ , s.t. $j \in \Pf{\mathcal{S}^{*}}{\Af}(S)$. This junction is  referred to as the {\em source} of the flow and will be denoted by  $\left[\Pf{\mathcal{S}^{*}}{\Af}\right]$.

\item[2)] $\forall S\in\mathcal{S}^{*}$ and $q\in \Pf{\mathcal{S}^{*}}{\Af}(S)$, $q \prec^{S} \left[ \Pf{\mathcal{S}^{*}}{\Af}\right]$ or $q = \left[ \Pf{\mathcal{S}^{*}}{\Af}\right] $ .
\end{itemize}
\end{definition}

\begin{example}
\label{ex:simpleFlow}
In order to illustrate the power of the flow formalism let us consider the flow:
$$ \Pf{\mathcal{S}^{*}}{\Af}= \left\lbrace \begin{array}{c} \left( \lbrace (\delta, n) ,(\delta, \mathit{p}(n)) \rbrace,S_1\right)\ ,\ \left( \lbrace(\delta, n) \rbrace, S_2\right) \end{array} \right\rbrace,$$
where $S_1 = \lbrace \sigma \in \mathcal{S} \ \&\  \sigma(n)\Eval_{\omega} > 0\rbrace$ and $S_2 = \lbrace \sigma \in \mathcal{S} \ \&\  \sigma(n)\Eval_{\omega} = 0\rbrace$. Note that $S_1\cup S_2 \equiv \mathcal{S}$. This flow is sourced from $(\delta, n)$, i.e. $ \left[ \Pf{\mathcal{S}^{*}}{\Af}\right] = (\delta, n)$. Evaluation at $\lbrace n\leftarrow s^{3}(0) \rbrace$ implies finding the partition this assignment belongs to, in this case $S_1$. Thus, $\Pf{\mathcal{S}^{*}}{\Af}(\lbrace n\leftarrow s^{3}(0) \rbrace) = \Pf{\mathcal{S}^{*}}{\Af}(S_1) = \lbrace (\delta, n) ,(\delta, \mathit{p}(n)) \rbrace$.  Notice that $(\delta, \mathit{p}(n)) \prec^{S_1} (\delta, n)$ when $p(\cdot)$ denotes the predecessor function. 

This flow formalizes primitive recursion, however, without defining normalization this is not entirely obvious. In the following section we introduce {\em call graphs} which provide a normalization procedure as well as flow composition. 
\end{example}

\begin{definition}
A flow $\Pf{\mathcal{S}^{*}}{\Af}$ is {\em regular} if $\left\lbrace j \ \middle\vert\ j \in \Pf{\mathcal{S}^{*}}{\Af}(S) \ , \ S\in \mathcal{S}^* \right\rbrace $
is finite.  The set of all regular flows over a $\Af$-junction network is denoted by $\regFlow{\Af}$.
\end{definition}
Frequently we may write $P_{\mathcal{S}^*}$ if the arity is not important.

While non-regular flows may be interesting in their own right, for this work we need only to consider regular flows being that they directly correspond to {\em well-formed proof schemata}.

\subsection{Call Graphs}
 When multiple flows are defined within the same junction network their intersections provide a graph-like structure. A {\em call graph} is a special case of flow intersection where each junction occuring in a flow is the source of a flow.
\begin{figure}
\vspace*{-2em}
\begin{center}
\scalebox{.75}{\begin{tikzpicture}
\draw (.5,.7) circle (.5cm);
\draw (4,.77) circle (.5cm);
\draw (8,2.13) circle (.5cm);
\draw[dotted] (8.28,2.4) circle (.3cm);
\draw[dotted] (4.19,.52) circle (.6cm);
\draw[dotted] (.2,.4) circle (.4cm);

\draw [black, xshift=4cm] plot [smooth, tension=1] coordinates { (-5,1) (1,1)  (7,4)};
\draw [black, xshift=4cm] plot [smooth, tension=1] coordinates { (-5,3) (-1.75,-.7) (3.2,3.5)  (4.6,-1)};

\node[rotate=0] (C1) at (.5,1.4)  {$P_1(\sigma_1)$};
\node[rotate=0] (C2) at (4,1.47)  {$P_1(\sigma_2)$};
\node[rotate=0] (C3) at (7.1,2.4)  {$P_1(\sigma_3)$};

\node[rotate=0] (C11) at (-.5,0)  {$P_2(\theta_1)$};
\node[rotate=0] (C22) at (5.2,0)  {$P_2(\theta_2)$};
\node[rotate=0] (C33) at (8.75,2.9)  {$P_2(\theta_3)$};

\node[rotate=0] (C3) at (9,-.3)  {$P_2$};
\node[rotate=0] (C3) at (10,3)  {$P_1$};

\node[] (A) at (8.28,2.4)  {$\bullet$};
\node[] (B) at (4.19,.52)  {$\bullet$};
\node[] (C) at (.2,.4)  {$\bullet$};

\node[] (D) at (8,2.13)  {$\bullet$};
\node[] (E) at (.5,.7)  {$\bullet$};
\node[] (F) at (4,.77)  {$\bullet$};

\end{tikzpicture}}
\end{center}
\vspace*{-2em}
\caption{The intersection of two flows within a junction network.}
\label{cellfig}
\end{figure}

\begin{definition}
\label{callgraph}
A finite set of flows $\mathcal{G}$ over $\mathcal{J}^{\star}(\Af)$ is referred to as a {\em call graph} if for every $P_{\mathcal{S}^{*}_1}\in \mathcal{G}$, $S\in \mathcal{S}^{*}$, $\sigma\in S$, $j \in P_{\mathcal{S}^{*}_1}$ there exists a unique $P_{\mathcal{S}^{*}_2}\in \mathcal{G}$ and $\theta \in \mathcal{S}$ s.t. $\theta([P_{\mathcal{S}^{*}_2}])\Eval_{\omega} = \sigma(j)\Eval_{\omega}$. We write $\mathit{flow}(j,\sigma) = P_{\mathcal{S}^{*}_2}$ and $\mathit{subst}(j,\sigma) = \theta$.
We refer to $\mathcal{G}$ as {\em finite} when $|\mathcal{G}| \in \mathbb{N}$. The set of all finite call graphs definable over a junction network $\mathcal{J}^{\star}(\Af)$ is denoted by $\mathcal{G}^{\star}(\Af)$.
\end{definition}
When $P_{\mathcal{S}^{*}}(S) = \left\lbrace [P_{\mathcal{S}^{*}}] \right\rbrace $ for some $S\in\mathcal{S}^{*}$ we refer to $S$ as a {\em sink} of $P_{\mathcal{S}^{*}}$. The sinks represent end points of the flow.

\begin{example}
\label{callgraphex}
Let $\mathcal{G} = \lbrace P_1, P_2 \rbrace$ be a call graph over $\mathcal{J}^{\star}(\Af)$ , where
$$ P_1 = \lbrace (\lbrace (\delta, n) ,(\delta', n, \mathit{p}(n),n, 0) \rbrace , \mathcal{S})\rbrace $$ 
$$ P_2 = \left\lbrace \begin{array}{c} \left(  \lbrace (\delta', n,m,k,w) ,(\delta', n,m,\mathit{p}(k),s(w)) \rbrace , S_1\right)  \ ,\ \\
\left( \lbrace(\delta', n,m,k,w), (\delta', n,\mathit{p}(m),n,w) \rbrace , S_2 \right) \ ,\ \\
\left( \lbrace(\delta', n,m,k,w)\rbrace , S_3\right) 
\end{array} \right\rbrace$$
and $S_1 = \left\lbrace \sigma \vert \sigma \in \mathcal{S}, \sigma(k)\Eval_{\omega} > 0  \right\rbrace$, $S_2 = \left\lbrace \sigma \vert \sigma \in \mathcal{S},\sigma(k)\Eval_{\omega} = 0 \ \&\  \sigma(m)\Eval_{\omega} > 0 \right\rbrace$, and $S_3 = \left\lbrace \sigma \vert \sigma \in \mathcal{S}, \sigma(k)\Eval_{\omega} = 0 \ \&\  \sigma(m)\Eval_{\omega} = 0 \right\rbrace$. If we assume that $\prec_{4}$ is the lexicographical order then $P_1$ and $P_2$ respect the network order and are flows. Furthermore, the component $\left( \lbrace(\delta', n,m,k,w)\rbrace , S_3\right)$ of  $P_2$ denotes the {\em sinks} of $P_2$ being that it matches every substitution of  $S_3$ with a singleton set of junctions. 

Let us now consider the relationship between the input parameter assignment $\sigma$ and the assignment connecting the flows of the call graph. For clarity reasons we associated with each position within the points a numeric term $\alpha_i$. For example, applying the parameter assignment  $\sigma = \left\lbrace n\leftarrow \alpha_1,m \leftarrow \alpha_2, k \leftarrow \alpha_3 ,w\leftarrow \alpha_4\right\rbrace$ to the junction $(\delta', n,m,k,w)$ may be written as $(\delta', n,m,k,w)\sigma$. We can write  $P_1(S)$, such that $S\in S^{*}$ and $\sigma\in S$, as follows:
$$\lbrace (\lbrace (\delta,n)\sigma,(\delta', n, m,k,w)\mathit{subst}((\delta', n, m,k,w),\sigma) \rbrace , \mathcal{S})\rbrace $$ 
A call graph $G$ essentially defines a set of processes which generate sequences of parameter assignments of the form $$\left[ \sigma, \mathit{subst}(j_1,\sigma)\right],\left[\theta_1,\mathit{subst}(j_2,\theta_1)\right],\left[ \theta_2 , \mathit{subst}(j_3,\theta_2)\right],\ldots $$ 
where $\theta_{1} = \mathit{subst}(j_1,\sigma)$ and $\theta_2 = \mathit{subst}(j_2,\theta_1)$. Furthermore, if we generate the above sequence at the flow $P_{S^{*}_1}\in G$ then $j_1\in \mathit{flow}(\left[ P_{S^{*}_1} \right],\sigma)(S_1)$, where $S_1\in S^{*}_1$ and $\sigma \in S_1$ , $j_2\in \mathit{flow}(j_1,\theta_1)(S_2)$, where $\mathit{flow}(j_1,\theta_1)$ is defined over the partitioning $S_{2}^{*}$, $S_{2} \in S_{2}^{*}$ and $\theta_1\in S_2$ , and $j_3\in \mathit{flow}(j_2,\theta_2)(S_3)$, where $\mathit{flow}(j_2,\theta_2)$ is defined over the partitioning $S_{3}^{*}$, $S_{3} \in S_{3}^{*}$ and $\theta_1\in S_3$.

Assuming we start traversing the above call graph from $P_1$ and that $\alpha_1>0$  the following sequence of parameter assignment pairs will emerge. 

$$\left[ \overbrace{\left\lbrace n \leftarrow \alpha_1 \right\rbrace}^{\sigma}, \overbrace{\left\lbrace \begin{array}{c} n\leftarrow \alpha_1\ , \ m \leftarrow p(\alpha_1) ,\\ k \leftarrow \alpha_1\ , \ w \leftarrow 0\end{array} \right\rbrace }^{\theta_1} \right] $$
where  $j_1 = (\delta', n, \mathit{p}(n),n, 0)$, $\mathit{flow}(j_1,\sigma) = P_2$, and $\mathit{Subst}(j_1,\sigma) = \theta_1$. 
$$\left[ \overbrace{\left\lbrace \begin{array}{c} n\leftarrow \alpha_1\ ,\ m \leftarrow p(\alpha_1) ,\\ k \leftarrow \alpha_1,w \leftarrow 0 \end{array}\right\rbrace }^{\theta_1}, \overbrace{\left\lbrace\begin{array}{c}  n\leftarrow \alpha_1\ ,\ m \leftarrow p(\alpha_1) ,\\ k \leftarrow p(\alpha_1),w \leftarrow s(0)\end{array} \right\rbrace }^{\theta_2} \right] $$
where  $j_2 = (\delta', n,m,\mathit{p}(k),s(w))$, $\mathit{flow}(j_2,\theta_1) = P_2$, and $\mathit{Subst}(j_2,\theta_1) = \theta_2$. 

$$\left[ \overbrace{\left\lbrace \begin{array}{c} n\leftarrow \alpha_1\ ,\ m \leftarrow p(\alpha_1) ,\\ k \leftarrow p(\alpha_1)\ ,\ w \leftarrow s(0)\end{array} \right\rbrace}^{\theta_2} , \overbrace{\left\lbrace \begin{array}{c} n\leftarrow \alpha_1\ ,\ m \leftarrow p(\alpha_1) ,\\ k \leftarrow p(p(\alpha_1))\ ,\ w \leftarrow s(s(0))\end{array} \right\rbrace}^{\theta_3}  \right] $$
where  $j_3 = (\delta', n,m,\mathit{p}(k),s(w))$, $\mathit{flow}(j_3,\theta_2) = P_2$, and $\mathit{Subst}(j_3,\theta_2) = \theta_3$.
$$\vdots$$

$$\left[ \overbrace{\left\lbrace \begin{array}{c} n\leftarrow \alpha_1\ ,\ m \leftarrow p(\alpha_1) ,\\ k \leftarrow p^{\alpha_1}(\alpha_1)\ ,\ w \leftarrow s^{\alpha_1}(0)\end{array} \right\rbrace}^{\theta_3} , \overbrace{\left\lbrace \begin{array}{c} n\leftarrow \alpha_1\ ,\ m \leftarrow p(p(\alpha_1)) ,\\ k \leftarrow \alpha_1\ ,\ w \leftarrow s^{\alpha_1}(0)\end{array}\right\rbrace}^{\theta_{4} } \right] $$
where  $j_4 = (\delta', n,\mathit{p}(m),n,w)$, $\mathit{flow}(j_4,\theta_4) = P_2$, and $\mathit{Subst}(j_4,\theta_3) = \theta_4$. 

$$\vdots$$

$$\left[  \overbrace{\left\lbrace \begin{array}{c} n\leftarrow \alpha_1\ ,\ m \leftarrow p^\alpha_1(\alpha_1) ,\\ k \leftarrow 0\ ,\ w \leftarrow s^{(\alpha_1)^2}(0)\end{array} \right\rbrace }^{\theta_4} \right] $$
Being that we have reached a sink at this point $\mathit{flow}$ and $\mathit{Subst}$ are only defined for the source of $P_2$. 
\end{example}
\subsection{Call Graph Traces}
Given a finite call graph $\mathcal{G}$, a flow $P\in \mathcal{G}$ and a assignment $\sigma\in \mathcal{S}$ we may consider the $\sigma$-$\theta$ transition defined by the flow through the call graph from source to sinks. We refer to the tree of junctions, which is constructed as an assignment passes through  the call graph towards the sinks, as {\em the trace of $\sigma$ at $P$ in $\mathcal{G}$}. The main result of this section is that the trace of an assignment is always a finite tree.  

However, before we define call graph traces let us consider the call graph for primitive recursion $\mathcal{G} = \left\lbrace P\right\rbrace$ using the flow from Example~\ref{ex:simpleFlow} repeated below:
$$P= \left\lbrace \begin{array}{c} \left( \lbrace (\delta, n) ,(\delta, \mathit{p}(n)) \rbrace,S_1\right)\ ,\ \left( \lbrace(\delta, n) \rbrace, S_2\right) \end{array} \right\rbrace,$$
where $S_1 = \lbrace \sigma \in \mathcal{S} \ \&\  \sigma(n)\Eval_{\omega} > 0\rbrace$ and $S_2 = \lbrace \sigma \in \mathcal{S} \ \&\  \sigma(n)\Eval_{\omega} = 0\rbrace$. Starting from any parameter assignment $\sigma = \left\lbrace n\leftarrow s^{\alpha}(0)\right\rbrace$ for $\alpha\in \mathbb{N}$, the trace of $\sigma$ at $P$ in $\mathcal{G}$ is the following sequence of junctions: 
$$(\delta,  s^{\alpha}(0)), (\delta,  s^{\alpha-1}(0)), \cdots, (\delta,  s^{1}(0)), (\delta,  0).$$
Note that if we take any two adjacent junctions in the above sequence they form the set of junctions associated with  the partition  $S_1$ after evaluation by $\left\lbrace n\leftarrow s^{\beta}(0)\right\rbrace$, for $\alpha\geq \beta >0$. Looking back at Definition~\ref{callgraph}, and considering a pair of junctions from the above trace, say $(\delta,  s^{\beta }(0))$ and  $(\delta,  s^{\beta -1}(0))$, the above mentioned association with $S_1$ holds only when the assignment is $\left\lbrace n\leftarrow s^{\beta }(0)\right\rbrace$, however Definition~\ref{callgraph} also requires us to find an assignment and a flow such that $(\delta,  s^{\beta -1}(0))$ is the source of a flow under that assignment. This assignment would be $\left\lbrace n\leftarrow s^{\beta -1}(0)\right\rbrace$ thus transitioning us from the pair $(\delta,  s^{\beta }(0))$ and  $(\delta,  s^{\beta -1}(0))$ to the pair $(\delta,  s^{\beta -1}(0))$ and  $(\delta,  s^{\beta -2}(0))$. 

Before formally defining call graph traces, let us consider a more complex example which has a tree shaped trace structure. Consider the call graph $\mathcal{G} = \left\lbrace P_1,P_2 \right\rbrace$ where the flows are defined as follows:
$$P_1= \left\lbrace \begin{array}{c} \left( \lbrace (\delta, n) ,(\delta, \mathit{p}(n)), (\delta', n,n) \rbrace,S_1\right)\ ,\ \left( \lbrace(\delta, n) \rbrace, S_2\right) \end{array} \right\rbrace$$
$$P_2= \left\lbrace \begin{array}{c} \left( \lbrace (\delta', n,m) ,(\delta',n,\mathit{p}(m))\rbrace,S_1'\right)\ ,\ \left( \lbrace(\delta', n,m) \rbrace, S_2'\right) \end{array} \right\rbrace$$
where $S_1 = \lbrace \sigma \in \mathcal{S} \ \&\  \sigma(n)\Eval_{\omega} > 0\rbrace$ and $S_2 = \lbrace \sigma \in \mathcal{S} \ \&\  \sigma(n)\Eval_{\omega} = 0\rbrace$, $S_1' = \lbrace \sigma \in \mathcal{S} \ \&\  \sigma(m)\Eval_{\omega} > 0\rbrace$ and $S_2' = \lbrace \sigma \in \mathcal{S} \ \&\  \sigma(m)\Eval_{\omega} = 0\rbrace$.  This call graph illustrates nested primitive recursion and the trace of $\left\lbrace n\leftarrow s^{\alpha}(0)\right\rbrace$ at $P_1$ in $\mathcal{G}$ is 
$$\Tree[.\mbox{$(\delta, s^{\alpha}(0))$} [.\mbox{$(\delta', s^{\alpha}(0), s^{\alpha}(0))$} [.\mbox{$(\delta', s^{\alpha}(0), s^{\alpha-1}(0))$} [.\mbox{$\vdots$} [.\mbox{$(\delta', s^{\alpha}(0), 0)$} ] ]]
          [.\mbox{$(\delta, s^{\alpha-1}(0))$} [.\mbox{$(\delta', s^{\alpha-1}(0), s^{\alpha-1}(0))$} [.\mbox{$(\delta', s^{\alpha-1}(0), s^{\alpha-2}(0))$} [.\mbox{$\vdots$} [.\mbox{$(\delta', s^{\alpha-1}(0), 0)$} ] ]]]
                [.\mbox{$\vdots$}  [.\mbox{$\ (\delta, s^{1}(0))$} [.\mbox{$(\delta', s^{1}(0), s^{1}(0))$} [.\mbox{$(\delta', s^{1}(0), 0)$} ]]
                           [.\mbox{$\ (\delta, s^{1}(0))$} ]]]]]]$$
Notice that the trace of $\left\lbrace n\leftarrow s^{\alpha}(0), m\leftarrow s^{\alpha}(0)\right\rbrace$ at $P_2$ in $\mathcal{G}$ is similar to the the traces of the call graph $G= \left\lbrace P \right\rbrace$. We now formally define call graph traces and the computation of a trace of  $\sigma$ at a flow $P$ in a call graph $\mathcal{G}$. 

\begin{definition}[$\Af$-trace]
\label{TraceDef}
An {\em $\Af$-trace} is a pairing of an $\Af$-junction with a set of $\Af$-traces built using  the following inductive definition: 
\begin{itemize}
\item if $j$ is an $\Af$-junction, then $\left[j, \emptyset \right]$ is an $\Af$-trace.
\item if $j$ is an $\Af$-junction and $\left[j_1, T_1 \right],\cdots, \left[j_m, T_m \right]$ are $\Af$-traces such that for some $S\subseteq \mathcal{S}$, $j_1 \prec^{S} j,\cdots j_m \prec^{S} j$, then $\left[j, \bigcup_{i=1}^{m} \left[j_1, T_1 \right] \right]$ is an $\Af$-trace. 
\end{itemize}
We will refer to the outermost junction in an $\Af$-trace as the {\em root of the trace}. 
\end{definition}
\begin{definition}
\label{normal}
Let $\mathcal{G}\in \mathcal{G}^{\star}(\Af)$, $P_{S^{*}}\in \mathcal{G}$, $ S\in \mathcal{S}^*$ and $\sigma \in S$. The {\em trace} of $\sigma$ at $P_{S^{*}}$ in $\mathcal{G}$, denoted by  $\mathit{T}(\mathcal{G},P_{S^{*}},\sigma)$ is 
$$  \left[ \sigma([P_{S^{*}}])\Eval_{\omega} , \bigcup_{j\in P_{S^{*}}(S)\setminus \left\lbrace [P_{S^{*}}] \right\rbrace } \mathit{T}(\mathcal{G},\mbox{flow}(j,\sigma),\mbox{subst}(j,\sigma))  \right]$$  
where $\sigma(j)\Eval_{\omega} = \theta([P^j])\Eval_{\omega}$, where $\theta = \mbox{subst}(j,\sigma)$.
\end{definition}
\begin{example}
Consider the call graph defined in Example \ref{callgraphex}. The trace of $\lbrace n\leftarrow 2\rbrace$ at $P_1$ in $\mathcal{G}$ or $T(\mathcal{G}, P_1,\lbrace n\leftarrow 2\rbrace)$ results in the following computation:
\begin{small}
\begin{align*}
 T(\mathcal{G}, P_1,\lbrace n\leftarrow 2\rbrace)  & =  \left[ (\delta ,2), T(\mathcal{G}, P_2,\lbrace n\leftarrow 2,m\leftarrow 1,k\leftarrow 2, w\leftarrow 0 \rbrace) \right]\\
  T\left( \mathcal{G}, P_2,\left\lbrace \begin{array}{c} n\leftarrow 2,m\leftarrow 1,\\ k\leftarrow 2, w\leftarrow 0\end{array}\right\rbrace \right)  & =  \left[ (\delta' ,2,1,2,0), T\left( \mathcal{G}, P_2,\left\lbrace \begin{array}{c} n\leftarrow 2,m\leftarrow 1,\\ k\leftarrow 1, w\leftarrow 1\end{array}\right\rbrace \right) \right]\\
  T\left( \mathcal{G}, P_2,\left\lbrace \begin{array}{c} n\leftarrow 2,m\leftarrow 1,\\ k\leftarrow 1, w\leftarrow 1\end{array}\right\rbrace \right)  & =  \left[ (\delta' ,2,1,1,1), T\left( \mathcal{G}, P_2,\left\lbrace \begin{array}{c} n\leftarrow 2,m\leftarrow 1,\\ k\leftarrow 0, w\leftarrow 2\end{array}\right\rbrace \right) \right]\\
    T\left( \mathcal{G}, P_2,\left\lbrace \begin{array}{c} n\leftarrow 2,m\leftarrow 1,\\ k\leftarrow 0, w\leftarrow 2\end{array}\right\rbrace \right)  & =  \left[ (\delta' ,2,1,0,2), T\left( \mathcal{G}, P_2,\left\lbrace \begin{array}{c} n\leftarrow 2,m\leftarrow 0,\\ k\leftarrow 2, w\leftarrow 2\end{array}\right\rbrace \right) \right]\\
     T\left( \mathcal{G}, P_2,\left\lbrace \begin{array}{c} n\leftarrow 2,m\leftarrow 0,\\ k\leftarrow 2, w\leftarrow 2\end{array}\right\rbrace \right)  & =  \left[ (\delta' ,2,0,2,2), T\left( \mathcal{G}, P_2,\left\lbrace \begin{array}{c} n\leftarrow 2,m\leftarrow 0,\\ k\leftarrow 1, w\leftarrow 3\end{array}\right\rbrace \right) \right]\\
     T\left( \mathcal{G}, P_2,\left\lbrace \begin{array}{c} n\leftarrow 2,m\leftarrow 0,\\ k\leftarrow 1, w\leftarrow 3\end{array}\right\rbrace \right)  & =  \left[ (\delta' ,2,0,1,3), T\left( \mathcal{G}, P_2,\left\lbrace \begin{array}{c} n\leftarrow 2,m\leftarrow 0,\\ k\leftarrow 0, w\leftarrow 4\end{array}\right\rbrace \right) \right]\\
      T\left( \mathcal{G}, P_2,\left\lbrace \begin{array}{c} n\leftarrow 2,m\leftarrow 0,\\ k\leftarrow 0, w\leftarrow 4\end{array}\right\rbrace \right)  & =  \left[ (\delta' ,2,0,0,4), \emptyset \right]
\end{align*}
\end{small}
Thus the resulting trace is as follows: $\left[ (\delta ,2),\left[ (\delta' ,2,1,2,0),\left[ (\delta' ,2,1,1,1), \right. \right. \right.$\\
$\left. \left.\left[ (\delta' ,2,1,0,2), \left. \left[ (\delta' ,2,0,2,2), \left[ (\delta' ,2,0,1,3), \left[ (\delta' ,2,0,0,4), \emptyset \right] \right] \right] \right] \right] \right] \right]$.
\end{example}
\begin{theorem}
\label{terminationCallGraph}
Let $\mathcal{G}\in \mathcal{G}^{\star}(\Af,\mathcal{S})$, $P_{S^*}\in \mathcal{G}$ and $\sigma \in \mathcal{S}$. Then $\mathit{T}(\mathcal{G},P_{S^*},\sigma)$ always produces a finite trace.
\end{theorem}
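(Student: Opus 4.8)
The plan is to regard $T(\mathcal{G},P_{S^*},\sigma)$ as a rooted tree of concrete junctions and to prove its finiteness via K\"onig's lemma: it suffices to show that the tree is finitely branching and that every branch is finite. The root of each subtrace is the evaluated source junction, and I will track how these root junctions evolve as one descends a branch, showing that they strictly decrease in the network order.

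First I would establish finite branching. By Definition~\ref{normal} the children of a node $T(\mathcal{G},P_{S^*},\sigma)$ are exactly the subtraces $T(\mathcal{G},\mathit{flow}(j,\sigma),\mathit{subst}(j,\sigma))$ for $j \in P_{S^*}(S)\setminus\{[P_{S^*}]\}$, where $S \in \mathcal{S}^*$ with $\sigma \in S$. Since a flow maps each block of its partition to a \emph{finite} subset of $\mathcal{J}^\star(\Af)$ (Definition~\ref{curves}), the index set $P_{S^*}(S)$ is finite, and hence each node of the trace has finitely many children.

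Next I would show that along any branch the root junctions strictly decrease in the network order. The root of $T(\mathcal{G},P_{S^*},\sigma)$ is $\sigma([P_{S^*}])\Eval_\omega$, and by the matching condition stated in Definition~\ref{normal} the root of the child along $j$ equals $\sigma(j)\Eval_\omega$. Condition (2) of Definition~\ref{curves} gives $j \prec^{S} [P_{S^*}]$ for every $j \in P_{S^*}(S)\setminus\{[P_{S^*}]\}$; unfolding Definition~\ref{networkorder} and evaluating at $\sigma \in S$ (both disjuncts of the network order — strict decrease of the point when arities agree, or strictly larger arity — are preserved after applying $\Eval_\omega$) yields $\sigma(j)\Eval_\omega \prec_{\mathcal{J}^\star(\Af)} \sigma([P_{S^*}])\Eval_\omega$. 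Thus the successive roots along a branch form a strictly $\prec_{\mathcal{J}^\star(\Af)}$-descending sequence of concrete junctions.

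The hard part will be arguing that such descending sequences must terminate, since the paper explicitly notes that $\prec^{S}$ is in general \emph{not} well founded. Here I would use finiteness of $\mathcal{G}$: by Definition~\ref{normal} every node occurring in the trace is an evaluated \emph{source} junction $\sigma([P'])\Eval_\omega$ of some flow $P' \in \mathcal{G}$, and since $\mathcal{G}$ contains only finitely many flows, only finitely many symbols of $\Delta$, and hence only finitely many arities $m_1,\ldots,m_k$, occur among the junctions of the trace. On this finite-symbol set the concrete network order is the lexicographic combination of (i) the reversed arity order, which ranges over a finite linear order and is therefore well founded, and (ii) within each fixed arity $m$ the well-founded total order $\prec_m$ on $\Tvs{m}_0$. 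A lexicographic product of well-founded orders is well founded, so no infinite $\prec_{\mathcal{J}^\star(\Af)}$-descending chain of such junctions exists; consequently every branch is finite. Combined with finite branching, K\"onig's lemma yields that $T(\mathcal{G},P_{S^*},\sigma)$ is a finite tree.
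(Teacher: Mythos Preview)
Your proof is correct. Both your argument and the paper's rest on the same two facts --- only finitely many arities occur (by finiteness of $\mathcal{G}$), and each $\prec_m$ on $\Tvs{m}_0$ is well founded --- but you package them differently. The paper proceeds by an explicit induction on the set of arities appearing in the trace: it first shows that subtraces rooted at maximum-arity junctions are finite (since all descendants must share that arity and hence lie in a single well-ordered point space), and then inductively handles smaller arities by observing that any jump to a larger arity lands in a subtrace already known to be finite. You instead invoke K\"onig's lemma, reduce to showing branches terminate, and argue directly that the concrete network order restricted to finitely many symbols is well founded as a lexicographic-style combination. Your route is more modular and makes the order-theoretic content explicit; the paper's inductive argument is more hands-on but somewhat obscures this structure (and is arguably less tightly written in the step case, where it handles a single arity-changing junction without explicitly closing the argument for the remainder of the subtrace). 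One minor point: calling the restricted network order a ``lexicographic combination'' is a slight abuse, since the second component lives in different point spaces depending on the arity; but the well-foundedness conclusion follows just as you say, since along any descending chain the arity can only increase and must therefore stabilise, after which the points strictly decrease in a fixed well-founded $\prec_m$.
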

\begin{proof}
%We will construct a finite trace $\mathit{T}(\mathcal{G},P_{S^*},\sigma)$ in the following way. 
By the definition of $\mathcal{G}^{\star}(\Af,\mathcal{S})$ only a finite number of symbols may occur in the junctions $\mathit{T}(\mathcal{G},P_{S^*},\sigma)$ which we denote by $\Delta_{0}\subseteq \Delta$. Let $$A = \left\lbrace \alpha \ \vert\ \alpha \in \mathbb{N}\ \& \ \exists \delta\in \Delta_{0} ( \Af(\delta) = \alpha ) \right\rbrace,$$ $$\Delta(B) = \left\lbrace \delta \ \vert \ \delta\in  \Delta_{0} \ \& \ \Af(\delta)\in B \right\rbrace, \mbox{ and}$$
$$\Delta_{max} = \Delta\left( \left\lbrace {\displaystyle \max_{\alpha \in A}} \ \alpha  \right\rbrace\right).$$
Now for the base case consider any sub-trace $T'$ of  $\mathit{T}(\mathcal{G},P_{S^*},\sigma)$ rooted at a junction $j$ whose symbol is in $\Delta_{max}$, we show that $T'$  must be finite. By Definition~\ref{TraceDef} all junctions occurring in $T'$ below $j$ must be smaller than $j$ with respect to the network order. By Definition~\ref{networkorder}, there is a total well-ordering of the junctions occurring in $T'$ because the arity function maps all junctions occurring in $T'$ to the same value. Thus, none of the sub-traces of  $\mathit{T}(\mathcal{G},P_{S^*},\sigma)$ rooted at a junction $j$ whose symbol is in $\Delta_{max}$ can be infinite.

Now for the induction hypothesis, let us consider the set $B$ which contains the $m$ of largest values of $A$. We assume that any sub-trace $T'$ of $\mathit{T}(\mathcal{G},P_{S^*},\sigma)$ rooted at a junction $j$ whose symbol is in $\Delta(B)$ is finite. For the step case, we show that when $B$ contains the $m+1$ largest values of $A$ any sub-trace $T'$ of $\mathit{T}(\mathcal{G},P_{S^*},\sigma)$ rooted at a junction $j$ whose symbol is in $\Delta(B)$ is finite. 

If every junction occurring in $T'$ has a symbol which the arity function maps to the same value as the symbol of $j$ then, instead of the set $B$ we can consider the subset of $B$, $\left\lbrace \Af(j) \right\rbrace$. The base case handles the situation when $B$ only contains one element and thus $T'$ is finite. 

If $T'$ contains a junction $j'$ whose symbol $\delta'$ is mapped to a different value then the symbol of $j$ then we may split $B$ into two sets, namely $$B_{low} = \left\lbrace \beta \ \vert \ \beta\in B \ \& \ \beta < \Af(\delta')\right\rbrace \mbox{ and } B_{high} = \left\lbrace \beta \ \vert \ \beta\in B \ \& \ \Af(\delta')\leq \beta\right\rbrace.$$
Notice that the sub-trace $T''$ starting from $j'$ may only contain symbols from $\Delta(B_{high})$ which contains less than $m+1$ symbols and thus by the induction hypothesis $T''$ must be finite and finishes the proof of the step case. Notice that we have covered all possible sub-traces of $\mathit{T}(\mathcal{G},P_{S^*},\sigma)$ and have shown that they must be finite. This implies that $\mathit{T}(\mathcal{G},P_{S^*},\sigma)$ itself must be finite.
\end{proof}
It is not obvious which functions beyond primitive recursion are representable by finite saturated call graphs, however it is quite obvious that limiting flows such that the all points occurring in the flow are primitive recursively computable from the source would result in call graphs which evaluate to primitive recursively bounded traces. It is shown in~\cite{Schnorr} (see Theorem 7.1.5. on page 120) that a primitive recursive bound is sufficient to imply the existence of a primitive recursive description and thus call graphs with the above flow restriction  are limited to primitive recursion. To show equivalence to primitive recursion one just has to note that call graphs implement composition, and flows implement primitive recursion, the projections, and the basic functions. The difference between our formalism and primitive recursion is an increase in flexibility necessary for describing recursive refutations.
 
\section{Schematic $\RPLnull^\Psi$ Derivations}\label{SchRPL0}
To construct schematic $\RPLnull^\Psi$ derivations we need a countably infinite set of {\em proof symbols} which are used to label the individual proofs of a  {\em proof schema}. A particular proof schema uses a finite set of proof symbols  $\Delta^*\subset \Delta$. Also, we need a concept of {\em proof labels} which are used to normalize $\RPLnull^\Psi$ derivations. 
\begin{definition}[proof label]
Let $\delta\in \Delta$ be a proof symbol and $\vartheta$ a parameter substitution. We refer to the object $(\delta,\vartheta)$ as a {\em proof label}.  
\end{definition}
\begin{example}
\label{labelex}
Let us consider the $\RPLnull^\Psi$ derivations of Example~\ref{TheMainExample}. The labels
$$(\delta_3, \left\lbrace r\leftarrow p(r),q \leftarrow s(q)\right\rbrace ) $$
$$(\delta_1, \left\lbrace w\leftarrow p(w),k\leftarrow s(k),r\leftarrow s(m),q\leftarrow 0\right\rbrace )$$
$$(\delta_2, \left\lbrace  r\leftarrow p(r),q \leftarrow s(q) \right\rbrace ) $$
are used to label sequents of derivation $\rho_1(\delta_2,\vec{X}, n,m,w,k,r,q)$
\end{example}
We will need to locate particular sequents within a given $\RPLnull^{\Psi}$ derivation. Thus we assume that each sequent in an $\RPLnull^{\Psi}$ derivation $\pi$ is given a unique position from a set of positions $\Lambda_{\pi}$. By $|\pi|_{\lambda}$ we denote the sequent $S$ occurring in $\pi$ at position $\lambda\in \Lambda_{\pi}$. Furthermore, let $\pi^{*}$ be an $\RPLnull^{\Psi}$ derivation whose end-sequent is $|\pi|_{\lambda}$, then  by $\pi[\pi^{*}]_{\lambda}$ we denote the $\RPLnull^{\Psi}$ derivation where the derivation starting at $\lambda$ in $\pi$ is replaced by $\pi^{*}$. Also, by $\mathit{leaf}(\pi)$ for an $\RPLnull^{\Psi}$ derivation $\pi$, we denote the set of labels  associated with the leaves of $\pi$. 
\begin{definition}[labelled sequents and derivations]
Let $S$ be a sequent and  $(\delta,\vartheta)$ a proof label, then $(\delta,\vartheta)\colon S$ is a {\em  sequent}.
A {\em labeled $\RPLnull^\Psi$ derivation} $\pi$ is an $\RPLnull^\Psi$ derivation where $|\pi|_{\lambda}$, for $\lambda\in \mathit{leaf}(\pi)$, may be a labeled sequent. By $\mathit{LAB}(\pi)$, where $\pi$ is a labeled $\RPLnull^\Psi$ derivation, we denote the set of all positions $\lambda$ such that $|\pi|_{\lambda}$ is a labeled sequent. By $\mathit{Ax}(\pi)$ we denote the set of positions $\mathit{leaf}(\pi)\setminus \mathit{LAB}(\pi)$.
\end{definition}
 \begin{example}
 The following labeled sequents use the labels discussed in Example~\ref{labelex}. 
 $$(\delta_3, \left\lbrace r\leftarrow p(r),q \leftarrow s(q)\right\rbrace )\colon \vdash  \hat{F_4}(\mathbf{X},k,s(q)) $$
$$(\delta_1, \left\lbrace w\leftarrow p(w),k\leftarrow s(k),r\leftarrow s(m),q\leftarrow 0\right\rbrace ) \colon $$ $$\vdash \leqpred{\hat{S}(Y(p(w),s(k)),s(q))}{s(k)}$$
$$(\delta_2, \left\lbrace r\leftarrow p(r),q \leftarrow s(q)\right\rbrace )\colon\vdash  \leqpred{Y(w,k)}{k}, \hat{F_5}(\mathbf{X},Y,k,s(q)) $$
 \end{example}
 sequents and labeled derivations can be used to link derivations together. We can group our $\RPLnull^\Psi$ derivations together based on a partitioning of the set of all parameter assignments and link these groups together by the proof labels.

\begin{definition}[$\RPLnull^\Psi$ schema]\label{def.proofschema}
Let $\Psi$ be as in Definition~\ref{def.formschem}, $\Delta^* \subset \Delta$ a finite set of proof symbols, $\delta_0 \in \Delta^*$ the {\em main symbol}, $ \mathcal{N}_0\subset \mathcal{N}$, and for each $\delta\in \Delta^*$ we associated a partitioning $\mathcal{S}_{\delta}$ of $\mathcal{S}$ which partitions $\mathcal{S}$ into  $k_{\delta}$ mutually disjoint sets, for $k\geq 0$. 
To every $\delta \in \Delta^*$ we assign 
$$\Dcal(\delta)\colon ((\rho_1(\delta,\vec{X}, \vec{n}),S_{1})\oplus \ldots \oplus(\rho_{k_{\delta}}(\delta,\vec{X},\vec{n} ),S_{k_{\delta}}), S_\delta(\vec{X},\vec{n})),$$
where
\begin{itemize}
	\item for $1\leq i \leq k_{\delta}$, $S_{i}\in \mathcal{S}_{\delta}$,
	\item $\vec{X}$ is a tuple of global variables% ($\vec{X}$ contains all the necessary variables, we allow variable renaming),
	\item $\vec{n}$ is a tuple of parameters contained in $\mathcal{N}_0$,
	 \item for each $\delta \in \Delta^*$ and $1\leq j \leq k_{\delta}$, $\rho_j(\delta,\vec{X},\vec{n})$ is a labelled $\RPLnull^\Psi$ derivation of $S_\delta(\vec{X},\vec{n})$ where 
	\begin{itemize}
	 \item for each $\lambda\in \mathit{Ax}(\rho_j(\delta,\vec{X},\vec{n}))$,  $|\rho_j(\delta,\vec{X},\vec{n})|_{\lambda} = \ \seq \Phat(\vec{X},\vec{r},\vec{s})$, i.e. the top symbol of the recursive definition,  and
	 \item for each $\lambda\in \mathit{Lab}(\rho_j(\delta,\vec{X},\vec{n}))$, $|\rho_j(\delta,\vec{X},\vec{n})|_{\lambda} = (\delta',\vartheta)\colon S_{\delta'}(\vec{X},\vec{s})$ such that $\delta' \in \Delta^*$ and $S_{\delta'}(\vec{X},\vec{s})$ = $S_{\delta'}(\vec{X},\vec{h})\vartheta$ where
	 $$\Dcal(\delta')\colon ((\rho_1(\delta',\vec{X}, \vec{h}),S_{1})\oplus \ldots \oplus(\rho_{k_{\delta'}}(\delta',\vec{X},\vec{h} ),S_{k_{\delta'}}), S_{\delta'}( \vec{X},\vec{h})).$$
We will refer to the parameter substitution $\vartheta$ as $\mathit{subst}(\Dcal, \delta,j,\lambda)$. 
\end{itemize}
\end{itemize} 
An $\RPLnull^\Psi$ schema $\Dcal$ is defined as the set 
$$\Union\{\Dcal(\delta) \mid \delta \in \Delta^*\}.$$
Furthermore, an $\RPLnull^\Psi$ schema is referred to as a {\em schematic $\RPLnull^\Psi$ derivation} if there exists a top symbol which we denote by $\delta_0$.  If the proofs of $\Dcal(\delta_0)$  end with $\seq$ we refer to $\Dcal$ as an {\em $\RPLnull^\Psi$ refutation schema}. 
\end{definition}
We consider only normal and regular $\RPLnull^\Psi$ derivations in the construction of an $\RPLnull^\Psi$ schema, and thus only consider normal and regular $\RPLnull^\Psi$ schema.
\begin{definition} 
Let $\Dcal$ be a $\RPLnull^\Psi$ schema. If all derivations $\rho_j(\delta,\vec{X},\vec{t})$ occurring for any proof symbol $\delta$ in $\Dcal$ are normal and regular $\RPLnull^\Psi$ derivations, we say that $\Dcal$ is a normal and regular $\RPLnull^\Psi$  schema.
\end{definition}
\begin{example}\label{ex.2parametersCont}
Below is the complete refutation schema for Example~\ref{ex.2parameters} using two symbols $\delta_0$ and $\delta_1$ where 
$$\Dcal(\delta_0) \equiv ((\rho_0(\delta_0,X,Y, n,m),S_1) \oplus (\rho_1(\delta_0,X,Y, n,m),S_2), \vdash),$$
$$\Dcal(\delta_1) \equiv ((\rho_0(\delta_1,X,Y,k, n,m),S_1) \oplus (\rho_1(\delta_1,X,Y,k, n,m),S_2), \vdash \hat{Q}(X,Y,n,m)),$$
$S_1 = \left\lbrace \sigma\ \middle\vert\ n\sigma = 0\right\rbrace $, and $S_2 = \left\lbrace \sigma\ \middle\vert\ n\sigma \not = 0\right\rbrace $. By $\rho_0(\delta_0,X,Y, n,m)$ we denote the following derivation:
\small
\[
\infer[res(\sigma_1)]{\begin{array}{c}\seq \end{array} }{
\infer[\wedge_{r_1}]{\seq P(\hat{f}(Y(0),m),Y(1))}{
\infer[B\hat{Q} r]{\seq P(\hat{f}(Y(0),m),Y(1))\wedge \hat{P}(X,0)}{
\seq \hat{Q}(X,Y,0,m)
}}
&
\infer[\neg:l]{ P(X(0),\hat{f}(a,0))\seq }{
\infer[B\hat{P} r]{\seq \neg P(X(0),\hat{f}(a,0))}{
\infer[\wedge_{r_2}]{\seq \hat{P}(X,0)}{
\infer[B\hat{Q} r]{\seq P(\hat{f}(Y(0),m),Y(1))\wedge \hat{P}(X,0)}{
    \seq \hat{Q}(X,Y,0,m)
 } }}}
}
\]
\normalsize
where $\sigma_1 = \{X(0) \leftarrow \fhat(Y(0),m), Y(1) \leftarrow \fhat(a,0)\}$ and by $\rho_1(\delta_0,X,Y, n,m)$ we denote the following derivation:
\small

\[
\infer[res \sigma_1]{\begin{array}{c}\seq \end{array} }{
\infer[\wedge_{r_1}]{\seq P(\hat{f}(Y(0),m),Y(1))}{
\infer[B\hat{Q} r]{\seq P(\hat{f}(Y(0),m),Y(1))\wedge \hat{P}(X,0)}{
\deduce{\seq \hat{Q}(X,Y,0,m)}{ (\delta_1,\left\lbrace k\leftarrow 0,n\leftarrow p(n)\right\rbrace ) }
}}
&
\infer[\neg:l]{ P(X(0),\hat{f}(a,0))\seq }{
\infer[B\hat{P} r]{\seq \neg P(X(0),\hat{f}(a,0))}{
\infer[\wedge_{r_2}]{\seq \hat{P}(X,0)}{
\infer[B\hat{Q} r]{\seq P(\hat{f}(Y(0),m),Y(1))\wedge \hat{P}(X,0)}{
    \deduce{\seq \hat{Q}(X,Y,0,m)}{(\delta_1,\left\lbrace k\leftarrow 0,n\leftarrow p(n)\right\rbrace ) }
 } }}}
}
\]
\normalsize
Concerning the proof symbol $\delta_1$, by $\rho_0(\delta_1,X,Y,k, n,m)$ we denote the following derivation: 
\begin{small}
\[
\infer[\land_{r_1}]{\begin{array}{c}\seq P(\hat{f}(Y(0),m),Y(1))\\ (2) \end{array}}{
\infer[S\hat{Q}r]{\seq P(\hat{f}(Y(0),m),Y(1))\wedge \hat{P}(X,s(k))}{
 \deduce{\seq\hat{Q}(X,Y,s(k),m)}{}
}}
\]

\[
\infer[res(\sigma_2)]{\begin{array}{c}\seq \hat{P}(X,k)\\ (1) \end{array}}{
\deduce{\seq P(\hat{f}(Y(0),m),Y(1))}{(2)}
&
\infer[\neg_r]{ P(X(k),\hat{f}(a,k)) \seq  \hat{P}(X,k)}{
\infer[\vee_r]{  \seq  \neg P(X(k),\hat{f}(a,k)),\hat{P}(X,k)}{
\infer[S\hat{P}r]{  \seq  \neg P(X(k),\hat{f}(a,k))\vee \hat{P}(X,k)}{
\infer[\wedge_{r_2}]{  \seq \hat{P}(X,s(k))}{
\infer[S\hat{Q}r]{  \seq P(\hat{f}(Y(0),m),Y(1))\wedge \hat{P}(X,s(k))}{
 \deduce{\seq\hat{Q}(X,Y,s(k),m)}{}
}
}
}}}}
\]
\end{small}
where $\sigma_2 = \{X(k) \leftarrow \fhat(Y(0),m), Y(1) \leftarrow \fhat(a,k)\}$.
\begin{small}

\[
\infer[S\hat{Q}r^+]{\begin{array}{c}\vdash \hat{Q}(X,Y,k,m) \end{array} }{
\infer[res\{X \leftarrow Z\}\footnote{Note that this renaming is no s-unifier but a (higher-order) renaming to ensure regularity.}]{P(\hat{f}(Y(0),m),Y(1))\wedge \hat{P}(X,k)}{
\deduce{\seq \hat{P}(Z,k)}{(1)}
&
\infer[res]{\hat{P}(X,k) \seq P(\hat{f}(Y(0),m),Y(1))\wedge \hat{P}(X,k)}{
\deduce{\seq P(\hat{f}(Y(0),m),Y(1))}{(2)}
&
\wedge\mathrm{\mbox{-}Axiom}}
}}
\]
\end{small}
and  by $\rho_1(\delta_1,X,Y,k,n,m)$ we denote the following derivation:
\begin{small}
\[
\infer[\land_{r_1}]{\begin{array}{c}\seq P(\hat{f}(Y(0),m),Y(1))\\ (2) \end{array}}{
\infer[S\hat{Q}r]{\seq P(\hat{f}(Y(0),m),Y(1))\wedge \hat{P}(X,s(k))}{
 \deduce{\seq\hat{Q}(X,Y,s(k),m)}{(\delta_1,\left\lbrace k\leftarrow s(k),n\leftarrow p(n)\right\rbrace ) }
}}
\]

\[
\infer[res(\sigma_2)]{\begin{array}{c}\seq \hat{P}(X,k)\\ (1) \end{array}}{
\deduce{\seq P(\hat{f}(Y(0),m),Y(1))}{(2)}
&
\infer[\neg_r]{ P(X(k),\hat{f}(a,k)) \seq  \hat{P}(X,k)}{
\infer[\vee_r]{  \seq  \neg P(X(k),\hat{f}(a,k)),\hat{P}(X,k)}{
\infer[S\hat{P}r]{  \seq  \neg P(X(k),\hat{f}(a,k))\vee \hat{P}(X,k)}{
\infer[\wedge_{r_2}]{  \seq \hat{P}(X,s(k))}{
\infer[S\hat{Q}r]{  \seq P(\hat{f}(Y(0),m),Y(1))\wedge \hat{P}(X,s(k))}{
 \deduce{\seq\hat{Q}(X,Y,s(k),m)}{(\delta_1,\left\lbrace k\leftarrow s(k),n\leftarrow p(n)\right\rbrace ) }
}
}
}}}}
\]
\end{small}
where $\sigma_2 = \{X(k) \leftarrow \fhat(Y(0),m), Y(1) \leftarrow \fhat(a,k)\}$.
\begin{small}

\[
\infer[S\hat{Q}r^+]{\begin{array}{c}\vdash \hat{Q}(X,Y,k,m) \end{array} }{
\infer[res\{Z \leftarrow X\}]{P(\hat{f}(Y(0),m),Y(1))\wedge \hat{P}(X,k)}{
\deduce{\seq \hat{P}(X,k)}{(1)}
&
\infer[res]{\hat{P}(Z,k) \seq P(\hat{f}(Y(0),m),Y(1))\wedge \hat{P}(X,k)}{
\deduce{\seq P(\hat{f}(Y(0),m),Y(1))}{(2)}
&
\wedge\mathrm{\mbox{-}Axiom}}
}}
\]
\end{small}
\end{example}
Note that the definition of a $\RPLnull^\Psi$ refutation schema is very general and without further conditions, a given schema may not evaluate to an $\RPLnull$ deduction under (all) parameter assignments. Some proof symbols may never be reached from $\delta_0$  or even more disconcerting, a call sequence may go on forever. Let us now define the evaluation of $\RPLnull^\Psi$ refutation schema to make this point clearer. 

\begin{definition}[proof join operator]
Let $\varphi_0$ and $\varphi_1$ be labelled $\RPLnull^\Psi$ derivations such that the end sequent of $\varphi_1$ is $S(\vec{n})$ were $\vec{n}$ is a tuple of parameters. We define the join of $\varphi_0$ and $\varphi_1$, denoted by $\varphi_0 \bowtie \varphi_1$, as 
$$\varphi_0\bowtie \varphi_1 = \varphi_0[\varphi_{1}
\vartheta_{\lambda_1}]_{\lambda_1}[\varphi_{1}
\vartheta_{\lambda_2}]_{\lambda_2}\cdots [\varphi_{1}
\vartheta_{\lambda_k}]_{\lambda_k}$$
where $\left\lbrace \lambda_1, \cdots ,\lambda_k\right\rbrace =$ $\left\lbrace \lambda \ \vert \ \lambda\in \mathit{LAB}(\varphi_0)\ \& \ \vert\varphi_0\vert_{\lambda} = (\delta,\vartheta_{\lambda})\colon S_\lambda \ \& \ S(\vec{n})\vartheta_{\lambda} = S_{\lambda} \right\rbrace $.
\end{definition}

\begin{example}
Consider the proofs $\rho_1(\delta_0,X,Y,1,2)$ and $\rho_0(\delta_1,X,Y,0,0,2)$ of Example~\ref{ex.2parametersCont}. Note that $$\rho_1(\delta_0,X,Y,1,2) \bowtie \rho_0(\delta_1,X,Y,0,0,2) \not = \rho_1(\delta_0,X,Y,1,2)$$
because there exists two positions of $\rho_1(\delta_0,X,Y,1,2)$ in $\mathit{LAB}(\rho_1(\delta_0,X,Y,1,2))$ whose sequent matches the end-sequent of $\rho_0(\delta_1,X,Y,0,0,2)$ after substitution. However if we instead consider  the proof $\rho_0(\delta_1,X,Y,1,0,2)$ we get 
$$\rho_1(\delta_0,X,Y,1,2) \bowtie \rho_0(\delta_1,X,Y,1,0,2)  = \rho_1(\delta_0,X,Y,1,2).$$
Note that this example illustrates a trivial join.

Joins can be nested as follows: 
$$\left( \rho_1(\delta_0,X,Y,2,2) \bowtie \rho_1(\delta_1,X,Y,0,1,2) \right) \bowtie \rho_0(\delta_1,X,Y,1,0,2).$$
Nesting join applications is essential for defining evaluation of $\RPLnull^\Psi$ schemata.
\end{example}
\begin{definition}[evaluation of $\RPLnull^\Psi$ schemata]
\label{evaldef}
Let $\Dcal$ be a normal and regular $\RPLnull^\Psi$ schema,  $\sigma\in \mathcal{S}$, and $\delta$ a proof symbol of $\Dcal$. Then the evaluation of $\Dcal$ at $\delta$ by $\sigma$, denoted by $N(\Dcal,\delta,\sigma)$, is as follows: 
Let $\Dcal(\delta)$ be $((\rho_1(\delta,\vec{X}, \vec{n}),S_{1})\oplus \ldots \oplus(\rho_{k_{\delta}}(\delta,\vec{X},\vec{n} ),S_{k_{\delta}}), S(\delta, \vec{X},\vec{n})),$
and $\sigma \in S_{i}$, for $1\leq i\leq k_{\delta}$. Then one of the following cases must hold:  
\begin{itemize}
\item If $\mathit{LAB}(\rho_i(\delta,\vec{X}, \vec{n})) = \emptyset$, then $N(\Dcal,\delta,\sigma) = \sigma(\rho_i(\delta,\vec{X}, \vec{n}))\Eval_{\omega}$
\item  $\mathit{LAB}(\rho_i(\delta,\vec{X}, \vec{n})) = \left\lbrace \lambda_1,\cdots, \lambda_l\right\rbrace $, then 
$$N(\Dcal,\delta,\sigma) = \left(\cdots \left( \sigma(\rho_i(\delta,\vec{X}, \vec{n}))\Eval_{\omega} \bowtie N(\Dcal, \delta_1, \sigma_1)\right)  \cdots \right)  \bowtie N(\Dcal, \delta_l, \sigma_l),$$
where $\sigma_i= \sigma[\mathit{subst}(\Dcal,\delta,i,\lambda)]\downarrow_{\omega}$.
\end{itemize}
\end{definition}
Note that Definition~\ref{evaldef} does not ensure termination of the the proof evaluation procedure, nor does it guaranteed an $\RPLnull$ derivation as a result. To ensure termination one has to provide a call schematics, i.e. the call graphs constructed in Section~\ref{Scaffolding}, which guarantee termination.  This so called call semantics is a mapping from the junctions of the flows occurring in a call graph to the labels occurring in the derivations of an $\RPLnull^\Psi$ schema.

\begin{definition}
Let $\Dcal(\delta)$ be as defined in Definition \ref{def.proofschema} and $C \in \regFlow{\Af}$. We say that $C \models \Dcal(\delta)$ if for every $\sigma \in \mathcal{S}$, there exists a unique $i\in\lbrace1,\cdots , k_{\delta}\rbrace$ s.t.  $(\rho_{i}(\delta,\vec{X},\vec{n}),S_{i})\in \Dcal(\delta)$, $\sigma \in S_{i}$ and the following holds: 
\begin{itemize}
\item $[C]= (\delta, \vec{n})$
\item For each $\lambda\in \mathit{LAB}(\rho_{i}(\delta,\vec{X},\vec{n}))$ there exists a junction $(\delta',\vec{t})\in C(\sigma)$, $(\delta',\vec{t})\neq \sigma([C])\vert_{\omega}$, s.t. $\mathit{subst}((\delta',\vec{t}),\sigma) = \sigma[\vartheta]\vert_{\omega}$, where $\vert(\sigma(\rho_{i}(\delta,\vec{X},\vec{n}))\vert_{\omega})\vert_{\lambda} = (\delta',\vartheta):S'$.
\end{itemize}
\end{definition}
Now we can connect call graphs to $\RPLnull^\Psi$ schemata.
\begin{definition}
Let $\Dcal$ be a $\RPLnull^\Psi$ refutation schema defined over the symbols $\Delta^*\subset \Delta$ and $G \in \mathcal{G}^{\star}(\Af)$. We say that $G \models \Dcal$ if for each $C\in G$ there exists a unique $\delta\in \Delta^*$ s.t. $C\models \Dcal(\delta)$ and $\vert \Delta^*\vert = \vert G \vert$. $\Dcal$ is said to be {\em well-formed}.
\end{definition} 
Well-formed $\RPLnull^\Psi$ schemata can be normalized to $\RPLnull^\Psi$ derivations. 

\begin{theorem}
Let $\Dcal$ be a well-formed $\RPLnull^\Psi$  schema, Then for any $\delta$ of $\Dcal$,  $N(\Dcal,\delta,\sigma)$ is a finite $\RPLnull$ derivation.
\end{theorem}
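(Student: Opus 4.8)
The plan is to show that the evaluation procedure $N(\Dcal,\delta,\sigma)$ unfolds with exactly the branching structure of a call graph trace, so that its termination and the finiteness of its output reduce to Theorem~\ref{terminationCallGraph}. Since $\Dcal$ is well-formed there is a call graph $G \in \callGraph{\Af}$ with $G \models \Dcal$, and by definition this yields a bijection between the proof symbols $\delta \in \Delta^*$ and the flows $C \in G$ with $C \models \Dcal(\delta)$. Fixing $\delta$, I write $C_\delta$ for its associated flow, and for a given $\sigma$ I would argue by induction on the trace $\mathit{T}(G,C_\delta,\sigma)$, which is a finite tree by Theorem~\ref{terminationCallGraph}.

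First I would make the correspondence between $N$ and the trace precise. For the unique $i$ with $\sigma \in S_i$, the condition $C_\delta \models \Dcal(\delta)$ says that each labelled leaf $\lambda \in \mathit{LAB}(\rho_i(\delta,\vec{X},\vec{n}))$ bearing a label $(\delta',\vartheta)$ is matched by a junction $(\delta',\vec{t}) \in C_\delta(\sigma)$ whose outgoing substitution $\mathit{subst}((\delta',\vec{t}),\sigma)$ coincides with the assignment $\sigma_\lambda = \sigma[\mathit{subst}(\Dcal,\delta,i,\lambda)]\Eval_\omega$ that Definition~\ref{evaldef} uses to form the recursive call $N(\Dcal,\delta',\sigma_\lambda)$. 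Consequently the children of the root of $\mathit{T}(G,C_\delta,\sigma)$ are in bijection with the labelled leaves of $\rho_i$, and the sub-trace below the junction $(\delta',\vec{t})$ is exactly $\mathit{T}(G,C_{\delta'},\sigma_\lambda)$; hence the recursion tree explored by $N$ is isomorphic to the trace tree.

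Next I would run the induction over this finite tree. In the base case $\sigma$ lies in a sink partition, so the relevant $\rho_i$ has $\mathit{LAB}(\rho_i) = \emptyset$ and $N(\Dcal,\delta,\sigma) = \sigma(\rho_i)\Eval_\omega$. As $\rho_i$ is a fixed finite $\RPLnull^\Psi$ derivation, applying $\sigma$ and evaluating instantiates all parameters to numerals and, using the evaluation results already proved for $T^\omega$, $T^\iota$ and $\FS$, unfolds every defined function- and predicate-symbol occurrence; each defined-symbol inference then has identical premise and conclusion after evaluation and may be contracted away, leaving a finite $\RPLnull$ derivation with axiomatic leaves only. In the inductive step, each of the finitely many labelled leaves is replaced, via the join operator $\bowtie$, by the derivation $N(\Dcal,\delta',\sigma_\lambda)$, which by the induction hypothesis is a finite $\RPLnull$ derivation whose end sequent matches the leaf sequent $S_{\delta'}(\vec{X},\vec{s})$ after substitution (cf. Definition~\ref{def.proofschema}). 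A join plugs a finite derivation into each of finitely many positions, so the result is again a finite $\RPLnull$ derivation, completing the induction.

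The step I expect to be the main obstacle is the matching in the second paragraph: one must verify that the substitution $\sigma_\lambda$ computed inside $N$ is genuinely the same as the trace's $\mathit{subst}((\delta',\vec{t}),\sigma)$, and that the end sequent of the recursively produced derivation is exactly the labelled leaf sequent so that $\bowtie$ applies. This is pure bookkeeping against the definition of $C \models \Dcal(\delta)$ and Definition~\ref{evaldef}, but it must be done carefully: the uniqueness clauses (a unique partition index $i$ with $\sigma \in S_i$, and a unique target flow per junction) are what guarantee that the tree isomorphism is well-defined rather than merely a correspondence up to choice. Once this is pinned down, finiteness is immediate from Theorem~\ref{terminationCallGraph} and the per-node finiteness of each $\sigma(\rho_i)\Eval_\omega$.
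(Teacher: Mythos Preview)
Your proposal is correct and takes essentially the same approach as the paper: the paper's proof is literally the single sentence ``Follows directly from Theorem~\ref{terminationCallGraph},'' and what you have written is a careful unpacking of precisely that reduction, spelling out the correspondence between the recursion tree of $N(\Dcal,\delta,\sigma)$ and the trace $\mathit{T}(G,C_\delta,\sigma)$ that the paper leaves entirely implicit. One small point: your claim of a \emph{bijection} between labelled leaves and trace children is slightly stronger than what the definition of $C \models \Dcal(\delta)$ provides (it only gives, for each labelled leaf, a matching junction), but this does not affect the termination argument, since all you actually need is that every recursive call of $N$ lands on a junction strictly below the source in the network order.
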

\begin{proof}
Follows directly from Theorem~\ref{terminationCallGraph}.
\end{proof}

\begin{example}
The associated call graph for the refutation schema of Example \ref{ex.2parametersCont} is $\mathcal{G} = \lbrace C_1, C_2\rbrace$, where
\begin{align*}
C_1 =& \left\lbrace \begin{array}{c} 
\left( \left\lbrace   (\delta_0, n,m) ,(\delta_1,0,\mathit{p}(n), m) \right\rbrace , S_1\right)   \\ 
\left( \lbrace (\delta_0, n,m)  \rbrace , S_2 \right)  \\ 
  \end{array} \right\rbrace\\
  C_2 =& \left\lbrace \begin{array}{c} 
\left( \left\lbrace   (\delta_1,k, n,m) ,(\delta_1, s(k),\mathit{p}(n),m) \right\rbrace , S_1\right)  \\ 
\left( \lbrace (\delta_1,k, n,m)  \rbrace , S_2 \right)  \\ 
  \end{array} \right\rbrace\\
  \end{align*}
and $S_1 = \lbrace \sigma \in \mathcal{S} \ \&\  \sigma(n)\Eval_{\omega} > 0\rbrace$ and $S_2 = \lbrace \sigma \in \mathcal{S} \ \&\  \sigma(n)\Eval_{\omega} = 0\rbrace$. Furthermore, 
$C_1 \models \Dcal(\delta_0)$,  $C_2 \models \Dcal(\delta_1)$,  and thus $\mathcal{G} \models  \Dcal(\delta_0) \cup \Dcal(\delta_1)$.
\end{example}
For normal, regular $\RPLnull^\Psi$ schemata we can define a unification schema albeit with additional structure. While  $\RPLnull^\Psi$ derivations can easily be amended with the addition of labels which aid evaluation, this cannot be easily done to s-unifiers. Instead we need to add {\em placeholder} unifiers which are replaced by s-unifiers during evaluation.   unification schemata do not easily allow for the structure necessary for aiding evaluation and composition of recursively defined unifiers. Thus, 

\begin{definition}[unification schema]
Let $\Dcal = \bigcup_{i=1}^{k} \Dcal(\delta_i)$ be a normal, regular $\RPLnull^\Psi$ schema. We define the unification schema of $\Dcal$, $\Theta(\Dcal) = \bigcup_{i=1}^{k} \Theta(\Dcal(\delta_i))$, as follows: For each $1\leq i \leq k$ let $\Dcal(\delta_i)$ be
$$((\rho_1(\delta_i,\vec{X}, \vec{n}),S_{1})\oplus \ldots \oplus(\rho_{m}(\delta_i,\vec{X},\vec{n} ),S_{m}), S_{\delta_i}(\vec{X},\vec{n})).$$
For each  $\rho_j(\delta_i,\vec{X},\vec{n})$, where $1\leq j\leq m$, we construct a {\em unification triple}  $\left(\theta_j, \mathbf{L}_j, S_j\right)$, where
\begin{itemize}
\item $\theta_j$ is  the global s-unifier of  $\rho_j(\delta_i,\vec{X},\vec{n})$,
\item $\mathbf{L}_j = \left\lbrace  (\delta',\vartheta)\ \middle\vert \ \lambda\in \mathit{LAB}(\rho_j(\delta_i,\vec{X},\vec{n})) \ \& \ \vert\rho_j(\delta_i,\vec{X},\vec{n})\vert_{\lambda} = (\delta',\vartheta)\colon S \right\rbrace,$ i.e. all labels occurring in $\rho_j(\delta_i,\vec{X},\vec{n})$,
\item $S_j$ is the same partition used in $\Dcal(\delta_i)$.
\end{itemize} 
We define  $\Theta(\Dcal(\delta_i)) = (\left(\theta_1, \mathbf{L}_1, S_1\right)\oplus \ldots \oplus \left(\theta_m, \mathbf{L}_m, S_m\right) )$, the  collection of unification triples constructed from $\Dcal(\delta_i)$.
\end{definition} 

\begin{definition}[evaluation of unification schemata]
\label{evaldef}
Let $\Theta(\Dcal)$ unification schema,  $\sigma\in \mathcal{S}$, and $\delta$ a proof symbol of $\Dcal$. Then the evaluation of $\Theta(\Dcal)$ at $\delta$ by $\sigma$, denoted by $N(\Theta(\Dcal),\delta,\sigma)$, is as follows: 
Let  $\Theta(\Dcal(\delta))$ be $(\left(\theta_1, \mathbf{L}_1, S_1\right)\oplus \ldots \oplus \left(\theta_m, \mathbf{L}_m, S_m\right) )$
and $\sigma \in S_{i}$, for $1\leq i\leq m$. Then one of the following cases must hold:
\begin{itemize}
\item If $\mathbf{L}_i = \emptyset$, then $N(\Theta(\Dcal),\delta,\sigma) = \sigma(\theta_i)\vert_{\omega}$
\item  $\mathbf{L}_i = \left\lbrace (\delta_1,\nu_1),\cdots, (\delta_l,\nu_l)\right\rbrace$, then 
$$N(\Theta(\Dcal),\delta,\sigma) = \left( \bigcup_{r=1}^{l} N(\Theta(\Dcal), \delta_r, \sigma_r)\right)  \sigma(\theta_i)\vert_{\omega},$$
where $\sigma_r= \sigma[\nu_r]\downarrow_{\omega}$. 
\end{itemize}
\end{definition}

\begin{example}
\label{TheMainExample}
Below is the complete refutation schema for the schematic formula provided in Example~\ref{1smaform}. Note that we abbreviate $X_1,\cdots,X_5$ by $\mathbf{X}$. Furthermore the definition of $\Fhat_5$ requires 5 global variables with two arguments and one global variable $Y$ with a single argument. However, in the derivations below we need $Y$ to be two place. We allow replacement of the variable $Y$ by a two place variable. This is essentially a variable renaming. Let
\begin{align*}
\Dcal(\delta_0)\colon  &((\rho_1(\delta_0,\vec{X},Y, n,m),S_{1})\oplus (\rho_2(\delta_0,\vec{X},Y, n,m),S_{2})\oplus \\
&\ (\rho_3(\delta_0,\vec{X},Y, n,m),S_{3}) \oplus(\rho_{4}(\delta_0,\vec{X},Y,n,m),S_{4}), \vdash)
\end{align*}
where 
\begin{align*}
  S_1 \equiv & \left\lbrace \sigma \middle\vert \sigma \in \mathcal{S}\ ,\ n\sigma>0 \ \&\  m\sigma>0\right\rbrace\\
  S_2 \equiv & \left\lbrace \sigma \middle\vert \sigma \in \mathcal{S}\ ,\ n\sigma=0 \ \&\  m\sigma>0\right\rbrace\\
  S_3 \equiv & \left\lbrace \sigma \middle\vert \sigma \in \mathcal{S}\ ,\ n\sigma>0 \ \&\  m\sigma=0\right\rbrace\\
 S_4 \equiv & \left\lbrace \sigma \middle\vert \sigma \in \mathcal{S}\ ,\ n\sigma=0 \ \&\  m\sigma=0\right\rbrace
\end{align*}
The refutation provided in Example~\ref{basecaseProof} is $\rho_{4}(\delta_0,\vec{X},Y,n,m)$. The other three derivations are as follows:
\begin{itemize}
\item[1)] $\rho_1(\delta_0,\vec{X},Y, n,m)$ denotes: 
\begin{small}
\begin{prooftree}
\AxiomC{$\left( \delta_1,\left\lbrace \begin{array}{cc}
 w\leftarrow n  & k\leftarrow 0  \\ r\leftarrow s(m) &
q\leftarrow 0 \end{array}\right\rbrace \right)$}
\noLine
\UnaryInfC{$ \vdash \leqpred{Y(n,0)}{0}$}
\AxiomC{$\left( \delta_5,\left\lbrace \begin{array}{cc}
 w\leftarrow n  & k\leftarrow 0  \\ r\leftarrow 0 &
q\leftarrow m \end{array}\right\rbrace \right)$}
\noLine
\UnaryInfC{$\vdash \hat{F_3}(\mathbf{X},0,m)$}
\RightLabel{$B\hat{F_3}r$}
\UnaryInfC{$ \vdash \hat{F_5}(\mathbf{X},0,m)\wedge \nleqpred{a}{0}$}
\RightLabel{$\wedge:r_2$}
\UnaryInfC{$  \vdash \nleqpred{a}{0}$}
\RightLabel{$\neg:r$}
\UnaryInfC{$  \leqpred{a}{0} \vdash$}
\RightLabel{Res$\left(\mu \right) $}
\BinaryInfC{$\vdash$}
\noLine
\end{prooftree}
\end{small}
where   $\mu = \left\lbrace Y(n,0)\leftarrow a \right\rbrace$. 
\vspace{1em}
\item[2)] $\rho_2(\delta_0,\vec{X},Y, n,m)$ denotes: 
\begin{small}
\begin{prooftree}
\AxiomC{$\left( \delta_1,\left\lbrace \begin{array}{cc}
n\leftarrow 0 & w\leftarrow 0 \\ k\leftarrow 0 & r\leftarrow s(m) \\
q\leftarrow 0 \end{array}\right\rbrace \right) $}
\noLine
\UnaryInfC{$ \vdash \leqpred{Y(0,0)}{0}$}
\AxiomC{$  \vdash   \hat{F_1}(\mathbf{X},0,m)$}
\RightLabel{$S\hat{F_1}r$}
\UnaryInfC{$  \vdash   \hat{F_2}(\mathbf{X},0,m) \wedge \hat{F_3}(\mathbf{X},0,m)$}
\RightLabel{$\wedge:r$}
\UnaryInfC{$\vdash \hat{F_3}(\mathbf{X},0,m)$}
\RightLabel{$B\hat{F_3}r$}
\UnaryInfC{$ \vdash \hat{F_5}(,\mathbf{X},Y(0),0,m)\wedge \nleqpred{a}{0}$}
\RightLabel{$\wedge:r_2$}
\UnaryInfC{$  \vdash \nleqpred{a}{0}$}
\RightLabel{$\neg:r$}
\UnaryInfC{$  \leqpred{a}{0} \vdash$}
\RightLabel{Res$\left(\mu \right) $}
\BinaryInfC{$\vdash$}
\noLine
\end{prooftree}
\end{small}
where $\mu = \left\lbrace Y(0,0)\leftarrow a \right\rbrace$. 
\vspace{1em}

\item[3)] $(\rho_3(\delta_0,\vec{X},Y, n,m),S_{3})$ where : 
\begin{small}
\begin{prooftree}
\AxiomC{$\left( \delta_6,\left\lbrace \begin{array}{cc}
m\leftarrow 0 & w\leftarrow n \\ k\leftarrow 0 & r\leftarrow s(0) \\
q\leftarrow 0 \end{array}\right\rbrace \right) $}
\noLine
\UnaryInfC{$ \vdash \leqpred{Y(n,0)}{0}$}
\AxiomC{$\left( \delta_5,\left\lbrace \begin{array}{cc}
m\leftarrow 0 & w\leftarrow n \\ k\leftarrow 0 & r\leftarrow 0 \\
q\leftarrow 0 \end{array}\right\rbrace \right) $}
\noLine
\UnaryInfC{$  \vdash  \hat{F_3}(\mathbf{X},0,0)$}
\RightLabel{$B\hat{F_3}r$}
\UnaryInfC{$ \vdash \hat{F_5}(\mathbf{X},0,0)\wedge \nleqpred{a}{0}$}
\RightLabel{$\wedge:r_2$}
\UnaryInfC{$  \vdash \nleqpred{a}{0}$}
\RightLabel{$\neg:r$}
\UnaryInfC{$  \leqpred{a}{0} \vdash$}
\RightLabel{Res$\left(\mu\right) $}
\BinaryInfC{$\vdash$}
\noLine
\end{prooftree}
\end{small}
where   $\mu = \left\lbrace Y(n,0)\leftarrow a \right\rbrace$. 
\end{itemize} 
Now let us consider 
\begin{align*}
\Dcal(\delta_1)\colon & ((\rho_1(\delta_1,\vec{X},Y, n,m,w,k,r,0),S_{5})\oplus \\ & \ (\rho_2(\delta_1,\vec{X},Y, n,m,w,k,r,0),S_{6}) , \vdash  \leqpred{\mathbf{\alpha}}{k})
\end{align*}
where 
\begin{align*}
S_5 \equiv & \left\lbrace \sigma \middle\vert \sigma \in \mathcal{S}\ ,\ w\sigma>0 \right\rbrace\\
S_6 \equiv & \left\lbrace \sigma \middle\vert \sigma \in \mathcal{S}\ ,\ w\sigma=0 \right\rbrace
\end{align*}
\begin{itemize}
\item[1)] $\rho_1(\delta_1,\vec{X},Y,n,m,w,k,r,q)$  denotes: 
\begin{small}
\begin{prooftree}
\AxiomC{$\left( \delta_3,\left\lbrace \begin{array}{cc}
 r\leftarrow p(r) & q\leftarrow 0 \end{array}\right\rbrace \right)$}
\noLine
\UnaryInfC{$  \vdash  \hat{F_4}(\mathbf{X},k,0)$}
\RightLabel{$B\hat{F_4}r$}
\UnaryInfC{$  \vdash \nleqpred{(X_2(k,0)}{s(k)}\vee \leqpred{X_2(k,0)}{k}\vee  \eqpred{X_2(k,0)}{k} $}
\RightLabel{$\vee:r$}
\UnaryInfC{$  \vdash \nleqpred{X_2(k,0)}{s(k)} ,  \leqpred{X_2(k,0)}{k}\vee  \eqpred{X_2(k,0)}{k} $}
\RightLabel{$\vee:r$}
\UnaryInfC{$  \vdash \nleqpred{X_2(k,0)}{s(k)} ,  \leqpred{X_2(k,0)}{k},  \eqpred{X_2(k,0)}{k} $}
\RightLabel{$\neg:r$}
\UnaryInfC{$   \leqpred{X_2(k,0)}{s(k)} \vdash   \leqpred{X_2(k,0)}{k},  \eqpred{X_2(k,0)}{k} $}
\noLine
\UnaryInfC{$(2)$}
\end{prooftree}
\end{small}
\begin{small}
\begin{prooftree}
\AxiomC{$\left( \delta_1,\left\lbrace \begin{array}{cc}
 w\leftarrow p(w) & k\leftarrow s(k) \\ \multicolumn{2}{c}{q\leftarrow 0} \end{array}\right\rbrace \right) $}
\noLine
\UnaryInfC{$ \vdash \leqpred{Y(p(w),s(k))}{s(k)}$}
\AxiomC{$(2)$}
\RightLabel{Res$\left(\mu_1\right) $}
\BinaryInfC{$ \vdash  \leqpred{Y(p(w),s(k))}{k}, \eqpred{Y(p(w),s(k))}{k} $}
\noLine
\UnaryInfC{$(1)$}
\end{prooftree}
\end{small}
\begin{small}
\begin{prooftree}
\AxiomC{$(1)$}
\AxiomC{$\left( \delta_2,\left\lbrace \begin{array}{cc}
 r\leftarrow p(p(r)) & q\leftarrow 0 \end{array}\right\rbrace \right)$}
\noLine
\UnaryInfC{$  \vdash   \leqpred{Y(w,k)}{k}, \hat{F_5}(\mathbf{X},k,0)$}
\RightLabel{$B\hat{F_5}r$}
\UnaryInfC{$  \vdash   \leqpred{Y(w,k)}{k}, \neqpred{\hat{S}(X_3(k),0)}{k}$}
\RightLabel{$B\hat{S}r$}
\UnaryInfC{$  \vdash   \leqpred{Y(w,k)}{k},  \neqpred{X_3(k)}{k}$}
\RightLabel{$\neg:r$}
\UnaryInfC{$  \eqpred{X_3(k)}{k} \vdash   \leqpred{Y(w,k)}{k}$}
\RightLabel{Res$\left(\mu_2\right) $}
\BinaryInfC{$\vdash  \leqpred{Y(w,k)}{k}$}
\noLine
\end{prooftree}
\end{small}
where $\mu_1 = \left\lbrace\begin{array}{c} X_2(k,0)\leftarrow Y(p(w),s(k))\end{array}\right\rbrace$ and $\mu_2 = \left\lbrace\begin{array}{c} X_3(k) \leftarrow  Y(p(w),s(k))\end{array}\right\rbrace $.
\vspace{1em}
\item[2)] $\rho_2(\delta_1,\vec{X},Y, n,m,w,k,r,q)$ denotes: 
\begin{small}
\begin{prooftree}
\AxiomC{$\left( \delta_4,\left\lbrace \begin{array}{cc}
 w\leftarrow 0 & r\leftarrow p(r) \\ \multicolumn{2}{c}{q\leftarrow 0} \end{array}\right\rbrace \right)$}
\noLine
\UnaryInfC{$  \vdash  \hat{F_2}(\mathbf{X},k,0)$}
\RightLabel{$B\hat{F_2}r$}
\UnaryInfC{$ \vdash \leqpred{X_1(k,0)}{k}\vee \eqpred{X_1(k,0)}{k} $}
\RightLabel{$\vee:r$}
\UnaryInfC{$ \vdash \leqpred{X_1(k,0)}{k}, \eqpred{X_1(k,0)}{k} $}
\RightLabel{$B\hat{S}r$}
\UnaryInfC{$ \vdash \leqpred{X_1(k,0)}{k}, \eqpred{X_1(k,0)}{k} $}
\noLine
\UnaryInfC{(1)}
\end{prooftree}
\begin{prooftree}
\AxiomC{(1)}
\AxiomC{$\left( \delta_2,\left\lbrace \begin{array}{cc}
 w\leftarrow 0 & r\leftarrow p(p(r)) \\ \multicolumn{2}{c}{q\leftarrow 0} \end{array}\right\rbrace \right)$}
\noLine
\UnaryInfC{$  \vdash  \leqpred{Y(0,k)}{k}, \hat{F_5}(\mathbf{X},k,0)$}
\RightLabel{$B\hat{F_5}r$}
\UnaryInfC{$  \vdash  \leqpred{Y(0,k)}{k},  \neqpred{\hat{S}(X_3(k),0)}{k}$}
\RightLabel{$B\hat{S}r$}
\UnaryInfC{$  \vdash  \leqpred{Y(0,k)}{k},  \neqpred{X_3(k)}{k}$}
\RightLabel{$\neg:r$}
\UnaryInfC{$  \eqpred{X_3(k)}{k} \vdash  \leqpred{Y(0,k)}{k}$}
\RightLabel{Res$\left(\mu\right) $}
\BinaryInfC{$\vdash \leqpred{Y(0,k)}{k}$}
\end{prooftree}
\end{small}
where $\mu = \left\lbrace\begin{array}{c}  X_1(k,0)\leftarrow Y(0,k)\ , \  X_3(k)\leftarrow Y(0,k) \end{array}\right\rbrace$.
\end{itemize}
Now let us consider 
\begin{align*}
\Dcal(\delta_2)\colon & ((\rho_1(\delta_2,\vec{X},Y n,m,w,k,r,q,X^{\iota}),S_{7})\oplus\\
& \ (\rho_2(\delta_2,\vec{X},Y, n,m,w,k,r,q),S_{8}) , \vdash  \leqpred{Y(w,k)}{k}, \hat{F_5}(\mathbf{X},k,q))
\end{align*}
where 
\begin{align*}
S_7 \equiv & \left\lbrace \sigma \middle\vert \sigma \in \mathcal{S}\ ,\ r\sigma>0 \right\rbrace\\ 
S_8 \equiv & \left\lbrace \sigma \middle\vert \sigma \in \mathcal{S}\ ,\ r\sigma=0 \right\rbrace
\end{align*}
\begin{itemize}
\item[1)] $\rho_1(\delta_2,\vec{X},Y, n,m,w,k,r,q)$ denotes: 
\begin{tiny}
\begin{prooftree}
\AxiomC{$\left( \delta_3,\left\lbrace \begin{array}{cc}
r\leftarrow p(r) & q\leftarrow s(q) \end{array}\right\rbrace \right)$}
\noLine
\UnaryInfC{$  \vdash  \hat{F_4}(\mathbf{X},k,s(q))$}
\RightLabel{$S\hat{F_4}r$}
\UnaryInfC{$ \vdash (\nleqpred{\hat{S}(X_2(k,s(q)),s(q))}{s(k)}\vee \leqpred{X_2(k,s(q))}{k}\vee  \eqpred{\hat{S}(X_2(k,s(q)),s(q))}{k}) \wedge \hat{F_4}(\mathbf{X},k,q)$}
\RightLabel{$\wedge:r_2$}
\UnaryInfC{$  \vdash \nleqpred{\hat{S}(X_2(k,s(q)),s(q))}{s(k)}\vee \leqpred{X_2(k,s(q))}{k}\vee  \eqpred{\hat{S}(X_2(k,s(q)),s(q))}{k} $}
\RightLabel{$\vee:r$}
\UnaryInfC{$  \vdash \nleqpred{\hat{S}(X_2(k,s(q)),s(q))}{s(k)} , \leqpred{X_2(k,s(q))}{k}\vee  \eqpred{\hat{S}(X_2(k,s(q)),s(q))}{k} $}
\RightLabel{$\vee:r$}
\UnaryInfC{$  \vdash \nleqpred{\hat{S}(X_2(k,s(q)),s(q))}{s(k)}, \leqpred{X_2(k,s(q))}{k},  \eqpred{\hat{S}(X_2(k,s(q)),s(q))}{k} $}
\RightLabel{$\neg:r$}
\UnaryInfC{$  \leqpred{\hat{S}(X_2(k,s(q)),s(q))}{s(k)} \vdash  \leqpred{X_2(k,s(q))}{k},  \eqpred{\hat{S}(X_2(k,s(q)),s(q))}{k} $}
\noLine
\UnaryInfC{$(2)$}

\end{prooftree}
\end{tiny}
\begin{small}
\begin{prooftree}
\AxiomC{$\left( \delta_1,\left\lbrace \begin{array}{cc}
w\leftarrow p(w) & k\leftarrow s(k)\\ r\leftarrow s(m) & q\leftarrow 0 \end{array}\right\rbrace \right) $}
\noLine
\UnaryInfC{$ \vdash \leqpred{Y(p(w),s(k))}{s(k)}$}
\AxiomC{$(2)$}
\RightLabel{Res$\left(\mu_1\right) $}
\BinaryInfC{$ \vdash \leqpred{Y(w,k)}{k}, \eqpred{\hat{S}(Y(w,k),s(q))}{k} $}
\noLine
\UnaryInfC{$(1)$}
\end{prooftree}
\end{small}
\begin{small}
\begin{prooftree}
\AxiomC{$(1)$}
\AxiomC{$\left( \delta_2,\left\lbrace \begin{array}{cc} r\leftarrow p(r) & q\leftarrow s(q) \end{array}\right\rbrace \right) $}
\noLine
\UnaryInfC{$  \vdash  \leqpred{Y_1(w,k)}{k}, \hat{F_5}(\mathbf{X},k,s(q))$}
\RightLabel{$S\hat{F_5}r$}
\UnaryInfC{$  \vdash  \leqpred{Y_1(w,k)}{k},  \neqpred{X_3(k)}{k}\vee \hat{F_5}(\mathbf{X},k,q)$}
\RightLabel{$\vee:r$}
\UnaryInfC{$  \vdash  \leqpred{Y_1(w,k)}{k},  \neqpred{X_3(k)}{k}, \hat{F_5}(\mathbf{X},k,q)$}
\RightLabel{$\neg:r$}
\UnaryInfC{$  \eqpred{X_3(k)}{k} \vdash  \leqpred{Y_1(w,k)}{k}, \hat{F_5}(\mathbf{X},k,q)$}
\RightLabel{Res$\left( \mu_2\right) $}
\BinaryInfC{$\vdash  \leqpred{Y(w,k)}{k}, \hat{F_5}(\mathbf{X},k,q)$}
\end{prooftree}
\end{small}
where $\mu_1 = \left\lbrace\begin{array}{c} X_2(k,s(q))\leftarrow Y(w,k),\ Y(p(w),s(k)) \leftarrow \hat{S}(Y(w,k),s(q)) \end{array}\right\rbrace$, and $\mu_2 = \left\lbrace\begin{array}{c} Y_1(w,k)\leftarrow Y(w,k),\ X_3(k) \leftarrow \hat{S}(Y(w,k),s(q)) \end{array}\right\rbrace$.\vspace{1em}  
\item[2)] $\rho_2(\delta_2,\vec{X},Y, n,m,w,k,r,q)$ denotes: 
\begin{tiny}

\begin{prooftree}
\AxiomC{$\left( \delta_3,\left\lbrace \begin{array}{cc}
r\leftarrow 0 & q\leftarrow s(q) \end{array}\right\rbrace \right) $}
\noLine
\UnaryInfC{$  \vdash  \hat{F_4}(\mathbf{X},k,s(q))$}
\RightLabel{$S\hat{F_4}r$}
\UnaryInfC{$ \vdash ( \nleqpred{\hat{S}(X_2(k,s(q)),s(q))}{s(k)}\vee \leqpred{X_2(k,s(q))}{k}\vee  \eqpred{\hat{S}(X_2(k,s(q)),s(q))}{k}) \wedge \hat{F_4}(\mathbf{X},k,q)$}
\RightLabel{$\wedge:r_2$}
\UnaryInfC{$  \vdash \nleqpred{\hat{S}(X_2(k,s(q)),s(q))}{s(k)}\vee \leqpred{X_2(k,s(q))}{k} \vee  \eqpred{\hat{S}(X_2(k,s(q)),s(q))}{k} $}
\RightLabel{$\vee:r$}
\UnaryInfC{$  \vdash \nleqpred{\hat{S}(X_2(k,s(q)),s(q))}{s(k)}, \leqpred{X_2(k,s(q))}{k}\vee  \eqpred{\hat{S}(X_2(k,s(q)),s(q))}{k} $}
\RightLabel{$\vee:r$}
\UnaryInfC{$  \vdash \nleqpred{\hat{S}(X_2(k,s(q)),s(q))}{s(k)}, \leqpred{X_2(k,s(q))}{k},  \eqpred{\hat{S}(X_2(k,s(q)),s(q))}{k} $}
\RightLabel{$\neg:r$}
\UnaryInfC{$  \leqpred{\hat{S}(X_2(k,s(q)),s(q))}{s(k)} \vdash  \leqpred{X_2(k,s(q))}{k},  \eqpred{\hat{S}(X_2(k,s(q)),s(q))}{k} $}
\noLine
\UnaryInfC{$(2)$}
 \end{prooftree}
 \end{tiny}
\begin{small}
\begin{prooftree}
\AxiomC{$\left( \delta_1,\left\lbrace \begin{array}{cc}
w\leftarrow p(w) & k\leftarrow s(k)\\ r\leftarrow s(m) & q\leftarrow 0 \end{array}\right\rbrace \right)$}
\noLine
\UnaryInfC{$ \vdash \leqpred{Y(p(w),s(k))}{s(k)}$}
\AxiomC{$(2)$}
\RightLabel{Res$\left(\mu_1\right) $}
\BinaryInfC{$ \vdash \leqpred{Y(w,k)}{k}, \eqpred{\hat{S}(Y(w,k),s(q))}{k} $}
\noLine
\UnaryInfC{$(1)$}
\end{prooftree}
\end{small}
\begin{small}
\begin{prooftree}
\AxiomC{$(1)$}
\AxiomC{$\left( \delta_5,\left\lbrace \begin{array}{cc}
 r\leftarrow 0 & q\leftarrow s(q) \end{array}\right\rbrace \right) $}
\noLine
\UnaryInfC{$  \vdash   \hat{F_3}(\mathbf{X},k,s(q))$}
\RightLabel{$S\hat{F_3}r$}
\UnaryInfC{$  \vdash   \hat{F_5}(\mathbf{X},k,s(q)) \wedge   \hat{F_4}(\mathbf{X},k,s(q))  \wedge \hat{F_3}(\mathbf{X},p(k),s(q))$}
\RightLabel{$\wedge:r$}
\UnaryInfC{$  \vdash   \hat{F_5}(\mathbf{X},k,s(q)) \wedge   \hat{F_4}(\mathbf{X},p(k),s(q)) $}
\RightLabel{$\wedge:r$}
\UnaryInfC{$  \vdash   \hat{F_5}(\mathbf{X},k,s(q))$}
\RightLabel{$S\hat{F_5}r$}
\UnaryInfC{$  \vdash    \neqpred{X_3(k)}{k}\vee \hat{F_5}(\mathbf{X},k,q)$}
\RightLabel{$\vee:r$}
\UnaryInfC{$  \vdash    \neqpred{X_3(k)}{k}, \hat{F_5}(\mathbf{X},k,q)$}
\RightLabel{$\neg:r$}
\UnaryInfC{$  \eqpred{X_3(k)}{k} \vdash   \hat{F_5}(\mathbf{X},k,q)$}
\RightLabel{Res$\left(\mu_2 \right) $}
\BinaryInfC{$\vdash  \leqpred{Y(w,k)}{k}, \hat{F_5}(\mathbf{X},k,q)$}
\end{prooftree}
\end{small}
\end{itemize}
where $\mu_1 = \left\lbrace\begin{array}{c} X_2(k,s(q))\leftarrow Y(w,k),\ Y(p(w),s(k))\leftarrow \hat{S}(Y(w,k),s(q))\end{array}\right\rbrace  $, and $\mu_2 = \left\lbrace\begin{array}{c} X_3(k)\leftarrow \hat{S}(Y(w,k),s(q))\end{array}\right\rbrace  $
\vspace{1em}
Now let us consider  
\begin{align*}
\Dcal(\delta_3)\colon & ((\rho_1(\delta_3,\vec{X},Y,n,m,w,k,r,q),S_{7})\oplus \\
& \ (\rho_2(\delta_3,\vec{X},Y, n,m,w,k,r,q),S_{8}) ,  \vdash  \hat{F_4}(\mathbf{X},k,q))
\end{align*}

\begin{itemize}
\item[1)] $\rho_1(\delta_3,\vec{X},Y, n,m,w,k,r,q)$ denotes:
\begin{small}
\begin{prooftree}

\AxiomC{$\left( \delta_3,\left\lbrace \begin{array}{cc}
 r\leftarrow p(r) & q\leftarrow s(q) \end{array}\right\rbrace \right)$}
\noLine
\UnaryInfC{$  \vdash   \hat{F_4}(\mathbf{X},k,s(q))$}
\RightLabel{$S\hat{F_4}r$}
\UnaryInfC{$  \vdash M \wedge \hat{F_4}(\mathbf{X},k,q) $}
\RightLabel{$\wedge:r$}
\UnaryInfC{$  \vdash  \hat{F_4}(\mathbf{X},k,q)$}
\end{prooftree}
\end{small}
\noindent where $M$ denotes $$(\neg  \leqpred{\hat{S}(X_2(k,s(q)),s(q))}{s(k)}\vee \leqpred{X_2(k,s(q))}{k} \vee$$  $$\eqpred{\hat{S}(X_2(k,s(q)),s(q))}{k} ).$$
\item[2)] $\rho_2(\delta_3,\vec{X},Y,n,m,w,k,r,q)$ denotes
\begin{small}
\begin{prooftree}

\AxiomC{$\left( \delta_5,\left\lbrace \begin{array}{cc}
w\leftarrow p(w) & k\leftarrow s(k) \\ \multicolumn{2}{c}{r\leftarrow 0} \end{array}\right\rbrace \right) $}
\noLine
\UnaryInfC{$  \vdash  \hat{F_3}(\mathbf{X},s(k),q)$}
\RightLabel{$S\hat{F_3}r$}
\UnaryInfC{$  \vdash   \hat{F_5}(\mathbf{X},s(k),q) \wedge \hat{F_4}(\mathbf{X},k,q) \wedge \hat{F_3}(\mathbf{X},k,q)$}
\RightLabel{$\wedge:r$}
\UnaryInfC{$  \vdash   \hat{F_5}(\mathbf{X},s(k),q) \wedge \hat{F_4}(\mathbf{X},k,q)$}
\RightLabel{$\wedge:r$}
\UnaryInfC{$  \vdash  \hat{F_4}(\mathbf{X},k,q)$}
\end{prooftree}
\end{small}
\end{itemize}

Now let us consider 
\begin{align*}
\Dcal(\delta_4)\colon & ((\rho_1(\delta_4,\vec{X},Y, n,m,w,k,r,q),S_{7})\oplus\\ & \ (\rho_2(\delta_4,\vec{X},Y, n,m,w,k,r,q),S_{8}) ,  \vdash  \hat{F_2}(\mathbf{X},k,q))
\end{align*}
\begin{itemize}
\item[1)] $\rho_1(\delta_4,\vec{X}, n,m,w,k,r,q)$ denotes
\begin{small}
\begin{prooftree}
\AxiomC{$\left( \delta_4,\left\lbrace \begin{array}{cc}
r\leftarrow p(r) & q\leftarrow s(q) \end{array}\right\rbrace \right)$}
\noLine
\UnaryInfC{$  \vdash   \hat{F_2}(\mathbf{X},k,s(q))$}
\RightLabel{$S\hat{F_2}r$}
\UnaryInfC{$  \vdash ( \eqpred{\hat{S}(X_1(k,s(q)),s(q))}{k} \ \vee \  \leqpred{X_1(k,s(q))}{k} ) \wedge \hat{F_2}(\mathbf{X},k,q) $}
\RightLabel{$\wedge:r$}
\UnaryInfC{$  \vdash  \hat{F_2}(\mathbf{X},k,q)$}
\end{prooftree}
\end{small}
\item[2)] $\rho_2(\delta_4,\vec{X},Y, n,m,w,k,r,q)$ denotes 
\begin{small}
\begin{prooftree}
\AxiomC{$  \vdash   \hat{F_1}(\mathbf{X},k,q)$}
\RightLabel{$S\hat{F_1}r$}
\UnaryInfC{$  \vdash   \hat{F_2}(\mathbf{X},k,q) \wedge \hat{F_3}(\mathbf{X},k,q)$}
\RightLabel{$\wedge:r$}
\UnaryInfC{$  \vdash  \hat{F_2}(\mathbf{X},k,q)$}
\end{prooftree}
\end{small}
\end{itemize}

Now let us consider 
\begin{align*}
\Dcal(\delta_5)\colon & ((\rho_1(\delta_5,\vec{X}, Y n,m,w,k,r,q),S_{5})\oplus \\ &\ (\rho_2(\delta_5,\vec{X}, Y n,m,w,k,r,q),S_{6}) ,  \vdash  \hat{F_3}(\mathbf{X},k,q))
\end{align*}

\begin{itemize}
\item[1)] $\rho_1(\delta_5,\vec{X},Y,n,m,w,k,r,q)$ denotes
\begin{small}
\begin{prooftree}
\AxiomC{$\left( \delta_5,\left\lbrace \begin{array}{cc}
w\leftarrow p(w) & k\leftarrow s(k) \end{array}\right\rbrace \right)$}
\noLine
\UnaryInfC{$  \vdash  \hat{F_3}(\mathbf{X},s(k),q)$}
\RightLabel{$S\hat{F_3}r$}
\UnaryInfC{$  \vdash \hat{F_5}(\mathbf{X},s(k),q) \wedge \hat{F_4}(\mathbf{X},k,q) \wedge \hat{F_3}(\mathbf{X},k,q) $}
\RightLabel{$\wedge:r$}
\UnaryInfC{$  \vdash  \hat{F_3}(\mathbf{X},k,q)$}
\end{prooftree}
\end{small}
\item[2)] $\rho_2(\delta_5,\vec{X},Y,n,m,w,k,r,q)$ denotes
\begin{small}
\begin{prooftree}
\AxiomC{$  \vdash   \hat{F_1}(\mathbf{X},k,q)$}
\RightLabel{$S\hat{F_1}r$}
\UnaryInfC{$  \vdash   \hat{F_2}(\mathbf{X},k,q) \wedge \hat{F_3}(\mathbf{X},k,q) $}
\RightLabel{$\wedge:r$}
\UnaryInfC{$  \vdash  \hat{F_3}(\mathbf{X},k,q)$}
\end{prooftree}
\end{small}
\end{itemize}
Finally, let us consider 
\begin{align*}
\Dcal(\delta_6)\colon & ((\rho_1(\delta_6,\vec{X},Y, n,m,w,k,r,q),S_{5})\oplus\\ & (\rho_2(\delta_6,\vec{X},Y, n,m,w,k,r,q),S_{6}) ,  \vdash  \leqpred{Y(w,k)}{k})
\end{align*} 
\begin{itemize}
\item[1)] $\rho_1(\delta_6,\vec{X},Y, n,m,w,k,r,q)$ denotes
\begin{small}
\begin{prooftree}
\AxiomC{$\left( \delta_3,\left\lbrace \begin{array}{cc}
r\leftarrow p(r) & q\leftarrow 0 \end{array}\right\rbrace \right)$}
\noLine
\UnaryInfC{$  \vdash  \hat{F_4}(\mathbf{X},k,0)$}
\RightLabel{$B\hat{F_4}r$}
\UnaryInfC{$  \vdash \nleqpred{X_2(k,0}{s(k)}\vee \leqpred{X_2(k,0)}{k}\vee  \eqpred{X_2(k,0)}{k} $}
\RightLabel{$\vee:r$}
\UnaryInfC{$  \vdash \nleqpred{X_2(k,0)}{s(k)} ,  \leqpred{X_2(k,0)}{k}\vee  \eqpred{X_2(k,0)}{k} $}
\RightLabel{$\vee:r$}
\UnaryInfC{$  \vdash \nleqpred{X_2(k,0)}{s(k)} ,  \leqpred{X_2(k,0)}{k},  \eqpred{X_2(k,0)}{k} $}
\RightLabel{$\neg:r$}
\UnaryInfC{$   \leqpred{X_2(k,0)}{s(k)} \vdash   \leqpred{X_2(k,0)}{k},  \eqpred{X_2(k,0)}{k} $}
\noLine
\UnaryInfC{(2)}
\end{prooftree}
\end{small}
\begin{small}
\begin{prooftree}
\AxiomC{$\left( \delta_1,\left\lbrace \begin{array}{cc}
w\leftarrow p(w) & k\leftarrow s(k)\\ \multicolumn{2}{c}{q\leftarrow 0} \end{array}\right\rbrace \right)$}
\noLine
\UnaryInfC{$ \vdash \leqpred{Y(p(w),s(k))}{s(k)}$}
\AxiomC{$(2)$}
\RightLabel{Res$\left(\mu_1\right) $}
\BinaryInfC{$ \vdash  \leqpred{Y(p(w),s(k))}{k}, \eqpred{Y(p(w),s(k))}{k} $}
\noLine
\UnaryInfC{$(1)$}
\end{prooftree}
\end{small}
\begin{small}
\begin{prooftree}
\AxiomC{$(1)$}
\AxiomC{$\left( \delta_5,\left\lbrace \begin{array}{cc}
r\leftarrow p(r) & q\leftarrow 0 \end{array}\right\rbrace \right)$}
\noLine
\UnaryInfC{$\vdash \hat{F_3}(\mathbf{X},0,m)$}
\RightLabel{$B\hat{F_3}r$}
\UnaryInfC{$ \vdash \hat{F_5}(\mathbf{X},k,0)\wedge \nleqpred{a}{0}$}
\RightLabel{$\wedge:r_2$}
\UnaryInfC{$  \vdash   \hat{F_5}(\mathbf{X},k,0)$}
\RightLabel{$B\hat{F_5}r$}
\UnaryInfC{$  \vdash   \neqpred{\hat{S}(X_3(k),0)}{k}$}
\RightLabel{$B\hat{S}r$}
\UnaryInfC{$  \vdash   \neqpred{X_3(k)}{k}$}
\RightLabel{$\neg:r$}
\UnaryInfC{$  \eqpred{X_3(k)}{k} \vdash  $}
\RightLabel{Res$\left(\mu_2  \right) $}
\BinaryInfC{$\vdash  \leqpred{Y(w,k)}{k}$}
\end{prooftree}
\end{small}
where $\mu_1 = \left\lbrace\begin{array}{c} X_2(k,0)\leftarrow Y(p(w),s(k))\end{array}\right\rbrace  $ and\\ $\mu_2 = \left\lbrace\begin{array}{c}  Y(p(w),s(k))\leftarrow Y(w,k)\ , \ X_3(k)\leftarrow Y(w,k)\end{array}\right\rbrace  $. 
\item[2)] $\rho_2(\delta_6,\vec{X},Y n,m,w,k,r,q)$ denotes:
\begin{small}
\begin{prooftree}

\AxiomC{$\left( \delta_4,\left\lbrace \begin{array}{cc}
r\leftarrow p(r) & q\leftarrow 0 \end{array}\right\rbrace \right)$}
\noLine
\UnaryInfC{$  \vdash  \hat{F_2}(\mathbf{X},k,0)$}
\RightLabel{$B\hat{F_2}r$}
\UnaryInfC{$ \vdash \leqpred{X_1(k,0)}{k}\vee \eqpred{X_1(k,0)}{k} $}
\RightLabel{$\vee:r$}
\UnaryInfC{$ \vdash \leqpred{X_1(k,0)}{k}, \eqpred{X_1(k,0)}{k} $}
\RightLabel{$B\hat{S}r$}
\UnaryInfC{$ \vdash \leqpred{X_1(k,0)}{k}, \eqpred{X_1(k,0)}{k} $}
\AxiomC{$\left( \delta_5,\left\lbrace \begin{array}{cc}
r\leftarrow p(r) & q\leftarrow 0 \end{array}\right\rbrace \right)$}
\noLine
\UnaryInfC{$\vdash \hat{F_3}(\mathbf{X},k,0)$}
\RightLabel{$B\hat{F_3}r$}
\UnaryInfC{$ \vdash \hat{F_5}(\mathbf{X},k,0)\wedge \nleqpred{a}{0}$}
\RightLabel{$\wedge:r_2$}
\UnaryInfC{$  \vdash   \hat{F_5}(\mathbf{X},k,0)$}
\RightLabel{$B\hat{F_5}r$}
\UnaryInfC{$  \vdash   \neqpred{\hat{S}(X_3(k),0)}{k}$}
\RightLabel{$B\hat{S}r$}
\UnaryInfC{$  \vdash   \neqpred{X_3(k)}{k}$}
\RightLabel{$\neg:r$}
\UnaryInfC{$  \eqpred{X_3(k)}{k} \vdash  $}
\RightLabel{Res$\left(\mu \right) $}
\BinaryInfC{$\vdash \leqpred{Y(w,k)}{k}$}
\end{prooftree}
\end{small}
where $\mu = \left\lbrace\begin{array}{c}  X_1(k,0)\leftarrow Y(w,k)\ , \ X_3(k)\leftarrow Y(w,k) \end{array}\right\rbrace $.
\end{itemize}
\end{example}
In Example~\ref{TheMainExample} we constructed an $\RPLnull^\Psi$ refutation. In the following example we construct a call graph defining the call semantics of the constructed derivation. Note that we will use the partitionings defined in Example~\ref{TheMainExample} in Example~\ref{examplecallgraph}. 
\begin{example}
\label{examplecallgraph}
Let us consider the call graph $\mathcal{G} = \lbrace C_1, C_2,C_3 ,C_4,C_5 ,C_6,C_7\rbrace$, where
\scalebox{.8}{\begin{minipage}{.45\textwidth}
\small
\begin{align*}
C_1 =& \left\lbrace \begin{array}{c} 
\left( \left\lbrace  \begin{array}{c} 
    (\delta_0, n,m) ,\\ 
    (\delta_1,  n,m,n,0,s(m),0),\\ 
    (\delta_5, n,m,n,0,0,m) \end{array} 
\right\rbrace , S_1\right)   \\ 
\left( \left\lbrace  \begin{array}{c} 
    (\delta_0, n,m), \\
    (\delta_1,  0,m,0,0,s(m),0) \end{array}
    \right\rbrace , S_2\right)   \\ 
\left( \left\lbrace  \begin{array}{c} 
    (\delta_0, n,m), \\
    (\delta_5,  n,0,n,0,0,0),\\ 
    (\delta_6, n,0,n,0,s(0),0) \end{array} 
\right\rbrace, S_3 \right)  \\
\left( \left\lbrace  \begin{array}{c} 
    (\delta_0, n,m)\end{array} 
\right\rbrace, S_4 \right)  
\end{array}\right\rbrace
\end{align*}  
\end{minipage}}
\hspace{2em}
\scalebox{.8}{\begin{minipage}{.5\textwidth}
\small
\begin{align*}
C_2=& \left\lbrace \begin{array}{c} 
\left( \left\lbrace  \begin{array}{c} 
    (\delta_1, n,m,w,k,r,q) ,\\ 
    (\delta_1,  n,m,p(w),s(k),r,0),\\ 
    (\delta_2, n,m,w,k,p(p(r)),0),\\
    (\delta_3, n,m,w,k,p(r),0) \end{array} 
\right\rbrace , S_5\right)   \\ 
\left( \left\lbrace  \begin{array}{c} 
    (\delta_1, n,m,w,k,r,q) ,\\ 
    (\delta_4,  n,m,0,k,p(r),0),\\
     (\delta_2, n,m,0,k,p(p(r)),0)\end{array}
    \right\rbrace , S_6\right)  
\end{array}\right\rbrace
\end{align*} 
\end{minipage}}
 
\scalebox{.8}{\begin{minipage}{.45\textwidth}
\small
\begin{align*}
C_3=& \left\lbrace \begin{array}{c} 
\left( \left\lbrace  \begin{array}{c} 
    (\delta_2, n,m,w,k,r,q) ,\\ 
    (\delta_2,  n,m,w,k,p(r),s(q)),\\ 
    (\delta_1, n,m,p(w),s(k),s(m),0),\\
    (\delta_3, n,m,w,k,p(r),s(q)) \end{array} 
\right\rbrace , S_7\right)   \\ 
\left( \left\lbrace  \begin{array}{c} 
    (\delta_2, n,m,w,k,r,q) ,\\ 
    (\delta_1, n,m,p(w),s(k),s(m),0) ,\\ 
    (\delta_3, n,m,w,k,0,s(q)), \\
     (\delta_5, n,m,w,k,0,s(q))\end{array}
    \right\rbrace , S_8\right)  
\end{array}\right\rbrace
\end{align*}  
\end{minipage}}
\hspace{.5em}
\scalebox{.8}{\begin{minipage}{.45\textwidth}
\small
\begin{align*}
C_4=& \left\lbrace \begin{array}{c} 
\left( \left\lbrace  \begin{array}{c} 
    (\delta_3, n,m,w,k,r,q) ,\\ 
    (\delta_3, n,m,w,k,p(r),s(q)) ,\\ 
	\end{array}
    \right\rbrace , S_{7}\right)\\
\left( \left\lbrace  \begin{array}{c} 
    (\delta_3, n,m,w,k,r,q) ,\\ 
    (\delta_5, n,m,p(w),s(k),0,q)
     \end{array} 
\right\rbrace , S_{8}\right)     
\end{array}\right\rbrace
\end{align*}  
\end{minipage}}

\scalebox{.8}{\begin{minipage}{.45\textwidth}
\small
\begin{align*}
C_5=& \left\lbrace \begin{array}{c} 
\left( \left\lbrace  \begin{array}{c} 
    (\delta_4, n,m,w,k,r,q) ,\\ 
    (\delta_4, n,m,w,k,p(r),s(q)) ,\\ 
	\end{array}
    \right\rbrace , S_{7}\right)\\
\left( \left\lbrace  \begin{array}{c} 
    (\delta_4, n,m,w,k,r,q)
     \end{array} 
\right\rbrace , S_{8}\right)     
\end{array}\right\rbrace
\end{align*}  
\end{minipage}}
\hspace{1em}
\scalebox{.8}{\begin{minipage}{.45\textwidth}
\small
\begin{align*}
C_6=& \left\lbrace \begin{array}{c} 
\left( \left\lbrace  \begin{array}{c} 
    (\delta_5, n,m,w,k,r,q) ,\\ 
    (\delta_5, n,m,p(w),s(k),r,q) ,\\ 
	\end{array}
    \right\rbrace , S_{5}\right)\\
\left( \left\lbrace  \begin{array}{c} 
    (\delta_5, n,m,w,k,r,q)
     \end{array} 
\right\rbrace , S_{6}\right)     
\end{array}\right\rbrace
\end{align*}  
\end{minipage}}

\small
\begin{align*}
C_7=& \left\lbrace \begin{array}{c} 
\left( \left\lbrace  \begin{array}{c} 
    (\delta_6, n,m,w,k,r,q) ,\\ 
    (\delta_6,  n,m,p(w),s(k),r,0),\\ 
    (\delta_5, n,m,w,k,p(r),0),\\
    (\delta_3, n,m,w,k,p(r),0) \end{array} 
\right\rbrace , S_5\right)   \\ 
\left( \left\lbrace  \begin{array}{c} 
    (\delta_6, n,m,w,k,r,q) ,\\ 
    (\delta_4,  n,m,0,k,p(r),0),\\
     (\delta_5, n,m,0,k,p(r),0)\end{array}
    \right\rbrace , S_6\right)  
\end{array}\right\rbrace
\end{align*}  
\normalsize
Note that , 
$C_1 \models \Dcal(\delta_0)$,  $C_2 \models \Dcal(\delta_1)$,   $C_3 \models \Dcal(\delta_2)$, and  $C_4 \models \Dcal(\delta_3)$, $C_5 \models \Dcal(\delta_4)$, $C_6 \models \Dcal(\delta_5)$, and $C_7 \models \Dcal(\delta_6)$ and thus $\mathcal{G} \models  \Dcal$ where $\Dcal = \cup_{i} \Dcal(\delta_i)$.

\begin{figure}
\begin{center}
\scalebox{.75}{\begin{tikzpicture}[->,>=stealth',shorten >=1pt,auto,node distance=3cm,
                    semithick]
  \tikzstyle{every state}=[fill=white,draw=black,text=black]

  \node[state]         (A) {$\delta_{0}$};
  \node[state]         (G) [above right of=A]{$\delta_{6}$};
  \node[state]         (E) [above left of=G]{$\delta_{4}$};
  \node[state]         (B) [left of=E]{$\delta_{1}$};
  \node[state]         (C) [above of=B]{$\delta_{2}$};
  \node[state]         (D) [ right of=E, right of=C]{$\delta_{3}$};
  \node[state]         (F) [above right of=D]{$\delta_{5}$};

  \path   (A) edge [bend left=-40] node[ below right ]{$S_1,S_2$} (F);
  \path   (A) edge [bend right=-20] node[ above left ]{$S_3$} (G);
  \path   (A) edge  node[ below left ]{$S_1,S_2$} (B);
  \path   (B) edge [loop left] node[ above left ]{$S_5$} (B);
  \path   (B) edge [bend right=-30] node[ left  ]{$S_5,S_6$} (C);
  \path   (B) edge [bend right=-10] node[ above  ]{$S_5$} (D);
  \path   (B) edge [bend right=-10] node[ above  ]{$S_6$} (E);
  \path   (C) edge [loop left] node[ above left ]{$S_7$} (C);
  \path   (C) edge [bend right=-30] node[ right  ]{$S_7,S_8$} (B);
  \path   (C) edge [bend right=-10] node[ above  ]{$S_7,S_8$} (D);
  \path   (C) edge [bend right=-30] node[ above  ]{$S_8$} (F);
  \path   (D) edge [loop above] node[ above left ]{$S_7$} (D);
  \path   (D) edge [bend left=-30] node[ below right  ]{$S_8$} (F);
  \path   (E) edge [loop above] node[ above left ]{$S_7$} (E);
  \path   (F) edge [loop above] node[ above left ]{$S_5$} (F);
  \path   (G) edge [loop above] node[ above left ]{$S_5$} (G);
  \path   (G) edge [bend right=-10] node[ below  ]{$S_6$} (E);
  \path   (G) edge [bend left=-40] node[ right  ]{$S_5$} (D);

\end{tikzpicture}}
\end{center}
\caption{Call graph from Example~\ref{examplecallgraph} as an automaton.}
\end{figure}
\end{example}

\section{Implementation and Experiments with GradedStrictMonotoneSequenceSchema}

In this section we will discuss our implementations and some experiments using the example $\lstinline[columns=fixed]{GradedStrictMonotoneSequenceSchema}$.
The underlying system of our implementations is Gapt\footnote[1]{ http://www.logic.at/gapt/ } (General Architecture for Proof Theory) \cite{ebner2016system}, which is a framework for implementing proof transformations and provides numerous algorithms for the analysis, transformation, and construction of proofs in various formal calculi. Gapt is implemented in Scala and licensed under the GNU General Public License. The software is available under \url{https://logic.at/gapt}. Gapt initially started as an implementation of the {\CERES} method. The system which provided the foundational architecture for the current version of Gapt was developed for the analysis of F\"{u}rstenberg's proof of the infinitude of primes~\cite{DBLP:journals/tcs/BaazHLRS08}. Gapt also provides an interface for importing proofs from most major theorem provers and exporting proofs and other structures in TPTP format.  

For information on how to install and use the system Gapt we refer to the Gapt User Manual\footnote[2]{http://www.logic.at/gapt/downloads/gapt-user-manual.pdf}. Gapt opens in a Scala interactive shell (scala$>$) which can be used to run all the commands provided by the system. 

In the examples directory of Gapt one can find several example proof schemata, as the proof schema discussed in Example \ref{TheMainExample}. In Gapt we refer to this proof schema as $\lstinline[columns=fixed]{GradedStrictMonotoneSequenceSchema.scala}$. From the Scala interactive shell one can load examples by importing the objects. For instance, the $\lstinline[columns=fixed]{GradedStrictMonotoneSequenceSchema}$ can be imported with the following command:
\begin{verbatim}
scala> import examples.GradedStrictMonotoneSequenceSchema
\end{verbatim}
For many applications we will also need to import the context of the proof schema:
\begin{verbatim}
scala> import examples.GradedStrictMonotoneSequenceSchema.ctx
\end{verbatim}
To display the base case proof and the step case proof of the proof schema $\lstinline[columns=fixed]{GradedStrictMonotoneSequenceSchema}$, we have to access $\lstinline[columns=fixed]{omegaBc}$ and $\lstinline[columns=fixed]{omegaSc}$ and output them in prooftool, which is a viewer for proofs and other elements also implemented in Gapt \cite{dunchev2013prooftool}. The sequence of commands
\begin{verbatim}
scala> val base = GradedStrictMonotoneSequenceSchema.omegaBc
scala> prooftool( base )
\end{verbatim}
stores the base case proof $\lstinline[columns=fixed]{omegaBc}$ in $\lstinline[columns=fixed]{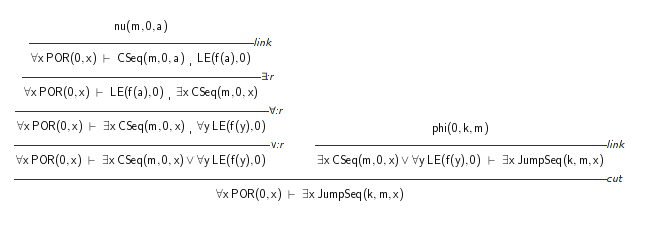}$ and outputs the derivation in prooftool, see Figure~\ref{fig:base}.
\begin{figure}
\begin{center}
\fbox{\includegraphics[scale=.5]{base}}
\end{center}
\caption{Prooftool output of the base case proof.}
\label{fig:base}
\end{figure}
Similarly, with the sequence of commands below we can output the step case proof in prooftool, see Figure \ref{fig:stepcase}.
\begin{verbatim}
scala> val step = GradedStrictMonotoneSequenceSchema.omegaSc
scala> prooftool( step )
\end{verbatim}
\begin{figure}
\begin{center}
\fbox{\includegraphics[scale=.5]{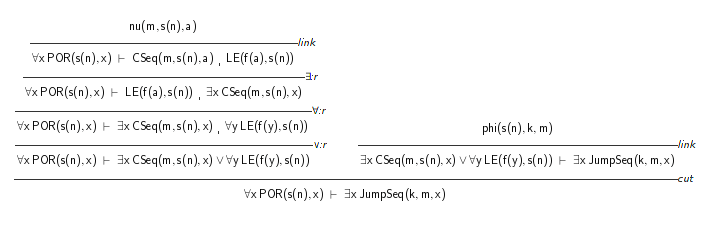}}
\end{center}
\caption{Prooftool output of the step case proof.}
\label{fig:stepcase}
\end{figure}

Note that to obtain the proof schema from Example \ref{TheMainExample} we have to instantiate the parameter $k$ with $0$. In fact, we can instantiate the proof schema $\lstinline[columns=fixed]{GradedStrictMonotoneSequenceSchema}$ with an arbitrary value for parameters $n$ and $m$ (having $k$ fixed). The sequence of commands below instantiates the proof schema with the value $n = 2$ and $m = 1$.
\begin{verbatim}
scala> val proof1 = instantiateProof.Instantiate( le"omega (s (s 0))
                    0 (s 0) " )
scala> prooftool( proof1 )
\end{verbatim}
Note that for this command it is important to import the context $\lstinline[columns=fixed]{ctx}$ of $\lstinline[columns=fixed]{GradedStrictMonotoneSequenceSchema}$ as well! We will not show the output of prooftool here, as the proof is too large to be displayed. Instead, we will output the so-called {\em sunburst view} of the proof. The sunburst view is accessible via prooftool and was introduced to obtain a means of displaying very large proofs. It can be interpreted as a structure which can be unrolled to a proof in tree-like structure. Indeed, the point in the middle of the sunburst corresponds to the end-sequent of a proof in tree-like structure. The different colors represent different types of inferences. Cut rules are displayed in green, structural rules and axioms are displayed in gray, the orange parts correspond to unary logical rules, the yellow ones to binary logical rules, strong quantifier rules are displayed in red and weak quantifier rules in blue. The sunburst of $\lstinline[columns=fixed]{proof1}$ is illustrated in Figure \ref{fig:instance21Sunburst}.
\begin{figure}
\begin{center}
\fbox{\includegraphics[scale=.4]{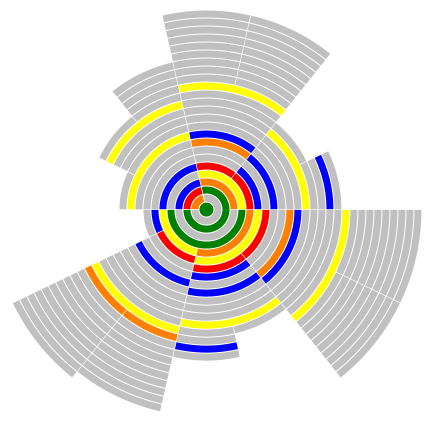}}
\end{center}
\caption{Sunburst output of GradedStrictMonotoneSequenceSchema for parameter $n = 2$ and $m = 1$ ($k=0$).}
\label{fig:instance21Sunburst}
\end{figure}
The sunburst view of the instantiated proof schema for $n=4$ and $m=3$ is displayed in Figure \ref{fig:instance43Sunburst}, we obtain it with the sequence of commands below.
\begin{verbatim}
scala> val proof2 = instantiateProof.Instantiate( le"omega 
                    (s (s (s (s 0)))) 0 (s (s (s 0))) " )
scala> prooftool( proof2 )
\end{verbatim}
\begin{figure}
\begin{center}
\fbox{\includegraphics[scale=.4]{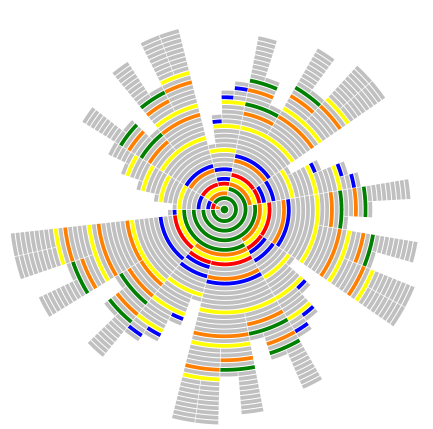}}
\end{center}
\caption{Sunburst output of GradedStrictMonotoneSequenceSchema for parameter $n = 4$ and $m = 3$.}
\label{fig:instance43Sunburst}
\end{figure}

Instantiated proof schemata are essentially {\LK}-proofs extended with an equational theory and thus, any of the proof analytic tools and methods of Gapt can be applied. The command
\begin{verbatim}
scala> val cs = CharacteristicClauseSet(StructCreators.extract(
                proof2))
\end{verbatim}
constructs an instance of the clause representation of the characteristic NNF formula (the running example in this work). It can be also displayed in prooftool. It is also possible to construct a resolution refutation from $\lstinline[columns=fixed]{cs}$ using the command
\begin{verbatim}
scala> val res = SPASS.extendToManySortedViaErasure.getResolution
                 Proof( cs )
scala> prooftool( res )
\end{verbatim}
The output is illustrated in Figure \ref{fig:instance43resolution}. 
\begin{figure}
\begin{center}
\fbox{\includegraphics[scale=.17]{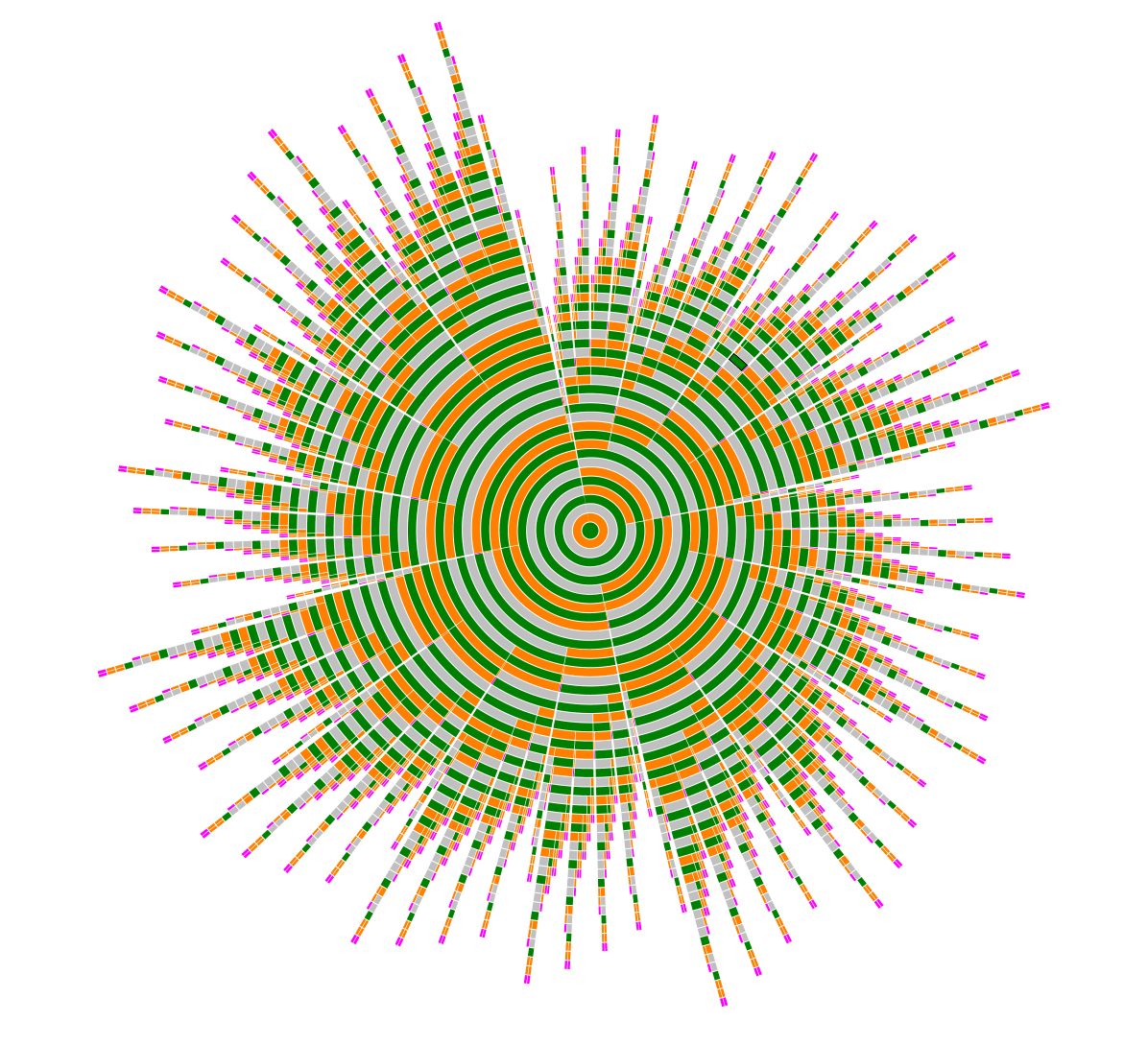}}
\end{center}
\caption{The resolution refutation of the characteristic formula schema of the GradedStrictMonotoneSequenceSchema for parameter $n = 4$ and $m = 3$.}
\label{fig:instance43resolution}
\end{figure}

Moreover, we can extract an expansion proof from the resolution refutation and output it in prooftool with the sequence of commands below.
\begin{verbatim}
scala> val ep = ResolutionToExpansionProof( res.get )
scala> prooftool( ep ) 
\end{verbatim}
The output in prooftool is illustrated in Figure \ref{fig:instance43expansion}.
\begin{figure}
\begin{center}
\fbox{\includegraphics[scale=.4]{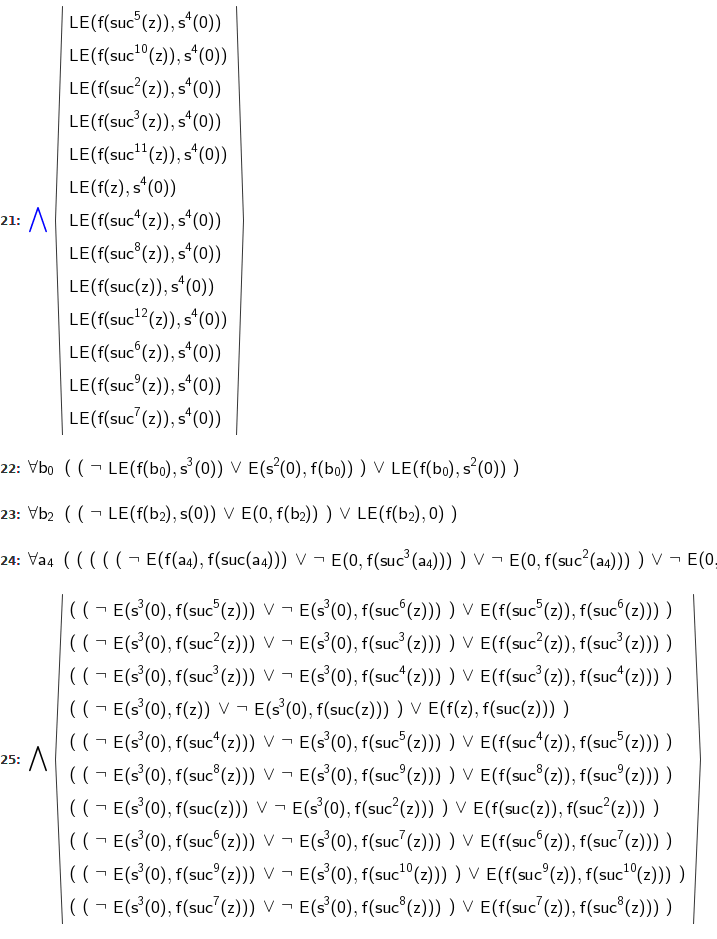}}
\end{center}
\caption{Expansion proof of the resolution refutation of GradedStrictMonotoneSequenceSchema for parameter $n = 4$ and $m = 3$.}
\label{fig:instance43expansion}
\end{figure}
There are also tools specifically designed for uninstantiated proof schemata. One of the most interesting benefits of the interactive features of Gapt is the representation of the cut-structure of a proof schema as an inductive definition. The sequence of commands
\begin{verbatim}
scala> val SCS = SchematicStruct( "omega" ).getOrElse( Map() )
\end{verbatim}
stores the $\lstinline[columns=fixed]{SchematicStruct}$ in $\lstinline[columns=fixed]{SCS}$. Given its type it is not possible to display it in prooftool. With the code
\begin{verbatim}
  val SCS: Map[CLS, ( Struct, Set[Var] )] = SchematicStruct(
           "phi" ).getOrElse( Map() )
  
  val CFPRP = CharFormPRP( SCS )
 
  CharFormPRP.PR( CFPRN )
\end{verbatim}
the schematic characteristic formula is stored in $\lstinline[columns=fixed]{CFPRP}$ and the primitive recursive definitions of the formula is constructed with $\lstinline[columns=fixed]{CharFormPRP.PR( CFPRP )}$.

\section{Future Work and Applications}
The initial intention of this research was to develop a schematic resolution calculus and thus allowing interactive proof analysis using CERES-like methods~\cite{MBaaz2008a} in the presence of induction. More precisely, the resolution calculus introduced in this work will provide the basis for a schematic CERES method more expressive than the methods proposed in \cite{CDunchev2014,ALeitsch2017}. As already indicated, the key to proof analysis using CERES lies in the fact that it provides a bridge between automated deduction and proof theory. In the schematic setting a bridge has been provided~\cite{CDunchev2014,ALeitsch2017}, and the formalism presented here provides a setting to study automated theorem proving for schematic first-order logic. 

Our recursive semantics (Section~\ref{Scaffolding}) separates local resolution derivations from the global ``shape'' of the refutation, an essential characteristic of induction. While constructing a recursive resolution refutation for a recursive unsatisfiable formula is incomplete, it is not clear whether the problem remains incomplete when the call graph is fixed. In other words, we may instead ask: {\em ``Is providing a recursive resolution refutation, with respect to a given call graph, for recursive formulas complete?''} The answer to this question is not so clear in that it depends on the resolution calculus itself as well as the associated unification problem. Both concepts are developed in this paper. 

Concerning the resolution calculus presented in Section \ref{sec:resolution}, both the Andrew's calculus-like sequent rules and the introduction of global variables provide the necessary extensions to resolution accommodating the recursive nature of our formula. The unification problem discussed in Section~\ref{sec.schematicproofs} has not been addressed so far, and furthermore it may have interesting decidable fragments impacting schematic proof analysis as well as other fields. 

Overall, the avenues we leave for future investigations provide ample opportunities for studying schematic theorem proving.

% ---- Bibliography ----
\bibliographystyle{splncs04}
\bibliography{references}

\end{document}